\documentclass[acmsmall,nonacm]{acmart}

\RequirePackage{./sty}
\fulltrue

\title{Pure Borrow: Linear Haskell Meets Rust-Style Borrowing}
\titlenote{This is an extended version of the PLDI 2026 paper \cite{MatsushitaIshii26-Pure-Borrow} with an appendix on the formal calculus and metatheory.}

\setcopyright{cc}
\setcctype{by}
\acmJournal{PACMPL}

\author{Yusuke Matsushita}
\orcid{0000-0002-5208-3106}
\affiliation{%
  \institution{Kyoto University}
  \city{Kyoto}
  \country{Japan}
}
\email{ymat@fos.kuis.kyoto-u.ac.jp}

\author{Hiromi Ishii}
\orcid{0000-0002-7752-1782}
\affiliation{%
  \institution{JIJ Inc.}
  \city{Tokyo}
  \country{Japan}
}
\email{konn.jinro@gmail.com}

\setcitestyle{nosort}

\ccsdesc[500]{Theory of computation~Type theory}
\ccsdesc[300]{Software and its engineering~Functional languages}

\keywords{Linear Haskell, Rust-style borrowing, purity, safety, parallelism}

\begin{document}

\begin{abstract}
A promising approach to unifying functional and imperative programming paradigms is to localize mutation using linear or affine types.
Haskell, a purely functional language, was recently extended with linear types by Bernardy et al., in the name of Linear Haskell.
However, it remained unknown whether such a pure language could safely support non-local \emph{borrowing} in the style of Rust, where each borrower can be freely split and dropped without direct communication of ownership back to the lender.

We answer this question affirmatively with \emph{Pure Borrow}, a novel framework that realizes Rust-style borrowing in Linear Haskell with purity.
Notably, it features parallel state mutation with affine mutable references inside pure computation, unlike the IO and ST monads and existing Linear Haskell APIs.
It also enjoys purity, lazy evaluation, first-class polymorphism and leak freedom, unlike Rust.
We implement Pure Borrow simply as a library in Linear Haskell and demonstrate its power with a case study in parallel computing.
We formalize the core of Pure Borrow and build a metatheory that works toward establishing safety, leak freedom and confluence, with a new, history-based model of borrowing.
\end{abstract}

\maketitle

\section{Introduction}
\label{sect:intro}

Software is the art of embodying human ideals on finite physical machines.
This tension naturally leads to two major programming paradigms.
One is \emph{functional programming}, more on the side of humans.
It offers high-level abstractions and safety for smooth human reasoning.
The other is \emph{imperative programming}, more geared toward machines.
It offers low-level stateful operations, enabling great performance.
Unifying these two programming paradigms has been a long-standing, central topic in programming language research.

\newcommand*{\Fstar}{F\({}^\star\)}
Functional programming favors \emph{purity}, or the property that the result of the computation is always the same regardless of the timing and context.
Purity allows one to safely forget the physical machine state and reason about behaviors just as in mathematics.
It enables safe lazy evaluation, safe concurrency, and deep program optimizations such as fusion.
Moreover, a number of languages---e.g., Haskell \cite{HudakPWBFFGHHJKNPP92-Haskell}, \Fstar \cite{SwamyHKRDFBFSKZ16-Fstar}, and Lean \cite{MouraKADR15-Lean}---rigorously ensure that \emph{every computation is pure}, unless marked otherwise.
Especially in the presence of concurrency, purity is really helpful because it eliminates tricky, subtle concurrency bugs that occur only with low probability depending on thread scheduling.

Purity is thus desired, but naïvely it causes significant runtime overhead to keep all data persistently.
This calls for unifying pure functional programming with performative, stateful imperative programming.
A long-known baseline approach to this is the ST monad~\cite{LaunchburyJ94-ST}.
The monad \hsinline{ST^s a} allows efficient stateful computation, but its result can still be taken out into pure computation, unlike the IO monad \cite{PeytonJonesW93-IO}.

\begin{table}
  \caption{
    High-level comparison of our Pure Borrow with existing approaches.
    Yes \yes{} / No \no.
    \emph{Mutate}: Supports destructive mutation.
    \emph{Pure}: The result can be safely taken out in pure computation.
    \emph{Parallel}: Supports safe parallel execution.
    \emph{Aff.\,mut.}: Supports affine mutable references (\yesno{} for droppable but copyable mutable references).
    \emph{Leak-free}: Guarantees the absence of resource leaks.
  }
  \label{table:high-level-compare}
  \begin{center}
  \setlength{\tabcolsep}{3pt}
  \begin{tabular}{r@{\hskip5pt}|@{\hskip5pt}ccccc}
    & \emph{Mutate} & \emph{Pure} & \emph{Parallel} & \emph{Aff.\,mut.} & \emph{Leak-free} \\ \hline
    Pure computation only & \no & \yes & \yes & - & - \\
    IO monad \cite{PeytonJonesW93-IO} & \yes & \no & \yes & \yesno & \no \\
    ST monad \cite{LaunchburyJ94-ST} & \yes & \yes & \no & \yesno & \no \\
    Rust \cite{MatsakisK14-Rust} & \yes & \no & \yes & \yes & \no \\
    Pure Linear Haskell \cite{BernardyBNJS18-Linear-Haskell} & \yes & \yes & \yes & \no & \yes \\
    RIO monad \cite{BernardyBNJS18-Linear-Haskell} & \yes & \no & \yes & \yesno & \yes \\
    \textbf{Our Pure Borrow: BO monad} & \yes & \yes & \yes & \yes & \yes
  \end{tabular}
  \end{center}
\end{table}

Another promising approach to this goal is to localize mutation using \emph{linear} or \emph{affine types} \cite{Wadler90-linear-type, TurnerWM95-once-upon, WansbroughJ99-once-upon-poly}, requiring resources to be used exactly once (linear) or up to once (affine), inspired by \citet{Girard87-linear-logic}'s linear logic.
Linear or affine types can localize the effect of state mutation, especially for safety.
Also, linearity can prevent resource leaks.
Haskell, a purely functional language, was recently extended with linear types by \citet{BernardyBNJS18-Linear-Haskell}, in the name of Linear Haskell.
It achieves safe state mutation inside pure computation, while supporting concurrency and ensuring leak freedom unlike the ST monad.
It also introduced a leak-free variant of the IO monad, called the RIO monad.

However, it remained unknown whether such a pure language as Linear Haskell could safely support \emph{borrowing} in the sophisticated style of Rust~\cite{MatsakisK14-Rust}, a highly successful imperative language with affine types.
While Rust-style borrowing originates from the long line of work on region types \cite{TofteT94-region-type, FahndrichD02-adoption, GrossmanMJHWC02-Cyclone-region, FluetMA06-region, HallerO10-borrow},
it features a distinct form of \emph{non-locality}:
each borrower can be freely split and dropped (i.e., affine), and its ownership is non-locally returned to the lender without direct communication.
This flexibility offers an advantage over the existing Linear Haskell APIs, which require direct threading of ownership.

We answer the question above affirmatively with this work, \emph{Pure Borrow}.
Our contributions are summarized as follows:
\begin{itemize}
\item
  We develop \emph{Pure Borrow}, a novel framework that realizes Rust-style borrowing in Linear Haskell with purity. (\cref{sect:overview}).
  \Cref{table:high-level-compare} shows a high-level comparison with existing approaches.
  Notably, Pure Borrow features parallel state mutation with affine mutable references inside pure computation, unlike the IO and ST monads and existing Linear Haskell APIs.
  It also enjoys purity, lazy evaluation, first-class polymorphism and leak freedom, unlike Rust.
\item
  We implement Pure Borrow simply as a library in Linear Haskell and demonstrate its power by a case study on parallel quicksort (\cref{sect:casestudy}).
  Notably, our implementation compiles with the current Haskell compiler GHC 9.10+, without any modification to the compiler.
\item
  We formalize the core of Pure Borrow and build a metatheory that works toward establishing safety, leak freedom and confluence, with a new, history-based model of borrowing (\cref{sect:metatheory}).
\end{itemize}
We also explain the background (\cref{sect:background}), compare Pure Borrow with existing approaches (\cref{sect:compare}), and discuss related and future work (\cref{sect:related}).
Our Haskell implementation of Pure Borrow and our benchmark suite are archived at \cite{MatsushitaIshii26-Pure-Borrow-artifact}, and the latest version is maintained at \url{https://github.com/SoftwareFoundationGroupAtKyotoU/pure-borrow}.
\iffull\else
The extended version of this paper with an appendix on our formal calculus and metatheory is available as \cite{MatsushitaIshii26-Pure-Borrow-arXiv}.
\fi

\section{Background}
\label{sect:background}

Before diving deeply into our work, Pure Borrow, we briefly review prior approaches to local state mutation inside pure computation in Haskell, namely the ST monad and Linear Haskell.

\subsection{ST Monad}
\label{sect:background-st}

The classical approach to achieve local mutation in Haskell is the \hsinline{ST} monad \cite{LaunchburyJ94-ST}, as mentioned in \cref{sect:intro}.
\vspace{-.7em}
\begin{figure}[H]
  \raggedright
  \begin{minipage}{.45\textwidth}
\begin{hscode}
data ST^s a
instance Monad ST^s
runST :: (∀s. ST^s a) -> a

\end{hscode}\vspace{-.3em}
\begin{hscode}
data Vector_ST^s a
newVector_ST :: [a] -> ST^s (Vector_ST^s a)
elems_ST    @\base@:: Vector_ST^s a -> ST^s [a]
writeAt_ST  @\fit@:: Int -> a -> Vector_ST^s a -> ST^s ()
readAt_ST   @\fit@:: Int -> Vector_ST^s a -> ST^s a
\end{hscode}
  \end{minipage}\hspace{1.0em}
  \begin{minipage}{.5125\textwidth}
\begin{hscode}
example :: (Int, [Int])
example = runST do
    vec <- newVector_ST [0, 1, 2]
    _modifyAt_ST 0 (+ 3) vec;  _modifyAt_ST 2 (+ 5) vec
    _modifyAt_ST 0 (* 4) vec;  n <- readAt_ST 0 vec
    _pure (n, elems_ST vec)
\end{hscode}\vspace{-.3em}
\begin{hscode}
_modifyAt_ST i f vec = do
    a <- readAt_ST i vec;  writeAt_ST i (f a) vec
\end{hscode}
  \end{minipage}
  \caption{ST monad APIs and an example usage.}
  \label{fig:st-monad}
\end{figure}

\Cref{fig:st-monad} summarizes APIs and an example usage of the ST monad.
The monad \hsinline{ST^s a} is parameterized with a \emph{region} parameter \hsinline{s}.
In the ST monad, one can freely use mutable state such as a vector \hsinline{Vector_ST^s a}\footnote{
  The term `vector' is commonly used in Haskell for a slice to an array.
  It amounts to the slice type \rsinline{[T]} in Rust.
}, bound to some region \hsinline{s}.
In other words, a region \hsinline{s} represents memory locations that are allocated and kept during \hsinline{ST^s}.
Most notable is the function \hsinline{runST}, which can strip off the ST monad into a pure computation.
Its key trick is to use \emph{rank-2 polymorphism} over the region variable \hsinline{s} (technically working as a \emph{skolem}).
This aptly encapsulates a fresh region \hsinline{s} for mutable state inside \hsinline{ST^s} and prohibits it from leaking outside.

Although the ST monad is widely used in Haskell, it has a major limitation: it does not support parallelism.
The ST monad allows shared mutable state like \hsinline{Vector_ST^s a}.
Parallelism would allow multiple threads to mutate such state.
The result of computation could nondeterministically depend on thread scheduling, making \hsinline{runST}, the embedding to pure computation, unsound.

\subsection{Linear Haskell Before Pure Borrow}
\label{sect:background-linear}

A recent groundbreaking take on purity and mutation is \emph{Linear Haskell}~\cite{BernardyBNJS18-Linear-Haskell}, which retrofits a linear type system~\cite{Wadler90-linear-type} onto Haskell.
Notably, it is implemented as a practical, stable extension called \hsinline{LinearTypes} in the Glasgow Haskell Compiler (GHC).

Linear Haskell simply adds the \emph{linear} arrow type \hsinline{a -o b}, along with the original unrestricted arrow type \hsinline{a -> b}.
Roughly speaking, the linear arrow \hsinline{a -o b} enforces that the argument \hsinline{a} is used exactly once in \hsinline{b}.\footnote{
  Due to lazy evaluation, this is actually more subtle.
  \citet{BernardyBNJS18-Linear-Haskell} describe the condition more precisely as follows:
  whenever the output \hsinline{b} is consumed exactly once, the input \hsinline{a} is consumed exactly once.
}
This linearity is the key to safely introducing mutable state.

\begin{figure}
  \raggedright
  \begin{minipage}{.625\textwidth}
\begin{hscode}
data Ur a where Ur :: a -> Ur a
class Linearly;  linearly :: (Linearly =o Ur a) -o Ur a
class Movable a where move :: a -o Ur a
\end{hscode}\vspace{-.3em}
\begin{hscode}
data Vector a
newVector @\base@:: Linearly =o [a] -o Vector a
freeVector @\fit@:: Vector a -o [a]
modifyAt_  @\fit@:: Int -> (a -o a) -o Vector a -o Vector a
readAt_ :: Movable a => Int -> Vector a -o (Ur a, Vector a)
\end{hscode}
  \end{minipage}\hspace{.5em}
  \begin{minipage}{.325\textwidth}
\begin{hscode}
example :: Ur (Int, [Int])
example = linearly _do
    vec <- newVector [0, 1, 2]
    vec <- modifyAt_ 0 (+ 3) vec
    vec <- modifyAt_ 2 (+ 5) vec
    vec <- modifyAt_ 0 (* 4) vec @\label{line:modify-lh}@
    (Ur n, vec) <- readAt_ 0 vec @\label{line:read-lh}@
    move (n, freeVector vec)
\end{hscode}
  \end{minipage}
\begin{hscode}
class Consumable a where consume :: a -o ()
class LinearOnly a;    withLinearly :: LinearOnly a => a -o (Linearly =o a -o r) -o r
par :: a -o b -o (a, b)
\end{hscode}
  \caption{Core Linear Haskell APIs and an example usage before Pure Borrow.}
  \label{fig:linear-haskell}
\end{figure}

\Cref{fig:linear-haskell} shows core APIs and an example usage of Linear Haskell.
We feature here the \emph{linear} mutable vector type \hsinline{Vector a}.
Most notable is the \hsinline{modifyAt_} operator.
A call \hsinline{modifyAt_ i f vec} performs an \emph{in-place mutation} of the vector \hsinline{vec}, applying \hsinline{f} to the \hsinline{i}-th element, and outputs the updated vector.
Notably, unlike the traditional purely functional style, the old value of the vector can be safely lost.
This is because there is no other alias to the original \hsinline{vec}, thanks to the \emph{linearity} of \hsinline{Vector a}.
Technically, this is guaranteed by the \emph{linear} constraint \hsinline{Linearly =o} in the type of the \hsinline{newVector} operator creating a new vector, as we explain later.

\paragraph{Pure do-notation}

Throughout the paper, we use \hsinline{_do}-notation for pure computations, analogous to the usual \hsinline{do}-notation for monads.
Roughly speaking, the monadic binding \hsinline{... <- ...} in \hsinline{_do} works like the \hsinline{let}-binding \hsinline{let ... = ... in}, but the former is often a better fit in Linear Haskell, because it supports \emph{shadowing} and adopts \emph{strict} pattern matching, unlike the latter.\footnote{
  In Linear Haskell, shadowing is useful because it is common to thread through the same resource linearly, and strict pattern matching is often required to satisfy linearity.
  The \hsinline{let}-binding in Haskell disallows shadowing to allow recursion such as \hsinline{let ones = 1 :: ones}, just like \hsinline{let rec} in ML, and adopts lazy pattern matching.
}
Practically, we can use this kind of custom \hsinline{do}-notation by the \hsinline{QualifiedDo} extension.

\paragraph{Opting out linearity}

We introduce a special data type \hsinline{Ur a} (`Ur' stands for `Unrestricted').
It means that its content \hsinline{a} can be used any number of times (i.e., be freely shared and dropped) even under linear binding, analogous to the \(!\)-modality in linear logic.
The type class \hsinline{Movable a} abstracts data types that can be moved to an unrestricted context, i.e., put under \hsinline{Ur}.
Some primitive types such as \hsinline{()}, \hsinline{Bool} and \hsinline{Int} are \hsinline{Movable}.
Also, containers such as the list \hsinline{[a]} and the pair \hsinline{(a, b)} are \hsinline{Movable} if the component types \hsinline{a}, \hsinline{b} are \hsinline{Movable}.
The type class \hsinline{Consumable a} abstracts data types that can be consumed (but not necessarily be shared) under linear binding.

\paragraph{Linearly}

In this paper, we use a special type class called \hsinline{Linearly}, proposed by \citet{SpiwackKBWE22-linear-constraint}.
It is introduced as a \emph{linear constraint} \hsinline{Linearly =o}, which is a linear version of a usual type class constraint such as \hsinline{Ord a =>}.
Essentially, \hsinline{=o} is just a variant of \hsinline{-o} where the argument is managed implicitly (just like \hsinline{=>} is a variant of \hsinline{->} with an implicit argument).

The type class \hsinline{Linearly} serves as a \emph{witness of linearity}, or a guarantee that the current computation is inside a \emph{linear} thunk introduced by a special operator \hsinline{linearly}.
Functions that create new mutable state, such as \hsinline{newVector} for \hsinline{Vector a}, typically take \hsinline{Linearly}, because the output mutable state captures \hsinline{Linearly} and thus is used linearly.
At a high level, the operator \hsinline{linearly} is analogous to the ST monad's \hsinline{runST}:
it embeds computation with mutable state into a pure context.
It also introduces \hsinline{Linearly}, somewhat like how \hsinline{runST} introduces \hsinline{s}.

To see how \hsinline{Linearly} serves for safety, first imagine a fake API \hsinline{newVector :: [a] -o Vector a} without \hsinline{Linearly} constraint.
We can weaken the linear arrow of \hsinline{newVector} to a usual arrow to get \hsinline{newVector :: [a] -> Vector a}.\footnote{
  Recall that the linear arrow \hsinline{-o} is a promise by the \emph{callee} to use the argument once, not a request to the \emph{caller}.
}
Combined with \hsinline{freeVector :: Vector a -> [a]}, it easily causes a dangerous \emph{double free}, violating memory safety.
The linearity witness \hsinline{Linearly} solves this.
The only way to (newly) introduce \hsinline{Linearly} is by calling \hsinline{linearly :: (Linearly =o Ur a) -o Ur a}.
It requests a \emph{linear} function \hsinline{Linearly =o Ur a} that linearly takes \hsinline{Linearly}.
By letting \hsinline{newVector} take \hsinline{Linearly =o}, the resulting \hsinline{Vector a} captures a linear resource \hsinline{Linearly} and thus will not be shared unrestrictedly.
See \cite{SpiwackKBWE22-linear-constraint, SpiwackKBWE26-linear-constraint-arXiv} for more details.

For utility, we also have the \hsinline{LinearOnly} type class, which abstracts over data types that can be introduced only inside linear contexts, such as \hsinline{Vector a}.
In the presence of such data types, we can safely get \hsinline{Linearly}, using the \hsinline{withLinearly} operator.

The type class \hsinline{Linearly} is treated very specially as a linear constraint \hsinline{=o}: it can be freely finitely duplicated or discarded (but not unrestrictedly shared), automatically by the type system.\footnote{
  As of now, linear constraints and \hsinline{Linearly} have not yet landed in GHC.
  Still, we can already get the same functionality by introducing \hsinline{Linearly} as a linear \emph{type} (using \hsinline{Linearly -o} instead of \hsinline{Linearly =o}), manually handling instances of \hsinline{Linearly}.
  Our Haskell implementation of Pure Borrow does this, and thus it compiles with GHC 9.10+.
  Our metatheory (\cref{sect:metatheory}) also follows this style for simplicity.
  In this paper, we use linear constraints and \hsinline{Linearly} in (informal) Haskell code just as a useful automation just for the brevity of presentation.
}

\paragraph{Parallelism}

Pure Linear Haskell supports parallel execution, like pure non-linear Haskell.
For example, \hsinline{par a b} evaluates \hsinline{a} and \hsinline{b} in parallel and combines their outputs.
The point is that the linear arrow \hsinline{-o} of \hsinline{par} ensures that the mutable resources owned by \hsinline{a} and \hsinline{b} are disjoint.
This is unlike the \hsinline{ST} monad, which does not support parallelism lacking such knowledge of disjointness.

\paragraph{On the example}

The \hsinline{example} in \cref{fig:linear-haskell} corresponds to that of \cref{fig:st-monad}.
As expected, it type-checks and evaluates to \hsinline{Ur (12, [12, 1, 7])}.
Still, threading of \hsinline{vec} is quite in the way.
Although the order of updates to the 0th element matters and they cannot be swapped, updates to the 0th element and the 2nd element can be swapped and should be able to be \emph{parallelized}.
Reads can also be swapped.
Threading of the whole vector \hsinline{vec} interrupts such natural swapping.
This motivates our work, Pure Borrow, and we come back to this point later in \cref{sect:overview-taste}.

\section{Overview of Pure Borrow}
\label{sect:overview}

We now provide an overview of our Pure Borrow framework.
We first give a taste of it (\cref{sect:overview-taste}), showing how it can parallelize the previous example \cref{fig:linear-haskell}, and then present its details (\cref{sect:overview-details}).

\subsection{Taste of Pure Borrow}
\label{sect:overview-taste}

\paragraph{Rust-style borrowing}

Pure Borrow features Rust-style \emph{borrowing} \cite{FahndrichD02-adoption, GrossmanMJHWC02-Cyclone-region, FluetMA06-region, HallerO10-borrow, MatsakisK14-Rust} safely and purely achieved in Linear Haskell.
But what makes Rust-style borrowing so special?

\begin{figure}
  \centering
  \pgfdeclarelayer{background}
  \pgfdeclarelayer{foreground}
  \pgfsetlayers{background,main,foreground}
  \begin{tikzpicture}[
    node distance=2.5cm]
    \begin{pgfonlayer}{foreground}
      \node[resource, circle, inner sep=7pt] (res-original) {};
      \node[draw=none, left=10pt of res-original] (padding-original) {};
      \node[resource, circle, inner sep=7pt, below right=1.5cm and 1.25cm of res-original] (res-borrow) {};
      \matrix[matrix of nodes, right=5pt of res-borrow, outer sep=0pt] (nodes-middle){
        |(res-middle-subbor1)[resource, triangle, rotate=90, inner sep=2pt]| \\
        |(res-middle-subbor2)[resource, triangle, rotate=-90, inner sep=2.25pt]| \\
      };
      \matrix (nodes-reclaimed) [matrix of nodes, right=7.7cm of res-original]{
        |[resource, star, inner sep=3pt] (res-subbor1-final)| &[4pt] \\[1pt]
        |[resource, diamond, inner sep=3.25pt] (res-subbor2-final)| \\
      };
      \node[draw=none, right=10pt of nodes-reclaimed] (padding-reclaimed) {};
    \end{pgfonlayer}

    \begin{pgfonlayer}{main}
      \node[resource group, minimum width=22pt, minimum height=32pt, fit={(nodes-middle)}] (group-initial) {};
      \node[resource group, minimum width=22pt, minimum height=32pt, fit={(res-subbor1-final) (res-subbor2-final)}] (group-final) {};
    \end{pgfonlayer}

    \begin{pgfonlayer}{foreground}
      \node[above right=1.5mm and 7.5mm of group-initial.east,
        label={[yshift=-2.2mm]above:{\hsinline{Mut^α U_1}}}]
        (subbor1) {\includegraphics[width=1cm]{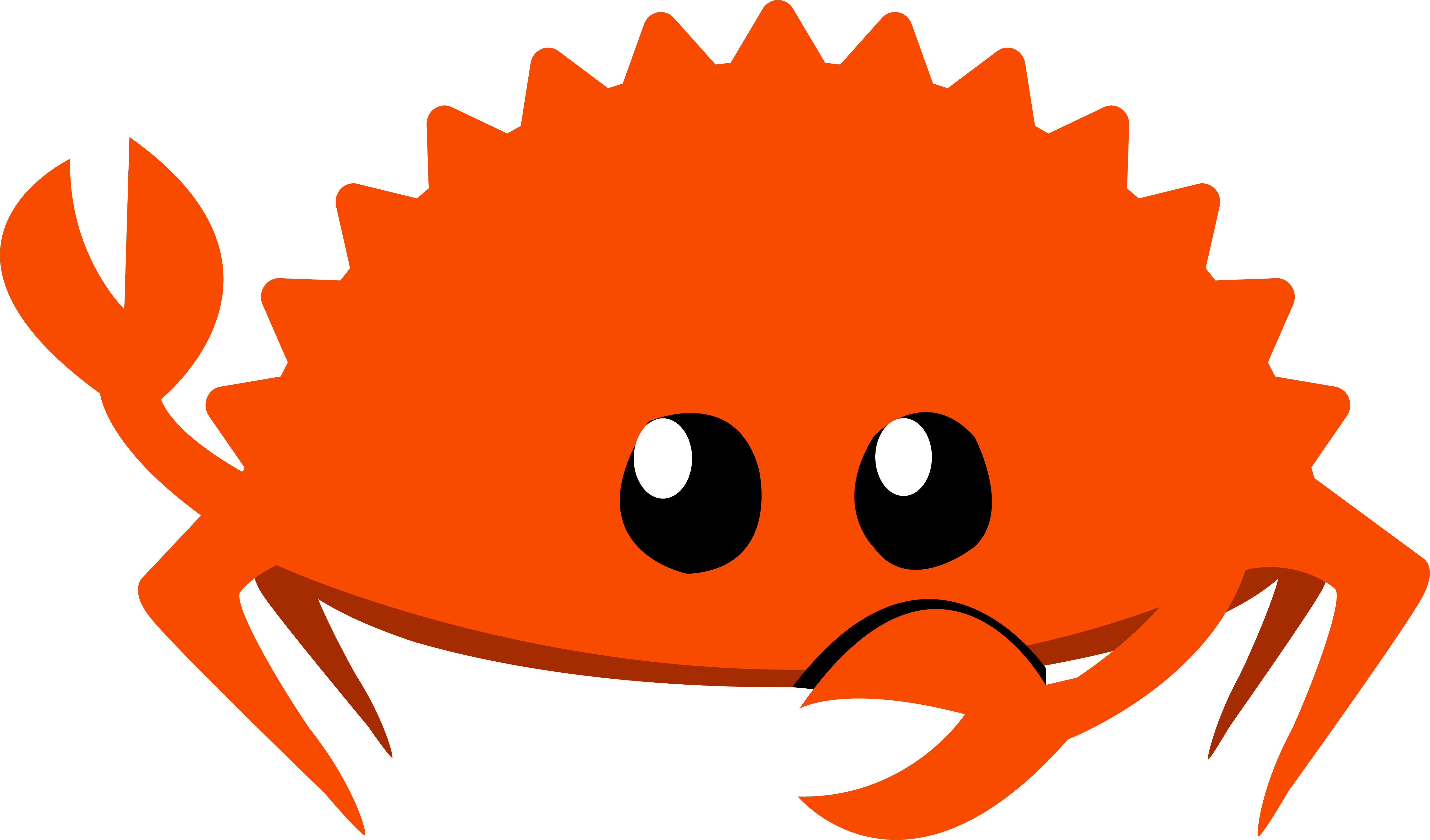}};
      \matrix[matrix of nodes, column sep=1.5mm, right=2mm of subbor1] {
        |(res-subbor1-start) [resource, triangle, rotate=90, inner sep=2pt]| &[.3mm]
        |(res-subbor1-mid) [resource, circle, inner sep=4pt]| &
        |(res-subbor1-end) [resource, star, inner sep=3pt]| \\
      };

      \node[below right=0mm and 7.5mm of group-initial.east,
        label={[yshift=1.7mm]below:{\hsinline{Mut^α U_2}}}]
        (subbor2) {\includegraphics[width=1cm]{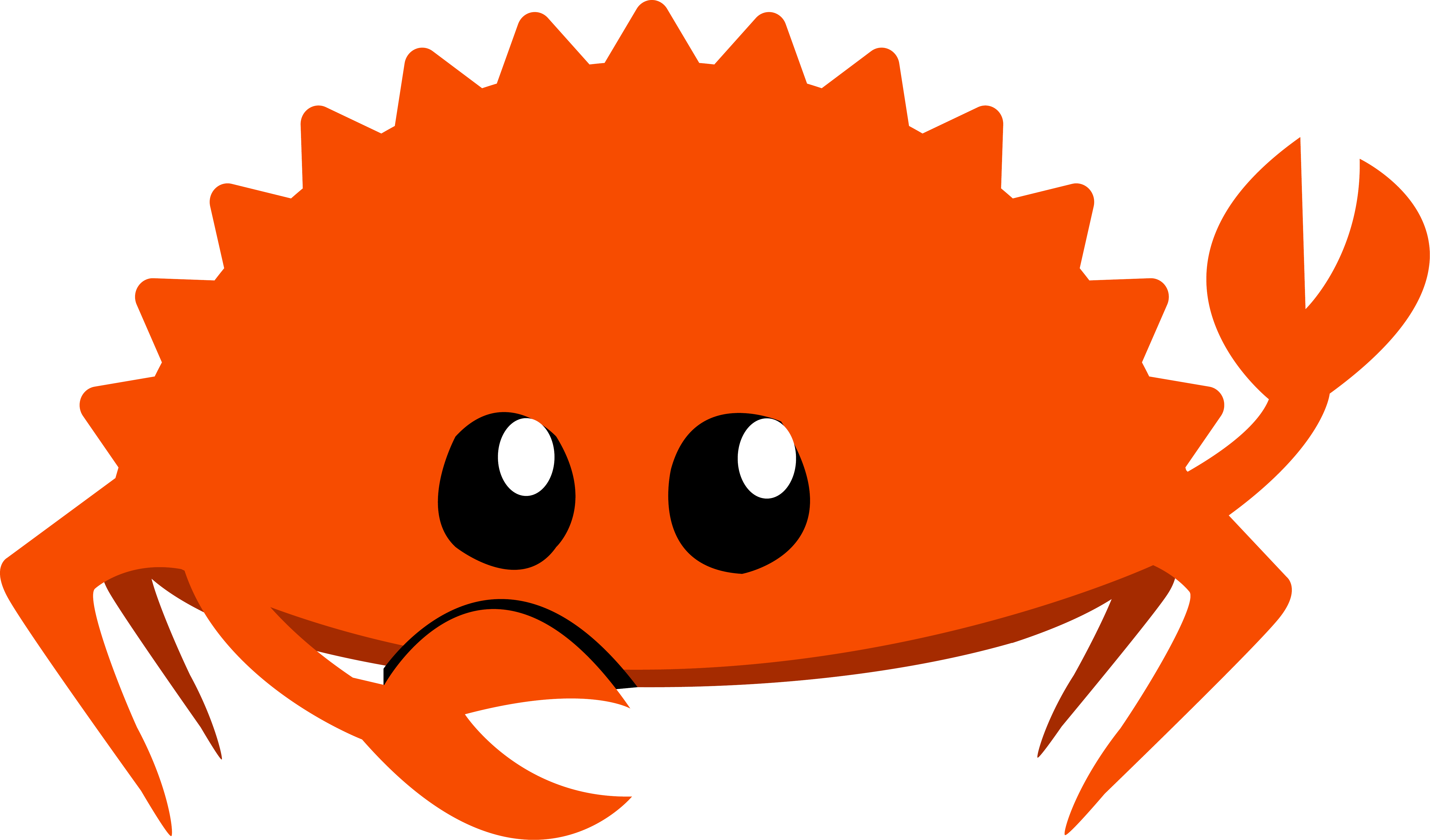}};
      \matrix[matrix of nodes, column sep=1.5mm, right=2mm of subbor2] {
        |(res-subbor2-start) [resource, triangle, rotate=-90, inner sep=2.25pt]| &
        |(res-subbor2-end) [resource, diamond, inner sep=3.25pt]| \\
      };
    \end{pgfonlayer}

    \begin{pgfonlayer}{background}
      \node[mutable region, rectangle corner radius north west=0pt, rectangle corner radius south west=0pt,
        minimum height=40pt, outer sep=0pt, fit={(padding-original) (res-original)}] (region-original) {};
      \node[mutable region, outer sep=0pt, fit={(res-borrow) (group-initial)}, minimum height=40pt] (region-borrow) {};
      \node[mutable region, outer sep=0pt,
        fit={(res-subbor1-start) (res-subbor1-mid) (res-subbor1-end)}, xshift=1pt] (region-subbor1) {};
      \node[mutable region, fit={(res-subbor2-start) (res-subbor2-end)}, xshift=-.5pt] (region-subbor2) {};
      \node[mutable region, rectangle corner radius north east=0pt,rectangle corner radius south east=0pt,
        minimum height=40pt, fit={(nodes-reclaimed) (padding-reclaimed)},
        right=7.35cm of region-original] (region-reclaimed) {};
    \end{pgfonlayer}

    \node (lender) [left=6pt of region-original, node distance=5mm, anchor=east] {Owner \hsinline{T}};
    \node (borrower) [left=6pt of region-borrow, node distance=5mm,
      label={left:{\sffamily Mutable borrower}},
      label={[yshift=6pt]below:{\hsinline{Mut^α T}}}]
      {\includegraphics[width=1.25cm]{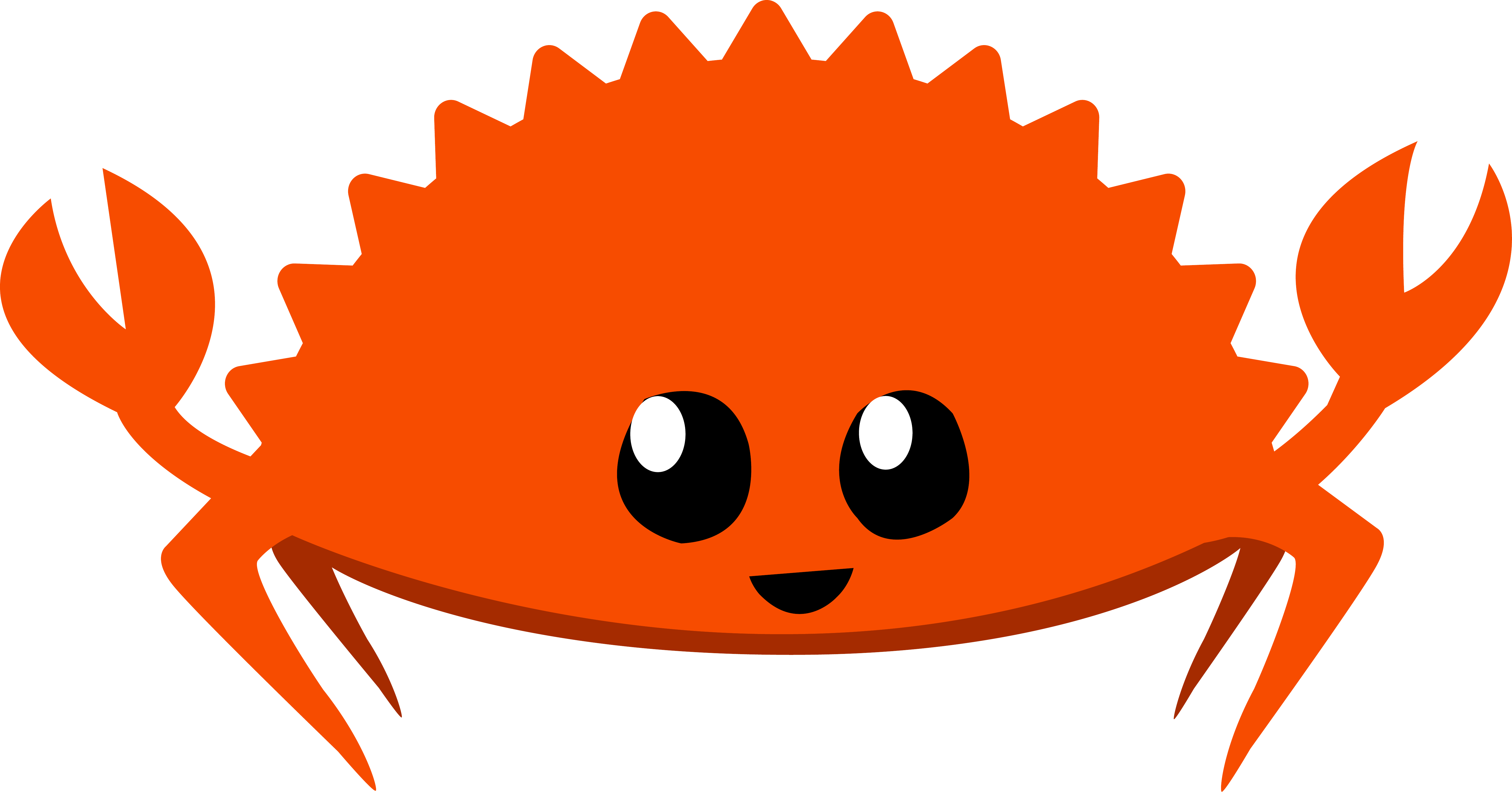}};

    \node[above right=-18pt and 2pt of region-original] {\sffamily Lender \hsinline{Lend^α T}};

    \coordinate (lifetime bound mid) at ($(region-reclaimed.west)!.45!(region-subbor1.east)$);
    \coordinate (lifetime bound top) at ($(region-reclaimed.north west)!(lifetime bound mid)!(region-reclaimed.north east)$);
    \coordinate (lifetime bound bottom) at ($(region-subbor2.south west)!(lifetime bound mid)!(region-subbor2.south east)-(0,4pt)$);
    \draw[thick, dashed, name path=lifetime-vert] (lifetime bound top) to (lifetime bound bottom);

    \draw[->, line width=3pt, draw=BorrowColor, name path=lifetime-horiz]
      (region-original.south east) to node[above right] {\sffamily Borrow} (region-borrow.north west);
    \draw[->, line width=3pt, draw=BorrowColor]
      (nodes-middle.north east) + (-2pt,-5pt) to (subbor1.west);
    \draw[->, line width=3pt, draw=BorrowColor]
      (nodes-middle.south east) + (-2pt,5pt) to (subbor2.west);
    \draw[line width=3pt, draw=MutationColor]
      (region-original) to (region-reclaimed);

    \node[above left=-12pt and 2pt of lifetime bound top] {\sffamily Lifetime \(\lft\)};

    \node[below left=2pt and 8pt of region-reclaimed, anchor=north west, align=left] {\sffamily \textuparrow\ Reclaim \hsinline{T}};
  \end{tikzpicture}
  \caption{Rust-style borrowing illustrated. Each Ferris corresponds to a borrower. Ferris illustrations by Karen Rustad Tölva, CC0, via \url{https://rustacean.net}.}
  \label{fig:borrow-illust}
\end{figure}

\Cref{fig:borrow-illust} illustrates the concept of Rust-style borrowing.
An original owner \hsinline{T} can be \emph{mutably borrowed} to create a \emph{mutable borrower} \hsinline{Mut^α T} and a \emph{lender} \hsinline{Lend^α T} for a time period called the \emph{lifetime} \hsinline{α}.\footnote{
  In Rust, a mutable borrower (our \hsinline{Mut^α T}) is typed \rsinline{&'a mut T}, but lending is only implicitly managed, lenders not being first-class citizens.
  Please see \cref{sect:compare} for more detailed discussion.
}
A mutable borrower can freely mutate the data during the lifetime \hsinline{α}.
Notably, borrowers are \emph{affine}: they can be freely split (e.g., \hsinline{Mut^α T} into borrowers \hsinline{Mut^α U_1}, \hsinline{Mut^α U_2} of subregions) and dropped at any time before the end of \hsinline{α}.
After the lifetime \hsinline{α} ends, the lender can reclaim full ownership of \hsinline{T}.
One might wonder how this happens \emph{non-locally}, without direct communication of ownership from borrowers to the lender.
Intuitively, ownership thrown away by borrowers is collected behind the scenes and finally returned to the lender once the lifetime \hsinline{α} ends.
This non-local handling of ownership and data is the distinct feature of Rust-style borrowing.

\paragraph{Core of Pure Borrow}

Now let's see how Rust-style borrowing works in Pure Borrow.
\vspace{-.7em}
\begin{figure}[H]
  \raggedright
  \begin{minipage}{.5\textwidth}
\begin{hscode}
type BO^α a;  type End^α
type Mut^α a;  type Share^α a;  type Lend^α a
\end{hscode}\vspace{-.3em}
\begin{hscode}
runBO :: Linearly =o
    (∀α. BO^α (End^α -> a)) -o a
borrow :: Linearly =o a -o (Mut^α a, Lend^α a)
share :: Mut^α a -o Ur (Share^α a)
reclaim :: Lend^α a -o End^α -> a
\end{hscode}\vspace{-.3em}
\begin{hscode}
modifyAt :: Int -> (a -o a) -o
    Mut^α (Vector a) -o BO^α (Mut^α (Vector a))
copyAt :: Copyable a => Int ->
    Share^α (Vector a) -> BO^α (Ur a)
\end{hscode}
  \end{minipage}\hspace{1em}
  \begin{minipage}{.455\textwidth}
\begin{hscode}[style=linenos]
example :: Ur (Int, [Int])
example = linearly _do
    vec <- newVector [0, 1, 2]
    runBO do @\label{line:example-pb-runBO}@
        let !(mvec, lend) = borrow vec @\label{line:example-pb-borrow}@
        mvec <- modifyAt 0 (+ 3) mvec @\label{line:example-pb-mutation-1}@
        mvec <- modifyAt 2 (+ 5) mvec @\label{line:example-pb-mutation-2}@
        mvec <- modifyAt 0 (* 4) mvec @\label{line:example-pb-mutation-3}@
        let !(Ur svec) = share mvec @\label{line:example-pb-share}@
        Ur n <- copyAt 0 svec @\label{line:example-pb-copy}@
        pure \end -> move
            (n, freeVector $ reclaim lend end) @\label{line:example-pb-reclaim}@
\end{hscode}
  \end{minipage}
  \caption{Core APIs of Pure Borrow and an example usage.}
  \label{fig:taste-core}
\end{figure}

\Cref{fig:taste-core} shows the core APIs and an example usage, implementing the same logic as \cref{fig:st-monad,fig:linear-haskell}.\footnote{
  The bang \hsinline{!} makes a pattern strict, not lazy, and here serves to pass type checking.
  The reader can safely ignore it.
}
At first, it proceeds similarly to \cref{fig:linear-haskell} by allocating a linear mutable vector \hsinline{vec} with \hsinline{newVector}.
But then we run a computation with borrowing using \hsinline{runBO} (\cref{line:example-pb-runBO}).
Crucially, \hsinline{runBO} can run \hsinline{BO^α} inside a \emph{pure} computation, with a guard by rank-2 polymorphism over the lifetime \hsinline{α}.
The \emph{BO monad} \hsinline{BO^α} is a novel monad we introduce in Pure Borrow, and it denotes a computation that can be run during the lifetime \hsinline{α}.
The mechanism of \hsinline{runBO} and \hsinline{BO^α} is akin to \hsinline{runST} and the \hsinline{ST^s} monad (recall \cref{fig:st-monad}, \cref{sect:background-st}).
Now we \hsinline{borrow} the vector \hsinline{vec} under \hsinline{α}, obtaining a mutable borrower \hsinline{mvec :: Mut^α (Vector Int)} and a lender \hsinline{lend :: Lend^α (Vector Int)} (\cref{line:example-pb-borrow}).
We can do mutations over the vector using \hsinline{mvec} (\cref{line:example-pb-mutation-1,line:example-pb-mutation-2,line:example-pb-mutation-3}).
Then we \hsinline{share} the affine mutable borrower \hsinline{mvec} to get a shared borrower \hsinline{svec :: Share^α (Vector Int)} (\cref{line:example-pb-share}) and read the value at the index \hsinline{0} with \hsinline{copyAt} (\cref{line:example-pb-copy}).
Notably, \hsinline{svec} is wrapped in \hsinline{Ur}, so we can use it arbitrarily many times, especially for reading with \hsinline{copyAt}.
We are reading only once in this example, but we could add more reads without any rebinding of \hsinline{svec}.
Finally, we can \hsinline{reclaim} the \hsinline{lend}er (\cref{line:example-pb-reclaim}) to retrieve \hsinline{Vector Int}, once given by \hsinline{runBO} a \emph{dead lifetime token} \hsinline{end :: End^α}, asserting that the lifetime \hsinline{α} has ended.

\paragraph{Parallelism}

With Pure Borrow, we can go further: mutations on different elements can be performed \emph{in parallel}.
Such safe parallel mutation in pure computation is a central benefit of Pure Borrow, practically demonstrated later in \cref{sect:casestudy}.
\vspace{-.7em}
\begin{figure}[H]
  \raggedright
  \begin{minipage}{.45\textwidth}
\begin{hscode}
parBO :: BO^α a -o BO^α b -o BO^α (a, b)
\end{hscode}\vspace{.2em}
\begin{hscode}
splitAt :: Int -> Borrow_k^α (Vector a) -o
    (Borrow_k^α (Vector a), Borrow_k^α (Vector a))
\end{hscode}
  \end{minipage}\hspace{2.5em}
  \begin{minipage}{.45\textwidth}
\begin{hscode}
let !(mvec_1, mvec_2) = splitAt 1 mvec
(mvec, ()) <- parBO
    (do @\base@mvec_1 <- modifyAt 0 (+ 3) mvec_1
         @\fit@modifyAt 0 (* 4) mvec_1)
    (consume <$> modifyAt 1 (+ 5) mvec_2)
\end{hscode}
  \end{minipage}
  \caption{Pure Borrow APIs for parallelism and an example usage, modifying \cref{line:example-pb-mutation-1,line:example-pb-mutation-2,line:example-pb-mutation-3} of \cref{fig:taste-core}.}
  \label{fig:taste-parallel}
\end{figure}

\Cref{fig:taste-parallel} shows a parallel variant of the mutation part (\cref{line:example-pb-mutation-1,line:example-pb-mutation-2,line:example-pb-mutation-3}) of \cref{fig:taste-core}.
First, we split with \hsinline{splitAt} the mutable borrower \hsinline{mvec} at the index \hsinline{1} into two disjoint mutable borrowers \hsinline{mvec_1} and \hsinline{mvec_2}.
Then we use our key API, \hsinline{parBO}, for \emph{parallel} composition of BO monads.
Here, the first thread mutates the 0th element twice through \hsinline{mvec_1} and the second thread mutates the 2nd element (of \hsinline{mvec}) once through \hsinline{mvec_2}.
Linearity \hsinline{-o} plays a vital role in the safety of \hsinline{parBO}:
thanks to linearity, the ownership held by the two BO monads is mutually \emph{disjoint}, which ensures that their mutations can safely be performed \emph{in parallel}, while retaining \emph{purity}.

\paragraph{Reborrowing}

There was one thing we glossed over in the previous example \cref{fig:taste-parallel}.
The new \hsinline{mvec} obtained from \hsinline{parBO} is actually \hsinline{mvec_1} in disguise.
Can we do better?

\emph{Reborrowing} solves this kind of situation.
It creates a borrower that borrows ownership from an existing mutable borrower for a shorter lifetime.
\vspace{-.9em}
\begin{figure}[H]
  \raggedright
  \begin{minipage}{.475\textwidth}
\begin{hscode}
reborrowing ::
    Mut^α a -o (∀β. Mut^β∧α a -o BO^β∧α' r) -o
    BO^α' (r, Mut^α a)
\end{hscode}
  \end{minipage}\hspace{1.2em}
  \begin{minipage}{.45\textwidth}
\begin{hscode}
((), mvec) <- reborrowing mvec \mvec ->
    let !(mvec_1, mvec_2) = splitAt 1 mvec in
    consume <$> parBO
        (do @\base@mvec_1 <- modifyAt 0 (+ 3) mvec_1
             @\fit@modifyAt 0 (* 4) mvec_1)
        (modifyAt 1 (+ 5) mvec_2)
\end{hscode}
  \end{minipage}
  \caption{Reborrowing API and an example usage, modifying \cref{line:example-pb-mutation-1,line:example-pb-mutation-2,line:example-pb-mutation-3} of \cref{fig:taste-core}.\label{fig:taste-reborrow}}
\end{figure}

\Cref{fig:taste-reborrow} shows an API, \hsinline{reborrowing}, and an example usage of it.
The function \hsinline{reborrowing} creates a mutable reference \hsinline{mvec :: Mut^β∧α (Vector a)} that only lives during the execution of the function.
Using this new \hsinline{mvec}, we can perform parallel mutation just like in \cref{fig:taste-parallel}.
Finally, the function restores the original mutable reference \hsinline{mvec :: Mut^α}, which has access to the \emph{whole} vector.
To obtain a shorter lifetime than \hsinline{α}, we use the intersection \hsinline{β ∧ α} of \hsinline{α} with a fresh lifetime \hsinline{β}.

\subsection{Details of Pure Borrow}
\label{sect:overview-details}

Now that we know the taste of Pure Borrow, let's take a closer look at it.
\Cref{fig:apis-lifetime,fig:apis-bo,fig:apis-borrow,fig:apis-vec,fig:apis-copy,fig:apis-split,fig:subty,fig:reborrow} present the central APIs of Pure Borrow.
One beautiful aspect is that high-level APIs can be derived from simpler primitives, thanks to Haskell's expressivity.

\paragraph{Lifetimes}

\begin{figure}
\begin{hscode}
data Lifetime = Al ι | (α :: Lifetime) ∧ (β :: Lifetime) | Static
class (α :: Lifetime) <= (β :: Lifetime)
\end{hscode}\vspace{-.3em}
\begin{hscode}
data Now@\(\ka^{\lft \ccol \Lifetime}\)@;  data End@\(\ka^{\lft \ccol \Lifetime}\)@;  instance LinearOnly Now^α;  instance Movable End^α
newLifetime :: Linearly =o (∀ι, Now^Alι -o a) -o a;    endLifetime :: Now^Alι -o Ur End^Alι
\end{hscode}
    \caption{Core APIs for lifetimes.}
    \label{fig:apis-lifetime}
\end{figure}

\Cref{fig:apis-lifetime} summarizes APIs for lifetimes.
A lifetime \hsinline{α :: Lifetime} is an atomic lifetime \hsinline{Al ι} for some id \hsinline{ι}, the static lifetime \hsinline{Static} that lives forever (\rsinline{'static} in Rust), or an intersection \hsinline{∧} of lifetimes.
As we saw above, lifetime intersection plays a key role in reborrowing.
Lifetime inclusion \hsinline{α <= β} is the partial order naturally induced, regarding \hsinline{∧} as the greatest lower bound and \hsinline{Static} as the maximum element.
We introduce two basic tokens for lifetimes, the \emph{live lifetime token} \hsinline{Now^α} and the \emph{dead lifetime token} \hsinline{End^α}.\footnote{
  To enhance automation, our Haskell implementation actually introduces a type class for \hsinline{End^α}.
}
The former exclusively asserts that \hsinline{α} is ongoing, and the latter persistently asserts that \hsinline{α} has ended.
With \hsinline{newLifetime}, we can get \hsinline{Now^Alι} for a fresh id \hsinline{ι}.
Then with \hsinline{endLifetime}, we can consume that to get \hsinline{End^Alι}.
The idea of these tokens comes from RustBelt's lifetime logic \cite{JungKD18-RustBelt}, as we discuss later.

\paragraph{BO monad}

\Cref{fig:apis-bo} shows the APIs for our BO monad.
At runtime, \hsinline{BO^α a} is just a thunk for \hsinline{a}, just like \hsinline{ST^s a}, but the APIs for it are quite new.
The key primitive is \hsinline{execBO}, which takes a pure value out of \hsinline{BO^α} using the live lifetime token \hsinline{Now^α}.
Actually, a high-level combinator \hsinline{runBO} we used above (\cref{fig:taste-core}) is derived from \hsinline{execBO} together with \hsinline{newLifetime} and \hsinline{endLifetime}, using rank-2 polymorphism.
Also, we have a primitive \hsinline{sexecBO} and a derived combinator \hsinline{srunBO} for \emph{scoped} execution of the BO monad for sub-lifetimes introduced through intersection \hsinline{∧}.
It is used to derive the \hsinline{reborrowing} combinator we used above (\cref{fig:taste-reborrow}), as shown soon (\cref{fig:reborrow}).

\paragraph{Borrowing}

\begin{figure}
\begin{hscode}
newtype BO@\(\ka^{\lft \ccol \Lifetime}\)@ a;    pure :: a -o BO^α a
(>>=) :: BO^α a -o (a -o BO^α b) -o BO^α b;    parBO :: BO^α a -o BO^α b -o BO^α (a, b)
\end{hscode}\vspace{-.3em}
    \begin{minipage}{.425\textwidth}
\begin{hscode}
execBO :: Now^α -o BO^α a -o (Now^α, a)
\end{hscode}\vspace{-.7em}
\begin{hscode}
runBO :: Linearly =o
    (∀α. BO^α (End^α -> a)) -o a
runBO bo = newLifetime \now -> _do
    (now, f) <- execBO now bo
    Ur end <- endLifetime now;  f end
\end{hscode}
    \end{minipage}\hspace{1.2em}
    \begin{minipage}{.525\textwidth}
\begin{hscode}
sexecBO :: Now^β -o BO^β∧α a -o BO^α (Now^β, a)
\end{hscode}\vspace{-.7em}
\begin{hscode}
srunBO :: Linearly =o
    (∀β. BO^β∧α (End^β -> a)) -o BO^α a
srunBO bo = newLifetime \now -> do
    (now, f) <- sexecBO now bo
    Ur end <- pure (endLifetime now);  pure (f end)
\end{hscode}
    \end{minipage}
    \caption{Core APIs for the BO monad.}
    \label{fig:apis-bo}
\end{figure}

\Cref{fig:apis-borrow} lists the APIs for borrowing.
We feature \hsinline{Mut^α a} for a mutable borrower, \hsinline{Share^α a} for a shared borrower, and \hsinline{Lend^α a} for a lender.
Notably, they are just \hsinline{newtype} wrappers over the body type \hsinline{a}, i.e., they have the same runtime representation as \hsinline{a}.\footnote{
  We exploit this fact to implement various borrowing operations in the subsequent APIs (e.g., \hsinline{borrow}, \hsinline{share} and \hsinline{reclaim}) by internally using unsafe zero-cost type coercion (i.e., a representationally identity function).
}
Also, any Haskell type \hsinline{a} can be a target of borrowing.
For utility, \hsinline{Mut^α} and \hsinline{Share^α} are unified into a general type \hsinline{Borrow_k^α}.
The following are the core primitives:
we can \hsinline{borrow} an object to create a mutable borrower and a lender,
then we can \hsinline{share} a mutable borrower to get an unrestricted shared borrower,
and finally we can \hsinline{reclaim} the ownership from a lender once the lifetime has ended.

\begin{figure}
    \begin{minipage}{.35\textwidth}
\begin{hscode}
newtype Borrow@\(\ka_{\namesty{k} \ccol \BorrowKind}^{\lft \ccol \Lifetime}\)@ a
data BorrowKind = Mut | Share
\end{hscode}\vspace{-.5em}
\begin{hscode}
type Mut^α = Borrow@\({}_{\Mut}^\lft\)@
type Share^α = Borrow@\({}_{\Share}^\lft\)@
\end{hscode}\vspace{-.5em}
\begin{hscode}
newtype Lend@\(\ka^{\lft \ccol \Lifetime}\)@ a
\end{hscode}
    \end{minipage}\hspace{1.2em}
    \begin{minipage}{.525\textwidth}
\begin{hscode}
instance LinearOnly (Mut^α a)
instance Consumable (Mut^α a)
instance Movable (Share^α a)
\end{hscode}\vspace{-.5em}
\begin{hscode}
borrow @\base@:: Linearly =o a -o (Mut^α a, Lend^α a)
reclaim @\fit@:: Lend^α a -o End^α -> a
share   @\fit@:: Mut^α a -o Ur (Share^α a)
\end{hscode}
    \end{minipage}
    \caption{Core APIs for borrowing.}
    \label{fig:apis-borrow}
\end{figure}

\paragraph{Vectors}

\begin{figure}
    \begin{minipage}{.475\textwidth}
\begin{hscode}
getAt :: β <= α => Int ->
    Borrow_k^α (Vector a) -o BO^β (Borrow_k^α a)
swapAt :: β <= α => Int -> Int ->
    Mut^α (Vector a) -o BO^β (Mut^α (Vector a))
updateAt :: β <= α => Int ->
    (a -o BO^β (b, a)) -o Mut^α (Vector a) -o
    BO^β (b, Mut^α (Vector a))
\end{hscode}
    \end{minipage}\hspace{.7em}
    \begin{minipage}{.475\textwidth}
\begin{hscode}
modifyAt :: β <= α => Int -> (a -o a) -o
    Mut^α (Vector a) -o BO^β (Mut^α (Vector a))
modifyAt i f mvec = do
    ((), mvec) <- updateAt i
        (\ a -> pure ((), f a)) mvec
    pure mvec
\end{hscode}
    \end{minipage}\vspace{-.3em}
    \caption{Core APIs for accessing vectors through borrowers.}
    \label{fig:apis-vec}
\end{figure}

\Cref{fig:apis-vec} shows the functions for accessing vectors through borrowers.
The primitive \hsinline{getAt i vec} takes a borrower to the \hsinline{i}-th element of \hsinline{vec}.
The primitive \hsinline{swapAt i j vec} swaps the \hsinline{i}-th and \hsinline{j}-th elements of \hsinline{vec}.
The primitive \hsinline{updateAt i k vec} updates the \hsinline{i}-th element of \hsinline{vec} using \hsinline{k};
it is quite expressive, allowing access to the \hsinline{BO} monad inside \hsinline{k}.

\paragraph{Copying}

\begin{figure}
\begin{hscode}
class Copyable a;    copy :: Copyable a => Share^α a -> a
instance Copyable Int;    instance Copyable End^α;    instance Copyable (Share^α a)
\end{hscode}\vspace{-.5em}
\begin{hscode}
copyAt :: (Copyable a, β <= α) => Int -> Share^α (Vector a) -> BO^β (Ur a)
copyAt i svec = do Ur s <- move <$> getAt i svec; pure $ Ur (copy s)
\end{hscode}
    \caption{Core APIs for copying.}
    \label{fig:apis-copy}
\end{figure}

We provide a mechanism called copying, as shown in \cref{fig:apis-copy}.
We can \hsinline{copy} the body \hsinline{a} out of a shared reference \hsinline{Share^α a} if \hsinline{Copyable a} holds.
It holds roughly when \hsinline{a} is a persistent data type, such as \hsinline{Int}, under a condition similar to but different from \hsinline{Movable a}; \hsinline{Copyable} does not take ownership, while \hsinline{Movable} does.
More specifically, any persistent data types (i.e., those without any mutable states or linear resources) can be \hsinline{Copyable}.\footnote{
  The class \hsinline{Copyable} roughly corresponds to Rust's \rsinline{Copy}, but further allows immutable boxing (like Rust's \rsinline{Box<T>} but immutable) unlike it.
  Notably, this enables recursive data types to be \hsinline{Copyable} (e.g., \hsinline{[a]} is \hsinline{Copyable} under \hsinline{Copyable a}).
}

\paragraph{Splitting}

\begin{figure}
\begin{hscode}
splitPair    @\base@:: Borrow_k^α (a, b) -o (Borrow_k^α a, Borrow_k^α b)
splitEither   @\fit@:: Borrow_k^α (Either a b) -o Either (Borrow_k^α a) (Borrow_k^α b)
splitAt       @\fit@:: Int -> Borrow_k^α (Vector a) -o (Borrow_k^α (Vector a), Borrow_k^α (Vector a))
\end{hscode}
    \caption{APIs for splitting.}
    \label{fig:apis-split}
\end{figure}

\Cref{fig:apis-split} shows primitives for splitting borrowers.
Our Haskell implementation includes a generic mechanism for deriving a \hsinline{split^*} primitive for various data types, not only the ones shown here.
We also have a splitting primitive for a vector.
Notably, these splits can be performed even outside the BO monad, because the information required for them is all persistent.\footnote{
  Note that the length of \hsinline{Vector a} is immutable, as the type represents a fixed-size slice of an array.
}

\paragraph{Subtyping}

\begin{figure}
  \raggedright
  \begin{minipage}{.425\textwidth}
\begin{hscode}
class a <: b;    upcast :: a <: b => a -o b
\end{hscode}
\begin{hscode}
instance a <: b => Vector a <: Vector b
instance β <= α => End^α <: End^β
\end{hscode}
    \end{minipage}\hspace{1.0em}
    \begin{minipage}{.5375\textwidth}
\begin{hscode}
instance (β <= α, a <: b) => BO^α a <: BO^β b
instance (β <= α, a <: b, b <: a) => Mut^α a <: Mut^β b
instance (β <= α, a <: b) => Share^α a <: Share^β b
instance (α <= β, a <: b) => Lend^α a <: Lend^β b
\end{hscode}
  \end{minipage}
  \caption{Core APIs for subtyping.}
  \label{fig:subty}
\end{figure}

We introduce \emph{subtyping} \hsinline{a <: b} as a type class, as shown in \cref{fig:subty}.
Subtyping is crucial, especially for modifying the lifetime information.
Note that the mutable vector type \hsinline{Vector a} is covariant, because it is used linearly.
On the other hand, the mutable borrower type \hsinline{Mut^α a} is \emph{invariant} over the body type \hsinline{a}, because the body is mutably shared between the borrower and the lender.
This aspect is identical to Rust's subtyping.

\paragraph{Reborrowing}

\begin{figure}
  \centering
\begin{hscode}[xleftmargin=.25\textwidth]
joinMut :: Borrow_k^β (Mut^α a) -o Borrow_k^β∧α a
\end{hscode}\vspace{-.3em}
  \raggedright
  \begin{minipage}{.4\textwidth}
\begin{hscode}
reborrow :: Mut^α a -o
    (Mut^β∧α a, Lend^β (Mut^α a))
\end{hscode}\vspace{-.5em}
\begin{hscode}
reborrowing :: Mut^α a -o
    (∀β. Mut^β∧α a -o BO^β∧α' r) -o
    BO^α' (r, Mut^α a)
\end{hscode}\vspace{-.5em}
\begin{hscode}
sharing :: Mut^α a -o
    (∀β. Share^β∧α a -> BO^β∧α' r) -o
    BO^α' (r, Mut^α a)
\end{hscode}\vspace{-.5em}
\begin{hscode}
copyAtMut :: (Copyable a, β <= α) =>
    Int -> Mut^α (Vector a) ->
    BO^β (Ur a, Mut^α (Vector a))
\end{hscode}
  \end{minipage}\hspace{.8em}
  \begin{minipage}{.55\textwidth}
\begin{hscode}
reborrow mut = withLinearly mut \mut ->
    let !(mut', lend) = borrow mut in
    (joinMut mut', lend)
\end{hscode}\vspace{-.3em}
\begin{hscode}
reborrowing mut ko = withLinearly mut \mut ->
    srunBO do
        let !(mut', lend) = reborrow mut;  r <- ko mut'
        pure \end -> (r, reclaim lend (upcast end))
\end{hscode}\vspace{-.3em}
\begin{hscode}
sharing mut ko = reborrowing mut \mut ->
    let !(Ur shr) = share mut in ko shr
\end{hscode}\vspace{-.3em}
\begin{hscode}
copyAtMut i mvec = sharing mvec $ copyAt i
\end{hscode}
  \end{minipage}
  \caption{Primitive and derived operators for reborrowing.}
  \label{fig:reborrow}
\end{figure}

\Cref{fig:reborrow} shows the primitive and derived combinators for reborrowing.
Interestingly, the core primitive for reborrowing is \hsinline{joinMut}, which flattens a borrower over a mutable borrower, taking the intersection lifetime.
The key case is when it is instantiated for \hsinline{Mut}, with type \hsinline{Mut^β (Mut^α a) -o Mut^β∧α a}.
A simple operator \hsinline{reborrow} can be derived from that, simply by borrowing a mutable borrower and joining that.
Now we can derive the useful combinator \hsinline{reborrowing} used above (\cref{fig:taste-reborrow}) for reborrowing a mutable borrower, using a delimited scope of a sub-lifetime introduced by \hsinline{srunBO}.
We can then derive \hsinline{sharing} from \hsinline{reborrowing}, which can temporarily share a mutable borrower within a scope.
For example, from it we can immediately derive a useful function \hsinline{copyAtMut} for reading an element from a mutable borrower.

\paragraph{RustBelt and Pure Borrow}

\citet{JungKD18-RustBelt} developed RustBelt, a semantic soundness proof of Rust's ownership type system, by building the \emph{lifetime logic}, a library for modeling Rust-style borrowing in the \emph{separation logic} Iris \cite{JungSSSTBD15-Iris}.
Some of Pure Borrow's design ideas originate from RustBelt's lifetime logic \cite{JungKD18-RustBelt}.
These include lifetime tokens \hsinline{Now^α} and \hsinline{End^α} and the use of \hsinline{joinMut} for reborrowing.
Nevertheless, designing Pure Borrow safely required careful thought because the lifetime logic (or Iris in general) does not guarantee purity or leak freedom.

\paragraph{Expressivity compared to Rust}

Notably, Pure Borrow supports non-lexical lifetimes \cite{Rust17-NLL} to some extent.
Although our high-level combinators \hsinline{runBO} and \hsinline{srunBO} offer lexically scoped lifetimes, we also provide more primitive APIs, \hsinline{execBO}, \hsinline{sexecBO}, \hsinline{newLifetime} and \hsinline{endLifetime}, for managing lifetimes manually with lifetime tokens \hsinline{Now^α} and \hsinline{End^α}.
This is similar to what RustBelt \cite{JungKD18-RustBelt} does to model Rust's lifetime-based type system using lifetime tokens.
Still, Pure Borrow lacks a highly automated borrow checker like Rust has, and such smooth automation in Pure Borrow is left for future work.

Some advanced borrowing patterns available in Rust are not supported in Pure Borrow, on the other hand.
For example, let us consider the following Rust code:
\begin{rscode}
enum Sum<A, B> { Left(A), Right(B) }
fn test<A, B>(mut s: Sum<A, B>, newb: B) -> Option<Sum<A, B>> {
    match s  {  @\base@Sum::Left(a) => { drop(a); None }
                 @\fit@Sum::Right(ref mut b) => { *b = newb; Some(s) } } }
\end{rscode}
This is a mixture of moving and borrowing, enabled by Rust's \rsinline{ref} pattern.\footnote{
  In the \rsinline{Left} case, the body \rsinline{a: A} of \rsinline{s} is moved out and dropped,
  whereas in the \rsinline{Right} case, the body of \rsinline{s} is mutably borrowed as \rsinline{b: &mut B}.
  Therefore, \rsinline{s} is partially moved out only in the first case and mutably borrowed only in the second case.
}
Pure Borrow does not support this kind of advanced borrowing pattern because it requires pattern matching to be tightly coupled with borrowing.

\section{Case Study and Evaluation}
\label{sect:casestudy}

We have implemented the Pure Borrow APIs presented in \cref{sect:overview} as a \texttt{pure-borrow} package in Haskell,
which compiles with the current Haskell compiler GHC 9.10+.
In this section, we demonstrate its power with a case study: \emph{parallel} quicksort \cite{Hoare61b-quicksort} implemented in our library.\footnote{
  Our implementation and benchmark suite are archived at \cite{MatsushitaIshii26-Pure-Borrow-artifact}, and the latest version is available at \url{https://github.com/SoftwareFoundationGroupAtKyotoU/pure-borrow}.
}

\subsection{Parallel Quicksort Implementations}
\label{sect:casestudy-qsort}

We have two implementations of parallel quicksort: a naïve version and a work-stealing version.
In what follows, we present these implementations and discuss the expressiveness of our APIs.

\begin{figure}
  \raggedright
  \begin{minipage}{.8\textwidth}
\begin{hscode}[style=linenos]
qsort :: (Ord a, Copyable a, β <= α) => Int -> Mut^α (Vector a) -o BO^β ()
qsort budget mvec = case size mvec of
    (Ur n, mvec) @\base@| n <= 1 -> pure $ consume mvec
                  @\fit@| otherwise -> do
        let i = quot n 2
        (Ur pivot, mvec) <- copyAtMut i mvec @\label{line:copy1}@
        (mlo, mhi) <- divide pivot mvec 0 n
        let budget' = quot budget 2 @\label{line:divide-naive}@
        consume <$> parIf (budget' > 0)
            (qsort budget' mlo) (qsort budget' mhi) @\label{line:naive-recurse}@
    where
        parIf b = if b then parBO else \l r -> do l <- l; r <- r; pure (l, r)
\end{hscode}
  \end{minipage}\hspace{-.25\textwidth}
  \begin{minipage}{.4375\textwidth}
    \centering
    \vspace{-.8em}
    \newcommand*{\rectheight}{.7}
    \newcommand*{\pivotwidth}{.75}
    \newcommand*{\drawrectangle}[3]{
      \draw[elt]
        (#1, {#3 + \rectheight / 2}) rectangle
        (#2, {#3 - \rectheight / 2});
      \node[above] at ({(#1 + #2) / 2}, {#3 + \rectheight / 2 - .225})
        {\includegraphics[width=1.25em]{images/ferris-happy.pdf}};
    }
    \newcommand*{\drawpivotrectangle}[3]{
      \drawrectangle{#1}{#2}{#3}
      \draw[pivot]
        ({(#1 + #2) / 2 - \pivotwidth / 2}, {#3 + \rectheight / 2}) rectangle
        ({(#1 + #2) / 2 + \pivotwidth / 2}, {#3 - \rectheight / 2});
    }
    \newcommand*{\drawbudget}[2]{
      \node[anchor=west, yshift=-.05em] at (10.3, #1) {\small #2};
    }
    \begin{tikzpicture}[
      x=1.15em, y=1.15em,
      elt/.style={fill=WorkFillColor,draw=WorkBorderColor},
      pivot/.style={fill=PivotColor,draw=WorkBorderColor}
    ]
      \drawpivotrectangle{0}{10}{4}
      \drawbudget{4}{\hsinline{budget = 4}}
      \drawpivotrectangle{0}{5.25}{2}
      \drawpivotrectangle{5.75}{10}{2}
      \drawbudget{2}{\hsinline{budget = 2}}
      \drawrectangle{0}{0.75}{0}
      \drawpivotrectangle{1}{5.25}{0}
      \drawrectangle{5.75}{6.5}{0}
      \drawpivotrectangle{6.75}{10}{0}
      \drawbudget{0}{\hsinline{budget = 1}}
    \end{tikzpicture}\vspace{.3em}
    \sffamily\small
    Each \raisebox{-.125em}{\includegraphics[width=1.5em]{images/ferris-happy.pdf}} depicts a lightweight thread\\
    spawned by \hsinline{parBO}.
    Ferris: Tölva, CC0.
  \end{minipage}
\begin{hscode}[style=nolinenos]
qsort' :: (Ord a, Copyable a, Movable a) => Int -> [a] -> Ur [a]
qsort' budget xs = linearly _do
    vec <- newVector xs
    runBO do @\base@(mvec, lend) <- borrow vec;  qsort budget mvec
              @\fit@pure $ move . freeVector . reclaim lend
\end{hscode}
  \caption{
    A naïve parallel implementation of quicksort and an illustration.
  }
  \label{fig:qsort-naive}
\end{figure}

\begin{figure}
\begin{hscode}[style=linenos]
divide :: (Ord a, Copyable a, β <= α) =>
    Int -> Mut^α (Vector a) -o Int -> Int -> BO^β (Mut^α (Vector a), Mut^α (Vector a))
divide pivot = partUp
    where
        partUp mvec l u
            | l < u = do @\base@(Ur a, mvec) <- copyAtMut l mvec
                          @\fit@if a < pivot then partUp mvec (l + 1) u else partDown mvec l (u - 1)
            | otherwise = pure $ splitAt l mvec @\label{line:splitAt1}@
        partDown mvec l u
            | l < u = do @\base@(Ur a, mvec) <- copyAtMut u mvec
                          @\fit@if pivot < a @\baseB@then partDown mvec l (u - 1)
                                              @\fitB@else do mvec <- swapAt mvec l u; partUp mvec (l + 1) u @\label{line:swapAt}@
            | otherwise = pure $ splitAt l mvec @\label{line:splitAt2}@
\end{hscode}
  \caption{The implementation of \hsinline{divide} function.}
  \label{fig:divide}
\end{figure}

\paragraph{Naïve parallel quicksort}

\Cref{fig:qsort-naive} shows our \emph{naïve} implementation of parallel quicksort.
The function \hsinline{qsort} is the main part.
It creates a spark (lightweight thread) for each partitioned sub-vector, as illustrated in the figure.
Its \hsinline{budget} upper-bounds the number of sub-vectors, and it falls back to sequential execution if the \hsinline{budget} is exhausted.
The function proceeds as follows.
If the size of the vector is more than one, it selects a pivot value from the middle of the vector (\cref{line:copy1}) with \hsinline{copyAtMut}---recall that it is implemented with \emph{reborrowing}, as shown in \cref{fig:reborrow}.
Then it calls \hsinline{divide}, which we explain soon, to partition the mutable borrower \hsinline{mvec} into low and high parts (\cref{line:divide-naive}).
Finally, it recurses on the sub-vectors with halved budgets (\cref{line:naive-recurse}).
The calls are combined using the \hsinline{parIf} combinator, which invokes \hsinline{parBO} (\cref{fig:apis-bo}) to evaluate both arguments in parallel if the budget remains, or otherwise executes them sequentially.

\paragraph{Division}

\Cref{fig:divide} shows the \hsinline{divide} function that partitions the given vector into low and high parts according to the pivot value.
This is an almost literal translation of the \hsinline{partitionBy} function from \texttt{vector-algorithms} package \cite{vector-algorithms} with our API.
The function works in two modes: \hsinline{partUp} and \hsinline{partDown}, which scan and grow the lower and upper ends of the active interval, respectively.
When an out-of-place element is found, it swaps the elements at the two indices and switches mode with \hsinline{swapAt} (\cref{line:swapAt}).
When the two indices meet, it splits the vector into two parts with \hsinline{splitAt} (\cref{line:splitAt1,line:splitAt2}) and returns them.

\paragraph{Work-stealing parallel quicksort}

\begin{figure}
  \raggedright
  \begin{subcaptionblock}{\textwidth}
\begin{hscode}
data DivideConquer^α a = DivideConquer { divide :: ∀β. β <= α => Mut^β a -o BO^β (Result^β a) }
data Result^β a = Done | Continue [Mut^β a]
divideAndConquer :: β <= α => Int -> DivideConquer^α a -> Mut^α a -o BO^β (Mut^α a)
\end{hscode}\vspace{-.45em}
    \caption{General API for work-stealing parallelism.}
    \label{fig:work-stealing-api}
  \end{subcaptionblock}
  \begin{subcaptionblock}{.6\textwidth}
\begin{hscode}
qsortDC :: (Ord a, Copyable a, β <= α) =>
    Int -> Int -> Mut^α (Vector a) -o BO^β (Mut^α (Vector a))
qsortDC nwork thresh = divideAndConquer nwork $
    DivideConquer \mvec -> case size mvec of
        (Ur n, mvec)
            | n <= 1 -> let () = consume mvec in pure Done
            | n <= thresh -> Done <$> qsort 0 mvec
            | otherwise -> let i = quot n 2 in do
                  (Ur pivot, mvec) <- copyAtMut i mvec
                  (mlo, mhi) <- divide pivot mvec 0 n
                  pure $ Continue [mlo; mhi]
\end{hscode}\vspace{-.35em}
    \caption{Quicksort implementation.}
    \label{fig:work-stealing-qsort}
  \end{subcaptionblock}\hspace{.3em}
  \begin{subcaptionblock}{.375\textwidth}
    \centering
    \pgfdeclarelayer{background}
    \pgfsetlayers{background,main}
    \newcommand*{\rectheight}{.45em}
    \begin{tikzpicture}[baseline,
      work/.style={rectangle,fill=WorkFillColor,draw=WorkBorderColor,minimum width=2.4em,minimum height=\rectheight},
      deque/.style={rectangle,draw=WorkBorderColor,fill=gray!8,inner sep=2pt},
      lbl/.style={font=\sffamily\tiny},
      ]
      \node (w1s3) [work] at (0,0) {};
      \node (w1s2) [work] at (0,-0.35) {};
      \node (w1s1) [work] at (0,-0.7) {};
      \begin{pgfonlayer}{background}
        \node (w1deque) [deque, fit={(w1s1) (w1s2) (w1s3)}] {};
      \end{pgfonlayer}

      \node (w1) at (0,-1.45) {\includegraphics[width=0.5cm]{images/ferris-happy.pdf}};
      \draw[->,thick,draw=BorrowColor] (w1deque.south) -- node[right,lbl,text=BorrowColor]{pop} (w1.north);

      \node (w1cur) [work,minimum width=3em] at (0,-2.1) {};
      \node [rectangle,fill=PivotColor,draw=WorkBorderColor,minimum width=.7em,minimum height=\rectheight] at (w1cur.center) {};
      \node (w1da) [work,minimum width=1.4em] at (-0.5,-2.6) {};
      \node (w1db) [work,minimum width=1.2em] at (0.5,-2.6) {};
      \draw[->,thick] (w1cur.south west) ++(2pt,0) to (w1da);
      \draw[->,thick] (w1cur.south east) ++(-2pt,0) to (w1db);
      \begin{pgfonlayer}{background}
        \node (divs) [ellipse,draw,fill=gray!25, fit={(w1da) (w1db)},inner sep=1pt] {};
      \end{pgfonlayer}

      \draw[->,thick,draw=BorrowColor] (w1.south) to (w1cur.north);
      \draw[->,thick,draw=BorrowColor] (divs.west) to[out=180,in=210] node[left,lbl,text=BorrowColor]{push} (w1deque.south west);

      \node (w2empty) [lbl,text=gray] at (1.5,-0.35) {(empty)};
      \begin{pgfonlayer}{background}
        \node (w2deque) [deque, fit={(w2empty)},minimum width=2.8em,minimum height=2.6em] {};
      \end{pgfonlayer}

      \node (w2) at (1.5,0.35) {\includegraphics[width=0.5cm]{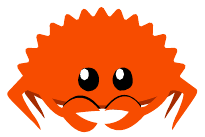}};

      \draw[->,ultra thick,draw=BorrowColor,densely dashed]
        (w2.north) to[bend right=25]
        node[left,lbl,xshift=-10,text=BorrowColor] {steal}
        (w1deque.north);

      \node (mAw) at (-1.4,-0.9) {\includegraphics[width=3mm]{images/ferris-happy.pdf}};
      \node (mAs1) [work,minimum width=1.4em,minimum height=.3em] at (-1.4,-0.55) {};
      \node (mAs2) [work,minimum width=1.4em,minimum height=.3em] at (-1.4,-0.35) {};
      \begin{pgfonlayer}{background}
        \node [deque,fit={(mAs1)(mAs2)},inner sep=1pt] {};
      \end{pgfonlayer}

      \node (mBw) at (1.1,-1.5) {\includegraphics[width=3mm]{images/ferris-happy.pdf}};
      \node (mBs1) [work,minimum width=1.4em,minimum height=.3em] at (1.1,-1.2) {};
      \begin{pgfonlayer}{background}
        \node [deque,fit={(mBs1)},inner sep=1pt,minimum height=.8em] {};
      \end{pgfonlayer}

      \node (mCw) at (1.9,-1.5) {\includegraphics[width=3mm]{images/ferris-usual.pdf}};
      \node (mCe) [lbl,text=gray] at (1.9,-1.2) {};
      \begin{pgfonlayer}{background}
        \node [deque,fit={(mCe)},inner sep=1pt,minimum width=1.6em,minimum height=.8em] {};
      \end{pgfonlayer}
    \end{tikzpicture}\vspace{.5em}
    \caption{
      Each worker \raisebox{-.125em}{\includegraphics[width=1.5em]{images/ferris-happy.pdf}} owns a deque.
      It pushes/pops at the top (near-side).
      An idle worker \raisebox{-.125em}{\includegraphics[width=1.5em]{images/ferris-usual.pdf}} steals from another's bottom.
      Ferris: Tölva, CC0.
    }
    \label{fig:work-stealing-illust}
  \end{subcaptionblock}
  \caption{Work-stealing parallelism.}
  \label{fig:work-stealing}
\end{figure}

While the previous implementation is concise, it is rather \emph{naïve} because \hsinline{parBO} is a structural parallel composition that waits for both arguments to finish, which can lead to extra waiting time and degraded performance.

To address this issue, we develop a general, modular, safe, and efficient API for \emph{work-stealing} parallelization of divide-and-conquer algorithms, as shown in \cref{fig:work-stealing-api}.
Although the work-stealing API is designed to be used with our Pure Borrow APIs and is mostly implemented using them, it also uses some unsafe data structures, such as concurrent queues mutated by multiple threads under the hood.
Hence, this API should be interpreted as a proof-of-concept demonstrating that Pure Borrow can provide good basic building blocks for more sophisticated APIs, and we do not claim that it is the best API for work-stealing parallelism.
\hsinline{DivideConquer} wraps a function for each step of divide-and-conquer, which takes a mutable borrower, executes in the \hsinline{BO} monad, and returns \hsinline{Result}, which is either \hsinline{Done} or \hsinline{Continue}s with divided parallelized pieces of work.
Given \hsinline{DivideConquer}, \hsinline{divideAndConquer} parallelizes the whole divide-and-conquer algorithm with a work-stealing scheduler.
As illustrated in \cref{fig:work-stealing-illust}, each worker owns a local double-ended queue of jobs, and whenever a thread finishes a job, its sub-jobs are pushed to the top. Then an idle worker first tries to pop another piece of work from the top of its own deque, and tries to \emph{steal} from the bottom of another's queue when its own queue is empty.
We also allocate a condition variable for each piece of work for synchronization.
We carefully design \hsinline{divideAndConquer} so that concurrent mutation is safely encapsulated, keeping the observed behavior pure.

\Cref{fig:work-stealing-qsort} gives a more sophisticated implementation of parallel quicksort with this work-stealing scheduler.
It retains almost the same logic as the naïve implementation \cref{fig:qsort-naive}, but returns ``divided'' pieces of work instead of recursing on the sub-vectors.
If the size of the vector is less than a given \hsinline{thresh}old, it falls back to sequential quicksort.

Notably, this is efficiently parallelized thanks to the absence of post-processing, which benefits greatly from \emph{affine} mutable borrowers in Pure Borrow.
Traditional move-based APIs would require post-processing to join vectors (see \nameref{sect:compare-revert} in \cref{sect:compare} for a detailed discussion).

\subsection{Benchmark Results}
\label{sect:casestudy-benchmark}

To evaluate performance, we measured the runtime and memory allocation for randomly generated integer vectors of varying sizes.
We compared our two implementations with the introsort\footnote{
  Introsort \cite{Musser97-introsort} is a variant of quicksort that cleverly switches to heap or insertion sort to improve performance.
} implementation in the \texttt{vector-algorithms} package \cite{vector-algorithms}, widely used in industry.
More concretely, we compare the following implementations:
\textbf{Introsort}, industrial-level sequential introsort from the \texttt{vector-algorithms} package;
\textbf{Sequential}, sequential quicksort, i.e., our naïve quicksort in \cref{fig:qsort-naive} with \hsinline{budget = 0}, serving as the baseline;
\textbf{Naïve \(N\)}, our naïve parallel quicksort (\cref{fig:qsort-naive}) with \(\namesty{budget} = N\), dividing the whole vector into at most \(N\) pieces;
and \textbf{WS \(N\)}, our work-stealing parallel quicksort (\cref{fig:work-stealing}) with \hsinline{thresh = 16} and \(N\) worker threads.
The benchmarks were run with the \texttt{tasty-bench} framework \cite{tasty-bench} on a MacBook Pro 2024 with an Apple M3 chip (4 performance cores and 6 efficiency cores) and 16 GB of RAM.
All cases were run with the \texttt{+RTS -N10} option, enabling parallelism on 10 cores.

\pgfplotsset{
  xlabel=\(N\), xlabel style={yshift=.45em},
  every x tick scale label/.append style={yshift=1.625em, xshift=.7em, font=\small},
  ylabel style={yshift=-.05em}
}
\begin{figure}
  \hspace{-1.0em}
  \begin{tikzpicture}
    \begin{axis}[width=.5\textwidth,
      ylabel={Wall Clock Time $[\mathrm{ms}]$}]
      \addplot[black,sharp plot] table[x=size,y=introMean,header=has colnames,col sep=comma] {bench.csv};

      \addplot[red,sharp plot] table[x=size,y=sequentialMean,header=has colnames,col sep=comma] {bench.csv};

      \addplot[blue,mark=triangle,sharp plot] table[x=size,y=parallel4Mean,header=has colnames,col sep=comma] {bench.csv};
      \addplot[green,mark=triangle,sharp plot] table[x=size,y=parallel16Mean,header=has colnames,col sep=comma] {bench.csv};
      \addplot[black,mark=triangle,sharp plot] table[x=size,y=parallel32Mean,header=has colnames,col sep=comma] {bench.csv};

      \addplot[blue,mark=+,sharp plot] table[x=size,y=workSteal4Mean,header=has colnames,col sep=comma] {bench.csv};
      \addplot[green,mark=+,sharp plot] table[x=size,y=workSteal8Mean,header=has colnames,col sep=comma] {bench.csv};
      \addplot[black,mark=+,sharp plot] table[x=size,y=workSteal10Mean,header=has colnames,col sep=comma] {bench.csv};
    \end{axis}
  \end{tikzpicture}
  \hspace{1.1em}
  \begin{tikzpicture}
    \begin{axis}[width=.5\textwidth,
      ylabel={Allocation $[\text{MB}]$}]
      \addplot[black,sharp plot] table[x=size,y=introAlloc,header=has colnames,col sep=comma] {bench.csv};

      \addplot[red,sharp plot] table[x=size,y=sequentialAlloc,header=has colnames,col sep=comma] {bench.csv};

      \addplot[blue,mark=triangle,sharp plot] table[x=size,y=parallel4Alloc,header=has colnames,col sep=comma] {bench.csv};
      \addplot[green,mark=triangle,sharp plot] table[x=size,y=parallel16Alloc,header=has colnames,col sep=comma] {bench.csv};
      \addplot[black,mark=triangle,sharp plot] table[x=size,y=parallel32Alloc,header=has colnames,col sep=comma] {bench.csv};

      \addplot[blue,mark=+,sharp plot] table[x=size,y=workSteal4Alloc,header=has colnames,col sep=comma] {bench.csv};
      \addplot[green,mark=+,sharp plot] table[x=size,y=workSteal8Alloc,header=has colnames,col sep=comma] {bench.csv};
      \addplot[black,mark=+,sharp plot] table[x=size,y=workSteal10Alloc,header=has colnames,col sep=comma] {bench.csv};
    \end{axis}
  \end{tikzpicture}\vspace{.05em}
  \begin{tikzpicture}
    \begin{customlegend}[legend columns=-1,
      legend style={draw=none, /tikz/every even column/.append style={column sep=8pt},font=\small},
      legend entries={Intro,Sequential,Naïve $4$, Naïve $16$, Naïve $32$, WS $4$, WS $8$, WS $10$}]
      \addlegendimage{black}
      \addlegendimage{red}
      \addlegendimage{blue,mark=triangle}
      \addlegendimage{green,mark=triangle}
      \addlegendimage{black,mark=triangle}
      \addlegendimage{blue,mark=+}
      \addlegendimage{green,mark=+}
      \addlegendimage{black,mark=+}
    \end{customlegend}
  \end{tikzpicture}

  \caption{The performance of parallel quicksort implementations\label{fig:bench-qsort}.}
\end{figure}

\Cref{fig:bench-qsort} shows the benchmark results.
Overall, our parallel quicksort implementations have considerably faster runtimes than the sequential quicksort, and our work-stealing versions largely outperform the naïve ones in runtime.
These results are promising and demonstrate the eligibility of our Pure Borrow APIs as building blocks for sophisticated parallelism.
Although our current parallel quicksort implementations do not outperform the industrial-level introsort, we believe that more sophisticated implementations can further improve performance.

\subsection{Summary}
\label{sect:casestudy-summary}

Our Pure Borrow framework enables writing safe, pure, and efficient algorithms with parallelized mutation concisely in Linear Haskell.
Our parallel quicksort implementations written in our Pure Borrow APIs showed promising performance.

\section{Formal Calculus and Metatheory}
\label{sect:metatheory}

We have seen above that our Pure Borrow framework is powerful, providing safe, rich abstraction for efficient mutation.
But is this powerful extension to Linear Haskell really \emph{safe}?

To answer this question, we formalize Linear Haskell extended with Pure Borrow (\cref{sect:metatheory-program}, \cref{sect:metatheory-type}) and establish a metatheory for its safety, leak freedom and confluence (\cref{sect:metatheory-metatheory}).
For purity, we develop two operational semantics, one for physical behavior and the other for mathematical modeling (\cref{sect:metatheory-opsem}), and formulate a type-based relation between the two (\cref{sect:metatheory-assoc}).
\iffull\else
The appendix on our formal calculus and metatheory is included in the extended version of this paper \cite{MatsushitaIshii26-Pure-Borrow-arXiv}.
\fi

\subsection{Program Syntax}
\label{sect:metatheory-program}

\begin{figure}
  \tightmathenv
  \begin{gather*}
    \stag{Variable} \Var \sni \var, \varB, \varC
  \hspace{3em}
    \stag{Data constructor} \Con
  \\[.3em]
    \stag{Integer operation} \iop
      \sdef \paren{{+}} \arity{2}
      \sor \paren{{\times}} \arity{2} \sor \cdots
  \hspace{3em}
    \stag{Integer relation} \irel
      \sdef \paren{{\le}} \arity{2} \sor \cdots
  \\[.3em]
  \begin{aligned}
    \stag{Operator} \op
      & \sdef \iop \sor \irel
      \sor \tpar \arity{2}
      \sor \consume \arity{1}
      \sor \move \arity{1}
      \sor \linearly \arity{1}
      \sor \withLinearly \arity{1}
  \\[-.1em] & \sskip
      \sor \newRef \arity{2} \sor \freeRef \arity{1}
      \sor \newLifetime \arity{2} \sor \tendLifetime \arity{1}
  \\[-.1em] & \sskip
      \sor \borrow \arity{2}
      \sor \share \arity{1}
      \sor \tcopy \arity{1} \sor \joinMut \arity{1}
      \sor \reclaim \arity{2} \sor \execBO \arity{2}
  \end{aligned}
  \\[.3em]
  \begin{aligned}
    \stag{Monad constr.} \mo
      & \sdef \pure \arity{1} \sor \paren{{\bind}} \arity{2}
      \sor \sexecBO \arity{2}
      \sor \parBO \arity{2}
      \sor \deref \arity{1} \sor \updateRef \arity{2}
  \end{aligned}
  \\[.3em]
  \begin{aligned}
    \stag{Term} \Term \sni \term, \termB
      & \sdef \var \sor \klet \Seq{\var \teq \term} \ktin \termB
      \sor \tlamx{\var} \term
      \sor \term\kd \termB
      \sor \kseq \var\kd \term
      \sor \num
  \\[-.3em] & \sskip
      \sor \Con\kd \seq{\term}
      \sor \kcase \term \kof
        \kcurly{ \Seq{\Con\kd \seq{\var} \tto \termB} }
      \sor \op\kd \seq{\term}
      \sor \mo\kd \seq{\term}
  \end{aligned}
  \end{gather*}
  \caption{Program syntax.}
  \label{fig:program}
\end{figure}

\Cref{fig:program} shows the program syntax.\footnote{
  The small numbers (e.g., \(2\) in \(\paren{{+}} \arity{2}\)) show the arity.
  For simplicity, we don't have partial application to \(\op\), \(\mo\), and \(\Con\).
}
We keep the language minimal while supporting the core functionalities of borrowing.
Values of the BO monad are constructed with monad constructors \(\mo\).

The biggest difference from \cref{sect:overview} is that, for mutable state, our language supports only references \hsinline{Ref} and omits vectors \hsinline{Vector} for simplicity.
Note that we can still emulate the functionality of \hsinline{Vector a} by a list of references \hsinline{[Ref a]}.

\subsection{Type System}
\label{sect:metatheory-type}

\begin{figure}
  \tightmathenv
  \begin{gather*}
    \stag{Lifetime variable}\nk5 \lftvar, \lftvarB
  \hspace{2.75em}
    \stag{Lifetime id}\nk5 \lftid
  \hspace{2.75em}
    \stag{Lifetime} \lft, \lftB
      \sdef \al \lftid
      \sor \static
      \sor \lft \lftand \lftB
      \sor \lftvar
  \\[.3em]
    \stag{Multiplicity variable}\nk5 \multvar, \multvarB
  \hspace{3em}
    \stag{Multiplicity} \textstyle
    \mult, \multB
    \sdef \one \sor \many \sor \prod \Seq{\multvar}
  \\[.3em]
    \stag{Type variable} \Tvar, \TvarB
  \hspace{3em}
    \stag{Variable}
    \anyvar \sdef \lftvar \sor \lftid \sor \multvar \sor \Tvar
  \\[.3em]
    \stag{Type constructor} \Tcon
  \hspace{3em}
    \stag{Borrower constructor} \Bcon \sdef \Mut \sor \Share
  \\[.3em]
  \begin{aligned}
    \stag{Type} \Type \sni \Ty, \TyB
      & \sdef \all{\anyvar} \Ty \sor \Tvar
      \sor \Ty \to_\mult \TyB
      \sor \Tcon\, \seq{\Ty}
      \sor \Int
      \sor \Linearly
  \\[-.1em] & \sskip
      \sor \kRef \Ty
      \sor \Now^\lft \sor \End^\lft
      \sor \Bcon^\lft\kc \Ty
      \sor \Lend^\lft\kc \Ty
      \sor \BO^\lft\kc \Ty
  \end{aligned}
  \\[.3em]
    \Ty \lto \TyB \kg\defeq\kg \Ty \to_\one \TyB
  \hspace{2.5em}
    \Ty \to \TyB \kg\defeq\kg \Ty \to_\many \TyB
  \\[.3em]
    \stag{Data type declaration} \datatemplate
  \hspace{2.75em}
    \stag{Typing context} \Tctx, \TctxB \sdef
      \Seq{\var \ccoll{\mult} \Ty}
  \end{gather*}
  \caption{Syntax for the type system.}
  \label{fig:type-syntax}
\end{figure}

We also formalize the type system.
This is largely straightforward, following the approach of \citet{BernardyBNJS18-Linear-Haskell}.
\Cref{fig:type-syntax} presents the syntax for it.\footnote{
  For each type constructor \(\Tcon\), we assume a data type declaration, which can be recursive.
}
The function type \(\to_\mult\) is parameterized by the multiplicity \(\mult\), representing linear (\(\one\)) or unrestricted (\(\many\)) binding.
For simplicity, we do not introduce type classes, and
the linearity witness \(\Linearly\) is just a linear data type.
Notably, we support first-class polymorphism,  including rank-2 polymorphism.
Typing rules are quite straightforward formalization of \cref{sect:overview-details}.
Please see \appendixref{appx:formal-type} for the details.

\subsection{Two Operational Semantics}
\label{sect:metatheory-opsem}

Why is Linear Haskell extended with Pure Borrow still \emph{pure}, while supporting flexible mutation with Rust-style borrowing?

To answer this question, we introduce \emph{two} execution models for our formal calculus.
A natural one is \emph{mutative} operational semantics, which explicitly owns the global memory state and destructively mutates the state as the computation progresses.
What is notable is the other one, \emph{denotational} operational semantics.
It is free of the global memory state and instead locally manages information about each part of the state according to ownership.
This localizes mutation and executes much like purely functional languages.
Indeed, denotational operational semantics is designed to be \emph{strongly confluent} (\cref{thm:dos-confl}, \cref{sect:metatheory-metatheory}),
and thus serves as the witness for the \emph{purity} of our Pure Borrow.
Later, we establish an \emph{association system} (\cref{sect:metatheory-assoc}), or a type-based relation between the two execution models, and state that the relation is actually a \emph{bisimulation} (\cref{thm:bisim}, \cref{sect:metatheory-metatheory}), which entails that well-typed programs behave the same way in the two execution models.
This overall scheme largely follows the purity proof for original Linear Haskell by \citet{BernardyBNJS18-Linear-Haskell},\footnote{
  The ordinary and denotational dynamic semantics of \citet{BernardyBNJS18-Linear-Haskell} correspond to our mutative and denotational operational semantics, respectively.
}
but further introduces a new mechanism for justifying Rust-style borrowing: \emph{histories}.

\paragraph{Histories for borrowing}

As explained in \cref{sect:overview-taste} (and illustrated in \cref{fig:borrow-illust}), Rust-style borrowing is great in that no direct communication is needed for borrowers to return ownership to the lender.
However, this poses a challenge in building a \emph{pure} execution model.
How can the lender know the final result of mutation by the borrowers, when it reclaims the ownership?

To solve this, we propose a new approach: recording the \emph{history} of updates on borrowed data.\footnote{
  Other studies used different methods, such as prophecies, to model mutable borrowing functionally.
  Please refer to \cref{sect:related} for a comparison of our history-based approach with these existing methods.
}
More concretely, our denotational operational semantics goes as follows.
Because each mutation to borrowed data (more specifically, \(\updateRef\)) happens inside the BO monad, executed with \(\execBO\)/\(\sexecBO\) threading live lifetime tokens \(\Now^\lft\), we can record the mutation history to the tokens.
The history can then be inherited by dead lifetime tokens \(\End^\lft\) at \(\tendLifetime\).
When the lender \(\reclaim\)s the ownership, we can restore the final version of the borrowed content from the history saved in the dead lifetime token.
This successfully models global communication from the borrowers to the lender in a \emph{local} way.

\paragraph{Overview}

\begin{figure}
  \subcaptionsetup{aboveskip=1pt}
  \begin{subcaptionblock}{\textwidth}
    \tightmathenv
    \begin{gather*}
      \stag{Value} \val
        \sdef \tlamx{\var} \term
        \sor \num
        \sor \Con\kd \seq{\var}
        \sor \mo\kd \seq{\var}
    \\[-.1em]
      \stag{Evaluation context} \Ktx
        \sdef \khole \sor \khole\kd \var
        \sor \kseq \khole\kd \var
        \sor \kcase \khole \kof
          \kcurly{ \Seq{\Con\kd \seq{\var} \tto \term} }
        \sor \op\kd \seq{\var}\kd \khole\kd \seq{\varB}
    \end{gather*}\vspace{-.5em}
    \caption{Common things.}
    \label{fig:opsem-syntax-common}
  \end{subcaptionblock}
  \begin{subcaptionblock}{.46\textwidth}
    \tightmathenv
    \begin{gather*}
      \stag{Location}\kx \loc
    \hspace{2em}
      \stag{Memory}\kx \Mem \sdef \Seq{\loc \mMapsto \var}
    \\[.1em]
      \stag{Extra value} \mxval
        \sdef \nul
        \sor \kRef \loc
        \sor \Done\kd \var
    \\[.1em]
      \stag{Extra operator} \mop
        \sdef \cdots
    \\[.1em]
      \stag{Term} \mterm, \mtermB
        \sdef \term
        \sor \mxval
        \sor \mop\kd \seq{\var}
    \\[.1em]
      \stag{Value} \mval, \mvalB
        \sdef \val \sor \mxval
    \\[.1em]
      \stag{Evaluation context} \MKtx
        \sdef \Ktx
        \sor \mop\kd \seq{\var}\kd \khole\kd \seq{\varB}
    \\[.1em]
      \stag{Env.}\nk8 \MEnv \ky\sdef\ky \Seq{\var \teq \mterm}
    \hspace{1.6em}
      \stag{Config.}\nk8 \MCfg \ky\sdef\ky \MEnv; \Mem; \var
    \end{gather*}\vspace{-.5em}
    \caption{For mutative semantics.}
    \label{fig:opsem-syntax-mos}
  \end{subcaptionblock}
  \hspace{.3em}
  \begin{subcaptionblock}{.5\textwidth}
    \tightmathenv
    \begin{gather*}
      \stag{Borrower constr.} \DBcon
        \sdef \Mut^{\seq{\bpath}} \sor \Share
    \\[.1em]
    \begin{aligned}
      \stag{Extra value} \dxval
        & \sdef \nul \sor \nul_\Hist
        \sor \kRef \var
    \\[-.6em] & \sskip
        \sor \Lend^\bid\kb \var
        \sor \Done_\Hist\kc \var
    \end{aligned}
    \\[.1em]
      \stag{Extra operator} \dop
        \sdef \cdots
    \\[.1em]
      \stag{Term} \dterm, \dtermB
        \sdef \term
        \sor \dxval
        \sor \dop\kd \seq{\var}
        \sor \DBcon\kd \dterm
    \\[-.1em]
      \stag{Value} \dval, \dvalB
        \sdef \val \sor \dxval
        \sor \DBcon\kd \dval
    \\[.0em]
      \stag{Evaluation ctx.}\kx \DKtx
        \sdef \Ktx
        \sor \dop\kd \seq{\var}\kd \khole\kd \seq{\varB}
        \sor \DBcon\kd \DKtx
    \\[.0em]
      \stag{Env.}\nk8 \DEnv \ky\sdef\ky \Seq{\var \teq \dterm},\ka \Seq{\bid}
    \hspace{1.6em}
      \stag{Config.}\nk8 \DCfg \ky\sdef\ky \DEnv; \var
    \end{gather*}\vspace{-.5em}
    \caption{For denotational semantics. Also see \cref{fig:opsem-syntax-hist}.}
    \label{fig:opsem-syntax-dos}
  \end{subcaptionblock}
  \caption{Core syntax for the two operational semantics.}
  \label{fig:opsem-syntax}
\end{figure}

\begin{figure}
  \tightmathenv
  \begin{gather*}
    \stag{Borrow id} \BorId \sni \bid
  \hspace{3em}
    \stag{Borrow path} \BorPath \sni \bpath, \bpathB \sdef \bid \sor \bpath.\idx
  \\[.1em]
    \stag{History} \dHist \sni \Hist \sdef \Seq{\bpath \hstore \var}
  \end{gather*}
  \caption{Syntax for histories in denotational semantics.}
  \label{fig:opsem-syntax-hist}
\end{figure}

\begin{figure}
  \tightmathenv
  \begin{gather*}
    \begin{aligned}
      \Env \ \ \ &\defeq\ \ \
        \var = 3,\kc
        \varMain \teq \execBO\kd \varNow\kd
          (\modifyRef\kd (\tlamx{\var} \var + 4)\kd \varRef)
    \\[-.1em]
      \Env' \ \ \ &\defeq\ \ \
        \var = 3,\kc
        \varB = 7,\kc
        \varMain \teq (\varNow', \varRef')
    \end{aligned}
  \\[.3em]
    \begin{matrix*}[r]
      \varRef \teq \kRef \loc,\kc
      \varNow \teq \nul,\kc \Env;\kd
        \loc \mMapsto \var;\kd
        \varMain
    &
      \varRef \teq \Mut^\bid\kb (\kRef \var),\kc
      \varNow \teq \nul_\emphist,\kc \Env;\kd
        \varMain
    \\[.3em]
      \to^*
      \varRef' \teq \kRef \loc,\kc
      \varNow' \teq \nul,\kc \Env';\kd
        \loc \mMapsto \varB;\kd
        \varMain
    & \hspace{1.0em}
      \to^*
      \varRef' \teq \Mut^\bid\kb (\kRef \varB),\kc
      \varNow' \teq \nul_{\bid \hstore \varB},\kc \Env';\kd
        \varMain
    \end{matrix*}
  \end{gather*}
  \caption{Reduction example in mutative and denotational semantics.}
  \label{fig:reduce-example}
\end{figure}

Now that we have a big picture, let's dive into the two operational semantics.
\Cref{fig:opsem-syntax,fig:opsem-syntax-hist} present the syntax for them.
More details are presented in \appendixref{appx:formal-opsem}.

To model Haskell's lazy, call-by-need evaluation, both semantics are formulated in the style of \citet{Launchbury93-opsem-lazy}, where the environment consists of variable-term bindings.
Notably, the operational semantics has a nondeterministic evaluation order, as clarified by the syntax of evaluation contexts (\(\Ktx\) etc.), where the arguments of the operators (\(\op\) etc.) are evaluated in \emph{parallel}.
Despite this freedom, our language satisfies the confluence and behavior uniqueness, as discussed later (\cref{thm:dos-confl,cor:mos-behav-uniq}, \cref{sect:metatheory-metatheory}).

One major difference between the two semantics is that mutative semantics is equipped with the global memory \(\Mem\), while denotational semantics lacks it.
A reference \(\TRef\) is modeled as the location \(\loc\) in the former and the content \(\var\) in the latter.

To support borrowing, denotational semantics handles the \emph{history} \(\Hist\), which is defined in \cref{fig:opsem-syntax-hist}.
A \emph{borrow id} \(\bid\) uniquely identifies a borrow created with each \(\borrow\).
A \emph{borrow path} \(\bpath\) represents a sub-component of a borrowed object.
A \emph{history} \(\Hist\) records updates \(\bpath \hstore \var\) of the content of a reference at \(\bpath\) into \(\var\) (by \(\updateRef\)).\footnote{
  A history only records the latest update for each borrow path \(\bpath\).
  So a history can have at most one item of the form \(\bpath \hstore \var\) for each borrow path \(\bpath\).
  The order of items in a history is ignored.
  We write \(\emphist\) for the empty history.
}
In denotational semantics, live and dead lifetime tokens are models as \(\nul_\Hist\) with a history \(\Hist\), and each mutable borrower puts on \(\Mut^{\seq{\bpath}}\), where \(\seq{\bpath}\) is initially set to the borrow id \(\bid\) just after \(\borrow\).
Notably, we can freely handle nested borrowers (i.e., borrowers that contain borrowers), thanks to the multiple paths \(\seq{\bpath}\) in the constructor \(\Mut^{\seq{\bpath}}\) (each path \(\bpath_{\!\idxB}\) corresponds to a different layer of borrowing).
Arbitrary (possibly recursive) algebraic data types within borrowers are also supported, using field accesses \(.\idx\) in borrow paths \(\bpath\) for splitting (\cref{fig:apis-split}).\footnote{
  As for functions, our calculus does not offer special operations on mutable borrowers over functions.
  Extending Pure Borrow with a feature like Rust's \rsinline{FnMut} is left for future work.
}

Based on this syntax, we introduce the small-step reduction relations \(\MCfg \to \MCfg'\) and \(\DCfg \to \DCfg'\) for the two operational semantics.
The reduction rules are rather straightforward, with the big picture in mind.
Please refer to \appendixref{appx:formal-opsem} for the details.
\Cref{fig:reduce-example} presents a reduction example in mutative and denotational semantics,\footnote{
  Here, \(\modifyRef\) is a function derived from \(\updateRef\), just like \hsinline{modifyAt} in \cref{fig:apis-vec}, \cref{sect:overview-details}.
  For readability, we omit some irrelevant variables in the environments.
  For both semantics, the variables \(\varRef\) and \(\varNow\) are actually included in the final environment, with the same binding as the initial environment.
}
showcasing how mutation works.

\subsection{Association System}
\label{sect:metatheory-assoc}

Mutative and denotational operational semantics are designed so that their execution is parallel and related, as informally observed in \cref{fig:reduce-example}.
We formalize this idea into what we dub the \emph{association system}.
Roughly speaking, it is a variant of the type system that is dynamic---i.e., considering intermediate execution steps---and relational---i.e., associating the two execution models.

The high-level association judgment we provide is \(\MCfg \assoc_\Ty \DCfg\), stating that the mutative and denotational configurations \(\MCfg\), \(\DCfg\) are related, under the return type \(\Ty\).
To define this, we introduce a term-level association judgment \(\Rctx \kvdash \mterm \assoc \dterm \kccol \RTy\), which is defined compositionally much like the typing judgment \(\Tctx \kvdash \term \kccol \RTy\).
Here, \(\Rctx\) and \(\RTy\) are an extended version of \(\Tctx\) and \(\Ty\), where \(\Rctx\) can contain some `ghost resources' that account for advanced types (e.g., \(\gmut^{\seq{\bpath}}\) for \(\Mut^\lft\kc \Ty\)).
The design of our association system is highly subtle, primarily due to lazy evaluation, first-class polymorphism, and the complex nature of Rust-style borrowing.
Please refer to \appendixref{appx:assoc} for the details.

\subsection{Metatheory}
\label{sect:metatheory-metatheory}

Finally, we introduce our metatheory for Pure Borrow, which works toward establishing memory safety, leak freedom and confluence (or behavior uniqueness).
With properly designed formalization, especially denotational operational semantics and the association system, we can aim for the desired properties.
Please refer to \appendixref{appx:metatheory} for the omitted details.

\paragraph{Strong confluence of denotational operational semantics}

Remarkably, by design, our denotational operational semantics is \emph{strongly confluent}, i.e., satisfies the \emph{diamond property}:
\begin{theorem}[Strong confluence of denotational operational semantics]
\label{thm:dos-confl}
  For any denotational configurations \(\kb\DCfg\), \(\DCfg_1\) and \(\kb\DCfg_2\) such that \(\kb\DCfg_1 \ne \DCfg_2\), \(\kb\DCfg \to \DCfg_1\) and \(\kb\DCfg \to \DCfg_2\), there exists a configuration \(\kb\DCfg_+\) such that \(\kb\DCfg_1 \to \DCfg_+\) and \(\kb\DCfg_2 \to \DCfg_+\).
\end{theorem}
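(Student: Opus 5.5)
The proof is a case analysis on pairs of reduction steps. It relies on how the denotational semantics is built: a Launchbury-style environment $\DEnv$ of bindings $\var \teq \dterm$ together with a set of allocated borrow ids $\seq{\bid}$, and a step rewrites one binding (and possibly adds fresh bindings and ids). I would first state a \emph{locality lemma}: every step $\DEnv; \var \to \DEnv'; \var$ has the shape
\begin{itemize}
\item a redex is chosen, namely the term bound to some variable $\varB$, decomposed through an evaluation context $\DKtx$ (possibly after following indirections to other bindings that are values, or that are forced as thunks);
\item the step replaces a bounded set of bindings, its \emph{footprint};
\item it allocates fresh variables and fresh borrow ids;
\item it reads only bindings already in value form.
\end{itemize}
Since names are taken up to $\alpha$-renaming and fresh ids can be chosen apart, the allocations of two different steps can always be made disjoint.

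Given $\DCfg \to \DCfg_1$ and $\DCfg \to \DCfg_2$ with $\DCfg_1 \ne \DCfg_2$, I split on whether the two redexes sit in the same binding.

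\emph{Different bindings.} The footprints are disjoint in their written parts. A step never changes a binding that is already a value, except by replacing a thunk with its value, so whatever one step reads is left intact by the other. I would argue that the two steps therefore commute: each step remains enabled after the other, and both orders give the same $\DCfg_+$. The point to check is that no rule rewrites a binding it merely reads. This is exactly the denotational design: there is no global memory, a reference $\kRef \var$ is owned linearly, and mutations of borrowed content are threaded through the token value $\nul_\Hist$ bound to a single variable. So no two rules can destructively compete for the same binding.

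\emph{Same binding.} Here the evaluation contexts $\DKtx$ come into play, in particular $\op\kd \seq{\var}\kd \khole\kd \seq{\varB}$, which lets arguments evaluate in parallel. If the two holes are in disjoint argument positions, the steps commute as before. If they are at the same position, I would show the reduction rules are deterministic at a given redex, so the two results coincide up to fresh-name choice. That contradicts $\DCfg_1 \ne \DCfg_2$ once configurations are read modulo renaming.

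The main obstacle is the cases where a step consumes or merges histories: $\parBO$ (the two sub-computations each hold part of the token), $\sexecBO$, $\tendLifetime$, $\joinMut$, and $\reclaim$, which reads $\Hist$ from $\End^\lft$. For these I would check that the operations on histories are order-independent. Histories are finite maps from borrow paths $\bpath$ to variables, with the latest update kept per path. In $\parBO$, the two branches touch disjoint sets of paths (the $\Mut^{\seq{\bpath}}$ tags are distinct), so merging by union is commutative. This is the step where I would have to inspect the appendix rules one by one, to confirm that history merging never overwrites a key written by the other branch. If some rule needed typing to guarantee that disjointness, I would restrict to configurations reachable from well-typed programs, using the association system of \cref{sect:metatheory-assoc}. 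I would try the untyped proof first, since the theorem is stated for all configurations.
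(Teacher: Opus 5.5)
Your skeleton matches the paper's. You classify a step by its target variable, meaning the binding rewritten by the rule applied under \ref{rule:dred-ktx}. Your ``same binding, different argument positions'' subcase is then just a distinct-target case, because holes of $\DKtx$ are filled by variables. You use determinism at a common target, and otherwise commute the steps, since each rewrites only its own non-value binding, inspects only value bindings, and allocates fresh names.

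The obstacle you single out is not one, though, and the fact you would use for it is false. In an untyped configuration, both branches of a $\parBO$ may update through the same $\Mut^{\seq{\bpath}}$. You do not need disjointness anyway, because histories are immutable annotations inside terms. Each branch of $\parBO$, like the body of $\sexecBO$, starts from its own $\emphist$. The history $\Hist \hseq (\Hist_0 \hpar \Hist_1)$ is formed within a single $\parBO^\post_\Hist$ step, from two value bindings that the other step cannot touch. If the domains overlap, that step is simply unavailable. So no typing restriction is called for, and $\tendLifetime$ and $\joinMut$ combine no histories at all. The case that does need a remark, as the paper notes, is \ref{rule:dred-reclaim}. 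There $\restore_\Hist$ walks an unbounded part of $\DEnv$. But it inspects only value bindings of the forms $\DBctx\,(\kRef\,\varB)$ and $\DBctx\,(\Con\,\seq{\varB})$, and it only adds fresh bindings. So it neither affects nor is affected by a step at another target.

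The real gap is in your locality lemma. ``Reads only value bindings'' fails for \ref{rule:dred-loop}: its premise $\DCfg \hasloop$ inspects a cycle of non-value bindings, and its result is $\DCfg_1 = \DCfg$. More generally, a step is enabled only through its \ref{rule:dred-ktx} chain from the top-level variable, and the bindings on that chain are non-values too. So ``each step remains enabled after the other'' does not follow from ``values are never rewritten''.

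The missing observation is this. A binding that forces a non-value can be rewritten only by a rule that is lazy in that position. The only such rule is the one for $\varRes \bind^\post \varKo$, which requires $\varRes$ but not $\varKo$ to be a value. Its result, $\var \teq \exeBO_\Hist\,\varBo$ with $\varBo \teq \varKo\,\varB$, still forces $\varKo$. Hence forcing chains and cycles survive any other step. This gives you enabledness in the distinct-target case. It also handles loop steps, which determinism does not cover, since a loop step and a $\bind^\post$ step can share a target: when one of the two steps is a loop step, take $\DCfg_+ = \DCfg_2$.
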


\paragraph{On the association system}

Crucially, well-typedness implies association:
\begin{theorem}[Well-typedness implies association]
\label{thm:typed-assoc}
  If \(\vdash \term \ccol \Ty\), then \(\paren{\var \teq \term;; \var} \assoc_\Ty \paren{\var \teq \term; \var}\).
\end{theorem}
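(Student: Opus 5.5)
The plan is to reduce \cref{thm:typed-assoc} to a term-level lemma stating that the typing judgment embeds into the term-level association judgment as its diagonal. Concretely, I would prove by induction on the typing derivation that whenever \(\Tctx \kvdash \term \kccol \Ty\) holds, then \(\Tctx \kvdash \term \assoc \term \kccol \Ty\) also holds. Here \(\Tctx\) and \(\Ty\) are read as degenerate instances of \(\Rctx\) and \(\RTy\) that carry no ghost resources. This is plausible because source terms contain no extra values: no locations \(\kRef \loc\), no \(\Mut^{\seq{\bpath}}\) wrappers, no histories and no \(\Lend^\bid\). So no ghost resources such as \(\gmut^{\seq{\bpath}}\) are needed to account for the term. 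The configuration-level judgment \(\MCfg \assoc_\Ty \DCfg\) should then follow by instantiating its definition. The environment is a single binding \(\var \teq \term\), the mutative memory is empty, and the denotational borrow-id set is empty. The remaining obligations are that each binding is associated at its type under the resources supplied by the other bindings, here none, and that the returned variable has type \(\Ty\).

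The induction goes through the typing rules one by one, and for most of them the step is mechanical. For variables, \(\klet\), \(\tlamx{\var}\), application, \(\kseq\), constructors, \(\kcase\) and integers, I would use the fact that the association rules are designed to mirror the typing rules compositionally. I will need the multiplicity bookkeeping to match, including context splitting \(\Tctx_1 + \mult\Tctx_2\) and scaling by \(\many\). For the operators \(\op\) and the monad constructors \(\mo\), the source occurrence is just \(\op\kd\seq{\term}\) on both sides, so again one association rule per typing rule should suffice. Polymorphism, including rank-2 types over lifetimes used by \(\newLifetime\) and \(\execBO\), needs a substitution property. Specifically, association should be stable under substituting types, lifetimes and multiplicities for \(\anyvar\), so that the rules for \(\all{\anyvar}\) introduction and elimination go through.

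I expect the main obstacle to lie in how the association system handles laziness and polymorphism. If its term-level rules are semantic, for example phrased so that a term is associated when every reduct of it is associated, then the diagonal property is no longer syntactic. In that case it has to be shown, for instance by a logical-relations argument. My first attempt would nonetheless be to check that the appendix's association rules include, for every typing rule, a structural counterpart that ignores ghost resources. With such counterparts the induction above is direct, and the only nontrivial lemmas are weakening for unrestricted variables and type substitution. Failing that, I would strengthen the induction hypothesis to open terms under environments that are already associated and reuse the compositional rules.
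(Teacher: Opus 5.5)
Your proposal follows the paper's proof: the paper proves exactly the diagonal lemma \(\Tctx \vdash \term \ccol \Ty \Rightarrow \Tctx \vdash \term \assoc \term \ccol \Ty\) by induction on the typing derivation, and then obtains the theorem by instantiating the configuration rule with the single binding, empty memory and no borrow ids. One case lacks a structural counterpart, namely typing subsumption, because the association judgment has no type-weakening rule; there you need the separate lemma that association is closed under subtyping (\cref{lem:assoc-subty}), which is more than the type-substitution lemma you list, since it must also cover function, data, borrower and lender subtyping.
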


We use the term \emph{safety} on configurations for the existence of an association.
More concretely, a mutative configuration \(\MCfg\) is safe if \(\MCfg \assoc_\Ty \DCfg\) for some \(\DCfg\) and \(\Ty\),
and a denotational configuration \(\DCfg\) is safe if \(\MCfg \assoc_\Ty \DCfg\) holds for some \(\MCfg\) and \(\Ty\).

Remarkably, safety ensures \emph{leak freedom} for a mutative configuration:
\begin{theorem}[Safety ensures leak freedom]
\label{thm:safe-leakfree}
  If \(\var \teq \mval, \MEnv; \Mem; \var\) is safe, then \(\Mem\) is empty.
\end{theorem}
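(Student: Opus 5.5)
We prove \cref{thm:safe-leakfree} by unfolding the definition of safety: it gives a denotational configuration \(\DCfg\) and a type \(\Ty\) with \(\var \teq \mval, \MEnv; \Mem; \var \assoc_\Ty \DCfg\). The idea is that the association judgment is a resource-sensitive (linear) relation. Every memory cell \(\loc \mMapsto \varB\) in \(\Mem\) must be \emph{owned} exactly once by some ghost resource in the association context. Concretely, I expect the configuration-level rule for \(\assoc_\Ty\) to split \(\Mem\) among the term-level judgments \(\Rctx \kvdash \mterm \assoc \dterm \kccol \RTy\) of the environment bindings, much as a separation-logic points-to is distributed. So the plan is to show that, when the root variable is bound to a value \(\mval\) associated at a closed top-level type \(\Ty\), no binding that is reachable or accounted for can own any location. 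Emptiness of \(\Mem\) then follows from the exhaustive accounting.

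\textbf{Step 1: the ownership invariant.} From the definition of \(\MCfg \assoc_\Ty \DCfg\), I will extract a lemma. Every location \(\loc\) in \(\Mem\) is held by some ghost resource in the combined context, say an owned reference \(\kRef \loc\), a lender, or a borrower path. That resource is in turn consumed by exactly one binding of the environment, or by the root. This is the leak-freedom counterpart of linearity: the configuration rule should forbid ``dangling'' ownership, since resources must be consumed linearly rather than affinely at the top level. Borrowers are the exception, being affine. However, the cell they point into is still owned by the lender (or collected into a lifetime token), so ownership ultimately rests with a linear resource.

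\textbf{Step 2: the root value owns nothing.} Here I use the shape of \(\Ty\). The root type at the top level is the result of a closed program, as in \cref{thm:typed-assoc}. So the root is consumed unrestrictedly (multiplicity \(\many\)), and a value associated under an unrestricted context can hold no linear ghost resource such as \(\kRef \loc\), \(\Lend\), \(\Now\), or \(\Linearly\). This needs a lemma that association at multiplicity \(\many\) forces the resource context to be duplicable, mirroring the corresponding typing lemma.

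\textbf{Step 3: other bindings own nothing.} Every other binding in \(\MEnv\) is either used by the root, which is impossible since the root owns nothing linear, or is garbage. Bindings are only introduced with multiplicities that are ultimately charged to the root's context, so garbage holding linear resources would violate the accounting. An induction over the dependency order of \(\MEnv\), from the root outward, shows that each binding carries an empty linear resource context. Combined with Step 1, this yields \(\Mem = \emptyset\).

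\textbf{Main obstacle.} I expect the hardest part to be Step 3 in the presence of lifetimes and borrowing. A lender \(\Lend\) or a dead token \(\End\) may implicitly own cells reclaimed through histories, and lazily unevaluated thunks may mention locations only indirectly. I would first try to strengthen the invariant so that the ghost context for each binding records the multiplicity with which it is demanded. Then the only way for a location to be ``parked'' is inside a linearly demanded binding, and such bindings are all reachable from the root with multiplicity \(\one\). That contradicts the root being unrestricted, except through \(\kRef\) values that appear in no value form at type \(\Ty\).
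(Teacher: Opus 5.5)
There is a genuine gap, in Steps~2 and~3. Step~2 rests on the root being ``consumed unrestrictedly'', but the hypothesis gives you nothing of the kind. Safety only says \(\var \teq \mval, \MEnv; \Mem; \var \assoc_\Ty \DCfg\) for \emph{some} \(\DCfg\) and \emph{some} \(\Ty\). The configuration need not arise from \cref{thm:typed-assoc}, and \ref{rule:config-assoc} assigns the root no multiplicity.

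Worse, unrestrictedness would not deliver the conclusion anyway. In \ref{rule:assoc-var}, the real part \(\Rctx'\) of a binding's resources is added once and is \emph{not} scaled by \(\mult\); it may contain unevaluated bindings and points-to tokens \(\loc \mMapsto \varB'\). So take a root \(\Ur\kd \varB\) of the unrestricted type \(\Ur\kd (\Ur\kd \Int)\), whose field \(\varB\) is used at \(\many\) and bound to a partially evaluated \(\linear\kd \varC\) that still holds a reference. Then \ref{rule:assoc-var} lets that binding carry \(\loc \mMapsto \varB'\), and \ref{rule:assoc-linear} captures it as \(\loc \dmapsto \varB'\). Hence both ``a value under an unrestricted context holds no linear resource'' (Step~2) and ``each binding carries an empty linear resource context'' (Step~3) are false for this association system. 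Recording demand multiplicities, as your last paragraph suggests, would change the system rather than prove the statement about it.

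The paper's proof is just ``straightforward by definition'', and the definition works differently from what you anticipate. \ref{rule:config-assoc} does not split \(\Mem\) among per-binding judgments. It puts all of \(\Mem\), the retained bindings and the borrow ids into the one resource context of the root judgment \(\ldots, \Mem \vdash \mval \assoc \dval \ccol \Ty\), and simply discards the remaining bindings, so unreferenced garbage never needs accounting. Points-to tokens are not shareable, and \ref{rule:assoc-incl} can only drop shareable atoms, so every \(\loc \mMapsto \varB'\) must be consumed in that derivation. The configuration-level context has no ghost resources and no captured tokens, so only non-value forms can consume such a token: \(\linear\) via \ref{rule:assoc-linear}, or the \(\updateRef\) intermediates. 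A root value is not one of them.

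Your Step~1 intuition, that non-shareable resources cannot be silently dropped, is the part that matches; the multiplicity argument is not the mechanism. Be aware, however, that the case you flag as the main obstacle is exactly what this one-line argument passes over: a root value reaching such a binding through \ref{rule:assoc-var}, as in the example above. The printed rules seem to admit it. Closing it needs an extra invariant, such as reachability from \cref{thm:typed-assoc}, not finer multiplicity bookkeeping.
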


\paragraph{Progress and bisimulation}

Crucially, we expect the association system to satisfy \emph{progress} and \emph{bisimulation}.
This is a relational version of the well-known progress and preservation, the syntactic approach to type soundness \cite{WrightF94-syntax-type-sound}.

We carefully designed our formalization to satisfy these properties and wrote rough proof sketches for them.
That said, they are currently conjectures, whose rigorous proofs are left as future work and possibly require minor fixes to the association system.

\begin{conjecture}[Safety ensures progress]
\label{thm:progress}
  For any safe configuration \(\MCfg\) that is not a normal form,
  there exists \(\kb\MCfg'\) such that \(\kb\MCfg \to \MCfg'\).
  Also, for any safe configuration \(\kb\DCfg\) that is not a normal form,
  there exists \(\kb\DCfg'\) such that \(\kb\DCfg \to \DCfg'\).
\end{conjecture}

\begin{conjecture}[Association is a bisimulation]
\label{thm:bisim}
  For any type \(\Ty\), the relation \(\assoc_\Ty\) is a bisimulation with respect to mutative and denotational operational semantics.
\end{conjecture}

Bisimulation clearly implies preservation of safety:
\begin{corollary}[Preservation of safety]
\label{cor:preserv}
  Assume \cref{thm:bisim}.
  If \(\kb\MCfg\) is safe and \(\kb\MCfg \to \MCfg'\) holds, then \(\ka\MCfg'\) is safe.
  Analogously, if \(\kb\DCfg\) is safe and \(\kb\DCfg \to \DCfg'\) holds, then \(\kb\DCfg'\) is safe.
\end{corollary}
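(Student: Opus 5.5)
The plan is to unfold the definitions of safety and of bisimulation and simply chain them; no induction over reductions is needed. Throughout we take \cref{thm:bisim} as given: for every type \(\Ty\), the relation \(\assoc_\Ty\) is a bisimulation with respect to the mutative and denotational reduction relations. Concretely, this means two things whenever \(\MCfg \assoc_\Ty \DCfg\). First, if \(\MCfg \to \MCfg'\), then there is some \(\DCfg'\) with \(\DCfg \to^{(*)} \DCfg'\) and \(\MCfg' \assoc_\Ty \DCfg'\). Second, symmetrically, if \(\DCfg \to \DCfg'\), then there is some \(\MCfg'\) with \(\MCfg \to^{(*)} \MCfg'\) and \(\MCfg' \assoc_\Ty \DCfg'\). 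Here \(\to^{(*)}\) stands for whichever step relation the bisimulation uses (one step, or possibly zero or more steps in a weak bisimulation). The argument below does not depend on which.

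For the mutative half, suppose \(\MCfg\) is safe and \(\MCfg \to \MCfg'\). By the definition of safety there exist \(\DCfg\) and \(\Ty\) with \(\MCfg \assoc_\Ty \DCfg\). The first clause of the bisimulation then gives some \(\DCfg'\) with \(\MCfg' \assoc_\Ty \DCfg'\). So \(\MCfg'\) is related to a denotational configuration under some type, which is exactly the statement that \(\MCfg'\) is safe. The denotational half is the mirror image. From \(\DCfg\) safe we get \(\MCfg\) and \(\Ty\) with \(\MCfg \assoc_\Ty \DCfg\). Given \(\DCfg \to \DCfg'\), the second clause yields \(\MCfg'\) with \(\MCfg' \assoc_\Ty \DCfg'\), so \(\DCfg'\) is safe.

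The only point needing care is that the bisimulation is stated per type \(\Ty\), while safety quantifies existentially over \(\Ty\). We must therefore keep the same witness \(\Ty\) fixed when invoking \cref{thm:bisim}; the existential in the definition of safety is then reintroduced with that same \(\Ty\). Apart from this bookkeeping, the corollary is immediate and I expect no real obstacle. All the substantive difficulty is deferred to \cref{thm:bisim} itself.
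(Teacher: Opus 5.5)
Your proposal is correct and takes the same approach as the paper, which just remarks that bisimulation clearly implies preservation of safety. Your argument spells that remark out: unfold the definition of safety, fix the witness type, and apply the appropriate simulation clause of the assumed bisimulation conjecture.
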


\paragraph{Behavior uniqueness}

Finally, with the bisimulation (\cref{thm:bisim}) and the confluence of denotational operational semantics (\cref{thm:dos-confl}), we can establish the \emph{behavior uniqueness} of mutative operational semantics under safety.\footnote{
  One may wonder if the \emph{strong confluence} of mutative operational semantics under safety derives from \cref{thm:bisim} and \cref{thm:dos-confl}.
  Unfortunately, that is not the case, because the association \(\assoc_\Ty\) is not left-unique.
  Indeed, the strong confluence does not hold in a usual sense;
  for example, two reductions of distinct \(\newRef\)s may unfortunately allocate the same location \(\loc\).
  So instead, we establish the uniqueness of behavior.
}

\begin{definition}[Behavior in mutative operational semantics]
\label{def:mos-behav}
  We say \(\MCfg\) may be non-terminating if there exists an infinite reduction sequence from \(\MCfg\).
  We say \(\MCfg\) may terminate if there exists a reduction sequence from \(\MCfg\) to some normal form.
  We say \(\MCfg\) may return \(\num\) if there exists a reduction sequence from \(\MCfg\) to some configuration of the form \(\var \teq \num,\ka \MEnv;\kb \Mem;\kb \var\).
\end{definition}

\begin{corollary}[Behavior uniqueness in mutative operational semantics]
\label{cor:mos-behav-uniq}
  Assume \cref{thm:bisim}.
  Take any safe configuration \(\MCfg\).
  It is not the case that \(\kb\MCfg\) may be non-terminating and also may terminate.
  If \(\kb\MCfg\) may return \(\num\) and \(\numB\), then \(\num = \numB\).
\end{corollary}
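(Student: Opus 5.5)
The plan is to transfer behaviors from mutative to denotational semantics via the bisimulation, and then use the diamond property of \cref{thm:dos-confl} there. Fix a safe \(\MCfg\), so \(\MCfg \assoc_\Ty \DCfg\) for some \(\DCfg\) and \(\Ty\). By \cref{thm:bisim}, every mutative reduction sequence from \(\MCfg\), finite or infinite, is matched step by step by a denotational sequence from \(\DCfg\), with related configurations at each step. I will read the bisimulation as strong (one step for one step), which is how \cref{fig:reduce-example} suggests the two semantics move in lockstep. If it is only weak, I would additionally need that denotational stuttering steps are finite, and I would use the well-foundedness built into the definition of bisimulation. The matched sequences keep the following properties:
\begin{itemize}
\item an infinite mutative sequence yields an infinite denotational sequence;
\item a mutative sequence ending in a normal form yields a denotational sequence ending in a normal form, since if the denotational side could still step, bisimulation would force the mutative side to step;
\item a mutative sequence ending in \(\var \teq \num, \MEnv; \Mem; \var\) yields a denotational configuration whose root is bound to \(\num\).
\end{itemize}
The last point needs a small lemma: association at the root variable relates a mutative integer literal only to the same denotational literal. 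This should follow by inversion of the association rules at type \(\Int\).

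Next I would prove two standard consequences of the diamond property for \(\to\) on denotational configurations. Since \cref{thm:dos-confl} joins two distinct one-step reducts in exactly one step each, the usual tiling argument gives confluence. It also gives a stronger fact: if some reduction from \(\DCfg\) reaches a normal form in \(n\) steps, then every reduction sequence from \(\DCfg\) reaches that normal form in exactly \(n\) steps. In particular, no infinite sequence from \(\DCfg\) exists. I would prove this by induction on \(n\). Take a first step \(\DCfg \to \DCfg_2\) that differs from the first step \(\DCfg \to \DCfg_1\) of the terminating path. The diamond produces \(\DCfg_+\) with \(\DCfg_1 \to \DCfg_+\) and \(\DCfg_2 \to \DCfg_+\). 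The induction hypothesis at \(\DCfg_1\) says \(\DCfg_+\) reaches the normal form in \(n-2\) steps. Hence \(\DCfg_2\) reaches it in \(n-1\) steps, and the induction hypothesis applies again at \(\DCfg_2\). This excludes "may terminate and may be non-terminating" on the denotational side, and by the transfer step also on the mutative side.

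For value uniqueness, suppose \(\MCfg\) may return \(\num\) and \(\numB\). Transferring gives denotational reducts \(\DCfg_a\) and \(\DCfg_b\) of \(\DCfg\) whose roots are bound to \(\num\) and \(\numB\). By confluence they have a common reduct. A root bound to an integer literal is a value binding that no rule rewrites, since evaluation contexts only act on non-value redexes. So that binding persists, and \(\num = \numB\).

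The main obstacle is the transfer step, specifically the exact form of \cref{thm:bisim}. If the bisimulation allows stuttering, or relates configurations only up to renaming of locations or borrow ids, then the step-counting argument above must be redone modulo that equivalence. I would first try to show that \(\assoc_\Ty\) is closed under such renaming and that the diamond property is preserved by it.
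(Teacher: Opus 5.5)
Your proposal is correct and follows the paper's route. The paper's proof only cites \cref{thm:bisim} and \cref{thm:dos-confl}; you supply what that citation leaves implicit, namely transferring mutative runs to the denotational side through the (strong) bisimulation, and using the one-step diamond for both confluence and the random-descent property that rules out a terminating run coexisting with an infinite one. The root-literal lemma you flag does hold: the configuration-level association rule puts no ghost resources into the context, so the borrower rules that could wrap \(\num\) in a borrower constructor cannot fire at the root, and inversion leaves only the literal rule.
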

\begin{proof}
  By \cref{thm:bisim} and \cref{thm:dos-confl}.
\end{proof}

\section{Comparison with Existing Approaches}
\label{sect:compare}

We compare our Pure Borrow with other existing approaches to borrowing and related mechanisms for flexible resource management.

\paragraph{Programmable lifetime control}

Rust's borrow checker is highly automated but is largely a black box hard to intervene with.
In contrast, Pure Borrow provides programmers with great control of lifetimes.
A core idea is to use lifetime tokens \hsinline{Now^α}, \hsinline{End^α} that represent lifetime state, inherited from the semantic model of RustBelt \cite{JungKD18-RustBelt}.
Thanks to Haskell's advanced features, such as first-class polymorphism, programmers can build abstractions as observed in \cref{sect:overview}.
That said, more automation in Pure Borrow would also be helpful, which is left for future work.

\paragraph{First-class lenders}

In Rust, lenders are not first-class citizens and cannot be passed as arguments to or returned from functions.
This imposes practical limitations, e.g., in handling self-borrowing data types.
In contrast, our Pure Borrow treats lenders as first-class citizens, providing the first-class lender type \hsinline{Lend^α T}.
First-class lenders are quite underexplored so far,\footnote{
  First-class lenders were partly supported by \citet{NakayamaMSSK24-borrow-consort} in the context of type-based automated verification, but they did not even support nested references.
}
and it would be interesting to explore their practical usage.
For example, Pure Borrow can represent an integer vector self-borrowing some element of it as \hsinline{∃α. (Mut^α Int, Lend^α (Vector Int), Now^α)}.

\paragraph{Leak freedom}

In Rust, types are affine, not linear.
One can accidentally cause leaks in Rust, for example by constructing cyclic references under reference-counting GC \cite{BenYehuda15-JoinGuard-leak}---a finding nicknamed as \emph{Leakpocalypse}.
A possible improvement is to keep track of leakable types with a special trait \rsinline{T: Leak},\footnote{
  For example, integers \rsinline{i32} and mutable references \rsinline{&mut U} satisfy \rsinline{Leak}.
  Leak freedom would then be guaranteed by imposing \rsinline{T: Leak} on smart pointer types under reference-counting GC such as \rsinline{Rc<T>} and \rsinline{Arc<T>}.
}
but this has not been adopted so far for technical reasons \cite{Matsakis25-Forget}.
In contrast, our Pure Borrow is carefully designed to guarantee leak freedom.\footnote{
  To be fair, Rust appears to be leak-free when restricted to the subset of features currently supported by Pure Borrow.
  Still, we are not aware of any existing work that formally establishes the leak freedom of Rust for such a subset, and we conjecture that Pure Borrow could be safely extended to features like reference-counting GC with leak freedom by introducing a type class for leakable types.
}

\paragraph{Purity}

A crucial design choice of our Pure Borrow is that it focuses on a pure fragment.
As a result, unlike Rust, it does not support mutable state shared across threads such as a mutex, because the result of concurrent mutation depends on nondeterministic thread scheduling.
Still, this does not preclude sequentially shared mutable state, like \hsinline{Vector_ST^s a} in the ST monad (\cref{sect:background-st}).
Indeed, we observe that Pure Borrow can support shared mutable state by requiring a non-shareable token for each mutation to such state, an approach akin to linear constraints in Haskell \cite{SpiwackKBWE22-linear-constraint, SpiwackKBWE26-linear-constraint-arXiv} and GhostCell in Rust \cite{YanovskiDJD21-GhostCell}.

\paragraph{Oxidized OCaml}

\newcommand*{\modesty}[1]{\textsf{\textbf{#1}}}
\newcommand*{\leup}{\rotatebox{90}{\(\le\)}}
\begin{figure}
  \centering\vspace{-.3em}
  \begin{tikzpicture}[
    tagnode/.style = {font = \sffamily\small}]
    \matrix[matrix of nodes,
      row 1/.style = {row sep = -.35em},
      row 2/.style = {row sep = .5em},
      row 3/.style = {row sep = -.5em},
      column sep = -.1em,
      column 1/.style = {column sep = .8em},
      column 4/.style = {column sep = .8em}] {
      & \node[tagnode]{comonad}; & & & &
        \node[tagnode, yshift=.1em]{\clap{comonad}}; \\
      \sffamily Linearity & \modesty{many} & \(\le\) & \modesty{once} &
        \hsinline{Ur a -o a} & &
        \hsinline{Ur a -o Ur (Ur a)} \\
      \sffamily Uniqueness & \modesty{unique} & \(\le\) & \modesty{aliased} &
        \hsinline{a -o (Mut^α a, Lend^α a)} & &
        \hsinline{Mut^α (Mut^β a) -o Mut^α∧β a} \\
      & & & \node[tagnode]{monad}; & &
        \node[tagnode, yshift=-.1em]{\clap{monad-like}}; \\
    };
  \end{tikzpicture}\vspace{-.3em}
  \caption{Linearity and uniqueness axes in OxCaml and Pure Borrow.}
  \label{fig:linear-unique}
\end{figure}

Recently, inspired by the success of Rust, there is a movement to ``oxidize'' OCaml, i.e., strengthen OCaml with resource-related information for better safety and performance \cite{LorenzenWDEL24-OxCaml, GeorgesPEWDECPD25-OxCaml}.
This idea has been embodied as a practical language extension, OxCaml \cite{OxCaml}, already in use for production code.

While this has a similar spirit to our Pure Borrow, their goal is quite different.
First, they do not care much about rigorous purity, as OCaml allows side effects anywhere, unlike Haskell.
Also, they aim to keep annotations for resources very lightweight, even maintaining backward compatibility with legacy OCaml.
They add coarse-grained labels about resources, called \emph{modes}.
We note that their lifetime-related modes are very coarse-grained: just \modesty{local} and \modesty{global}.
They provide an API for temporarily using a \modesty{global} object \modesty{local}ly, which is somewhat like simple borrowing but without fine-grained scoping.
Rust-style borrowing of our Pure Borrow is much more expressive and fine-grained, especially with the support of reborrowing.

An interesting high-level comparison between our approach and theirs can be made in the view of linearity and uniqueness, as summarized in \cref{fig:linear-unique}.
As clarified by \citet{MarshallVO22-linear-unique}, linearity is the inability to create (and drop) an alias, and uniqueness is the absence of aliases.
OxCaml introduces modes \modesty{many} vs. \modesty{once} for linearity and \modesty{unique} vs. \modesty{aliased} for uniqueness.
\citet{LorenzenWDEL24-OxCaml} point out that \modesty{many} is a comonad and \modesty{aliased} is a monad.
Linear Haskell has the linearity axis, with \hsinline{Ur} being a comonad and an exact counterpart of \modesty{many}.
Notably, Rust-style borrowing of our Pure Borrow can be understood as a mechanism for allowing aliases, akin to \modesty{aliased}, but with access separated by lifetimes.
Borrowers \hsinline{Mut^α}/\hsinline{Share^α} and lenders \hsinline{Lend^α} assert a possible presence of lifetime-delimited aliases---a lender for \hsinline{Mut^α}, borrowers for \hsinline{Lend^α}, and a lender and shared borrowers for \hsinline{Share^α}.
Roughly speaking, borrowing \hsinline{borrow} is like monadic introduction, and borrower joining \hsinline{joinMut} is like monadic joining.
Still, \hsinline{Mut^α} is not exactly an (indexed) monad, because it is not a functor, being invariant in subtyping (\cref{fig:subty}, \cref{sect:overview-details}).
This view of borrowing as aliasing provides an exciting perspective for future work.

\paragraph{Splitting with reversion}
\label{sect:compare-revert}

\citet{SpiwackKBWE22-linear-constraint} proposed an approach to splitting ownership with a mechanism for reverting that.
They introduce, for example, a vector type \hsinline{Vector a n} with a phantom parameter \hsinline{n} for an abstract address.
In their API, the vector can be split into two vectors \hsinline{Vector a l} and \hsinline{Vector a r} for some \hsinline{l} and \hsinline{r} under a peculiar constraint \hsinline{Slices n l r}, and they can be later joined back under that constraint.
The core drawback of this approach is that the structure of a split should be \emph{statically} determined and tracked, so that phantom post-processing can be performed later to revert the split and join tokens.
In contrast, our Rust-style borrowing enables dynamic, flexible splitting of ownership, without post-processing for reversion.

\paragraph{Destination passing}

Destination passing \cite{Minamide98-data-hole} is a long-known idea for achieving efficient mutation in functional languages, and its type-safe formulation has been recently explored \cite{Bagrel24-destination-Haskell, BagrelS25-destination-calculus}.
At a high level, destination passing works like borrowing:
the type \(S \ltimes (\floor{T_1} \otimes \cdots \otimes \floor{T_n})\) roughly denotes an object \(S\) lending its components \(T_1, \ldots, T_n\) as holes.
Its core difference from our Pure Borrow is that one should directly modify the lender type (on the right-hand side of \(\ltimes\)) each time one returns ownership to fill a hole.

\paragraph{Rust-style borrowing in functional languages}

\citet{WagnerGMLA25-linearity-to-borrowing} presented how to augment a linear type system of an ML-style functional language to support Rust-style borrowing.
They also proved the soundness of the resulting type system semantically using a new separation logic that supports borrowing.
This work is interesting, and their high-level goal is relevant to ours.
However, they did not clarify the purity of Rust-style borrowing, as their calculus is impure and their metatheory does not account for functional behavior.
Also, they developed a new type system with special typing rules for borrowing and lifetimes, whereas our Pure Borrow works in existing Linear Haskell without modification.

\paragraph{Monadic regions}

Some studies on monadic regions \cite{FluetM04-monad-region,FluetM06-monad-region-journal,KiselyovS08-light-monad-region} introduced an extension of the ST monad with region inclusion for better resource management.
While they lack leak freedom, parallelism and borrowing, their monad is, at a high level, akin to our \hsinline{BO} monad, with region inclusion corresponding to (the dual of) lifetime inclusion.

\section{Related and Future Work}
\label{sect:related}

\paragraph{Pure model of mutable borrowing}

Our use of histories for purely modeling mutable borrowing is simple but novel, and interestingly different from existing known approaches.

A notable existing technique proposed by RustHorn \cite{MatsushitaTK20-RustHorn, MatsushitaTK21-RustHorn-journal} is to model Rust's mutable borrowing functionally by \emph{prophecies} \cite{AbadiL88-refinement}.
Prophecies are a kind of dual of histories and allow getting information about the \emph{future}, instead of the past, in advance.
In RustHorn, each mutable borrower is modeled using a prophecy that `foresees' the final value of the borrowed content.
Prophecies can model mutable borrowing in a compact way and thus are useful in automated functional verification, as demonstrated by RustHorn and its descendant Creusot \cite{DenisJM22-Creusot}.
However, prophecies are not suitable for our denotational operational semantics, because they require nondeterminism and a precise analysis of the time point at which the ownership is released.

Another existing approach is to translate Rust programs into purely functional languages.
This has been explored by Electrolysis \cite{Ullrich16-Electrolysis} and Aeneas \cite{HoP22-Aeneas, Ho24-Aeneas}.
However, they are quite restrictive in terms of the supported patterns of mutable borrowing.\footnote{
  Aeneas cannot handle nested mutable borrowers (e.g., \rsinline{&mut &mut i32}) as inputs or outputs of functions \cite[Chapter 10]{Ho24-Aeneas}.
  Electrolysis also has this restriction \cite[\S 6.2]{Ullrich16-Electrolysis}.
}
Unlike such existing approaches, our model of mutable borrowing using histories can purely model arbitrary borrowing patterns.

\paragraph{Type soundness proof of Rust-style borrowing}

Our metatheory works toward proving type soundness using an association system, a binary version of a type system, and establishes progress and bisimulation (a variant of preservation) in the style of \citet{WrightF94-syntax-type-sound}.
At a high level, this strategy is similar to the correctness proof of RustHorn \cite{MatsushitaTK20-RustHorn, MatsushitaTK21-RustHorn-journal}, which proved the equivalence of Rust programs and their prophecy-based semantics through a bisimulation.
Still, technically, our proof strategy is quite different from theirs.
We deal with \emph{histories} instead of prophecies, and also tackle advanced features of Haskell, such as first-class polymorphism and lazy evaluation, unlike for Rust.

RustBelt \cite{JungKD18-RustBelt} and its descendants \cite{DangJKD20-RustBelt-relaxed, MatsushitaDJD22-RustHornBelt, GaherSJKD24-RefinedRust, MatsushitaT25-Nola} give a semantic, extensible soundness proof for Rust's ownership type system extended with some APIs based on the technique of semantic typing, or logical relations \cite{TimanyKDB24-logical-type-sound}, using the separation logic Iris \cite{JungSSSTBD15-Iris, JungKJBBD18-Iris-ground-up}.
Finding such a semantic soundness proof for our Pure Borrow framework is a major challenge left for future work.
All known separation logics that support borrowing, including Iris, are fundamentally affine and do not guarantee leak freedom.
There exists a leak-free separation logic that supports invariants \cite{BizjakGKB19-Iron}, but not yet one that supports borrowing.
Also, little exploration has been done on separation logic supporting lazy evaluation.

\paragraph{Proving purity}

Our metatheory \cref{sect:metatheory-metatheory} works toward proving the behavior uniqueness of the physical execution model, mutative operational semantics, via a bisimulation with a special execution model, denotational operational semantics, which is designed to be strongly confluent.
However, it does not completely establish purity in a strict sense, because denotational operational semantics has a subtle side effect: generation of a fresh borrow id \(\bid\).
The effect of id generation has been theoretically studied for \(\nu\)-calculus \cite{PittsS93-nu-calculus}, exploring non-trivial program equations (contextual equivalences) as well as game semantics \cite{AbramskyGMOS04-nu-calculus-game}.
As for Pure Borrow, we conjecture that purity holds despite id generation, thanks to access to each id \(\bid\) being restricted to the mutable borrowers and the lender, which can live only inside the linear thunk for the borrow.
A significant challenge for future work is to formulate what purity under linearity is in the first place and prove that Pure Borrow indeed satisfies it.

\paragraph{Equational theory}

One benefit of the purity of computation is strong support for an equational theory, which establishes non-trivial contextual equivalences between programs.
It is left for future work to develop a good equational theory for our Pure Borrow framework, prove its soundness, and explore optimization and verification based on the theory.

One challenge is that the denotations of programs naïvely induced from our denotational operational semantics \cref{sect:metatheory-opsem} (or the denotational dynamic semantics of \citet{BernardyBNJS18-Linear-Haskell}) do not satisfy full abstraction (i.e., contextually equivalent programs can have different denotations), due to linearity constraints on well-typed contexts.\footnote{
  For example, let us consider two functions \hsinline{f :: \\ !_ -> 0} and \hsinline{g :: \\ !_ -> 42} of the type \hsinline{Vector () -> Int}.
  They would have different denotations naïvely in our denotational semantics.
  However, they are contextually equivalent under the linear type system, because the type system disallows sharing a linear vector \hsinline{Vector ()} under an unrestricted arrow \hsinline{->} (rather than \hsinline{-o}), making it impossible to call \hsinline{f} and \hsinline{g} to distinguish them.
}

\paragraph{Composability with other linear effects}

It is interesting to explore what kind of linear effects our \hsinline{BO} monad can compose with.
This is non-trivial, especially because \hsinline{BO} supports a \emph{parallel} execution primitive \hsinline{parBO}.
We conjecture that the monad is composable with the \hsinline{Reader r} monad for a duplicable type.
Further exploration is left for future work.

\begin{acks}
We would like to express our sincere gratitude to Yudai Tanabe, Taro Sekiyama and Atsushi Igarashi for their valuable feedback and suggestions during the early stages of this work.
We heartily thank Arnaud Spiwack for his insightful and encouraging feedback.
We also thank the anonymous reviewers for their valuable feedback.
This research was supported in part by the Hakubi Project at Kyoto University and JSPS KAKENHI Grant Numbers JP24KJ0133 and JP20H00582 for the first author, who is also hosted by MPI-SWS as a JSPS Overseas Research Fellow.
\end{acks}

\section*{Data Availability Statement}

Our Haskell implementation of Pure Borrow and our benchmark suite are archived on Zenodo at \cite{MatsushitaIshii26-Pure-Borrow-artifact}.
The latest version is maintained as a public repository at \url{https://github.com/SoftwareFoundationGroupAtKyotoU/pure-borrow}.

\bibliographystyle{ACM-Reference-Format}
\bibliography{./bib.bib}

\iffull
\appendix
\appendixsetup
\section{Formal Calculus}
\label{appx:formal}

We present the omitted details of our formal calculus introduced in \cref{sect:metatheory}.

\subsection{Type System}
\label{appx:formal-type}

We present the omitted details of our type system introduced in \cref{sect:metatheory-type}.

\paragraph{Basics}

The typing context \(\Tctx\) consists of items of the form \(\var \ccoll{\mult} \Ty\), with the multiplicity \(\mult\).\footnote{
  The order of items in a typing context \(\Tctx\) is ignored.
  For each variable \(\var\), a typing context \(\Tctx\) can contain at most one item of the form \(\var \ccoll{\mult} \Ty\).
  We write \(\empseq\) for the empty typing context.
  The domain \(\dom \Tctx\) is defined as \(\set{\ka \var}{\var \ccoll{\mult} \Ty \kb\in\kb \Tctx \ka}\).
}

We define the multiplicity product \(\mult \cdot \multB\) naturally, i.e.,
\(\one \cdot \mult \ka\defeq\ka \mult \cdot \one \ka\defeq\ka \mult\),
\(\many \cdot \mult \ka\defeq\ka \mult \cdot \many \ka\defeq\ka \many\),
\(\prod \Seq{\multvar} \cdot \prod \Seq{\multvarB} \ka\defeq\ka \prod (\Seq{\multvar}, \Seq{\multvarB})\).
This is introduced to define the typing context multiplication \(\mult \Tctx\) (\cref{fig:tctx-ops}).
The formal product of variables \(\prod \Seq{\multvar}\) is introduced for the multiplicity product.\footnote{
  In \(\prod \Seq{\multvar}\), the sequence \(\Seq{\multvar}\) should be non-empty, and its order is ignored.
  The substitution \(\paren{\prod \Seq{\multvar}} \sqbr{\Seq{\multB/\multvarB}}\) is defined by the multiplicity product.
}

\paragraph{Inclusion and subtyping}

The lifetime inclusion \(\lft \le \lftB\) is the partial order over lifetimes generated by the rules \(\lft \le \static\), \(\lft_0 \lftand \lft_1 \le \lft_\idx\) and ``if \(\lftB \le \lft_0\) and \(\lftB \le \lft_1\), then \(\lftB \le \lft_0 \lftand \lft_1\)''.
The multiplicity inclusion \(\mult \le \multB\) is the partial order over multiplicities generated by the rules \(\one \le \mult\), \(\mult \le \many\), \(\mult_\idx \le \mult_0 \cdot \mult_1\), and ``if \(\mult_0 \le \multB\) and \(\mult_1 \le \multB\), then \(\mult_0 \cdot \mult_1 \le \multB\)''.

\begin{figure}
  \small\tightmathenv
  \begin{gather*}
    \Ty \subty \Ty
  \hspace{3em}
    \frac{
      \Ty \subty \Ty'
    \hspace{1.5em}
      \Ty' \subty \Ty''
    }{
      \Ty \subty \Ty''
    }
  \hspace{3em}
    \paren{\all{\anyvar} \Ty} \ka\subty\ka \Ty \ka\sqbr{\any \ky/ \anyvar}
  \hspace{3em}
    \frac{
      \text{\(\anyvar\) is fresh in \(\TyB\)}
    \hspace{1.5em}
      \TyB \subty \Ty
    }{
      \TyB \ka\subty\ka \all{\anyvar} \Ty
    }
  \\[.2em]
    \frac{
      \Ty' \subty \Ty
    \hspace{1em}
      \TyB \subty \TyB'
    \hspace{1em}
      \mult \le \multB
    }{
      \Ty \to_\mult \TyB \ke\subty\ke \Ty' \to_\multB \TyB'
    }
  \hspace{3em}
    \frac{
      \all{\idx} \Ty_\idx \sqbr{\Seq{\TyB \ky/ \Tvar}} \kb\subty\kb
        \Ty_\idx \sqbr{\Seq{\TyB' \ky/ \Tvar}}
    \hspace{1.5em}
      \datatemplate
    }{
      \Tcon\kd \seq{\TyB} \ka\subty\ka \Tcon\kd \seq{\TyB'}
    }
    \tagsp\ruletag{subty-data}
  \\[.2em]
    \frac{
      \Ty \subty \TyB
    }{
      \kRef \Ty \subty \kRef \TyB
    }
  \hspace{3em}
    \frac{
      \lftB \le \lft
    }{
      \End^\lft \ka\subty\kb \End^\lftB
    }
  \hspace{3em}
    \frac{
      \lftB \le \lft
    \hspace{1.5em}
      \Ty \subty \TyB
    \hspace{1.5em}
      \TyB \subty \Ty
    }{
      \Mut^\lft\kd \Ty \kb\subty\kb \Mut^\lftB\kd \TyB
    }
  \\[.2em]
    \frac{
      \lftB \le \lft
    \hspace{1.5em}
      \Ty \subty \TyB
    }{
      \Share^\lft\kd \Ty \kb\subty\kb \Share^\lftB\kd \TyB
    }
  \hspace{3em}
    \frac{
      \lft \le \lftB
    \hspace{1.5em}
      \Ty \subty \TyB
    }{
      \Lend^\lft\kd \Ty \kb\subty\kb \Lend^\lftB\kd \TyB
    }
  \hspace{3em}
    \frac{
      \lftB \le \lft
    \hspace{1.5em}
      \Ty \subty \TyB
    }{
      \BO^\lft\kd \Ty \ka\subty\kb \BO^\lftB\kd \TyB
    }
  \end{gather*}
  \caption{Subtyping rules.}
  \label{fig:subty-rules}
\end{figure}

\begin{figure}
  \tightmathenv
  \[
    \Tctx \subty \Tctx
  \hspace{3em}
    \frac{
      \Tctx \subty \Tctx'
    \hspace{1em}
      \Tctx' \subty \Tctx''
    }{
      \Tctx \subty \Tctx''
    }
  \hspace{3em}
    \frac{
      \multB \le \mult
    \hspace{1.5em}
      \Ty \subty \TyB
    }{
      \var \ccoll{\mult} \Ty,\kb \Tctx \kb\subty\kb \var \ccoll{\multB} \TyB,\kb \Tctx
    }
  \hspace{3em}
    \var \ccoll{\many} \Ty,\kb \Tctx \kb\subty\kb \Tctx
  \]
  \tightcaption
  \caption{Rules for typing context inclusion.}
  \label{fig:tctx-incl}
\end{figure}

Our type system introduces subtyping natively, without a function like \hsinline{upcast} (\cref{fig:subty}, \cref{sect:overview-details}) in our Haskell implementation.
The rules for the subtyping \(\Ty \subty \TyB\) are listed in \cref{fig:subty-rules}.
To be precise, we use mixed induction (i.e., coinduction over induction) to reason about recursive data types, following the approach of \citet{DanielssonA10-subtype-declarative}.
More specifically, the subtyping rule \ref{rule:subty-data} for the user-defined data type \(\Tcon\) can be applied coinductively, whereas the other rules can only be applied inductively.
Overall, the subtyping relation is defined as the greatest fixpoint over the least fixpoint.

We also introduce the typing context inclusion \(\Tctx \subty \TctxB\), which is inductively defined by the rules listed in \cref{fig:tctx-incl}.
We can weaken the type of a variable by subtyping.
A variable of the unrestricted multiplicity \(\many\) can be forgotten and also turned into the multiplicity \(\one\).

\paragraph{Operations on typing contexts}

\begin{figure}
  \tightmathenv
  \begin{gather*}
    \mult\ka \empseq \ke\defeq\ke \empseq
  \hspace{3em}
    \mult\ka \paren{\var \ccoll{\multB} \Ty,\kb \Tctx} \ke\defeq\ke
      \var \ccoll{\mult \cdot \multB} \Ty,\kb \mult\ka \Tctx
  \hspace{3em}
    \empseq + \TctxB \ke\defeq\ke \TctxB
  \\[.2em]
    \paren{\var \ccoll{\many} \Ty,\kb \Tctx} \ka+\ka \paren{\var \ccoll{\many} \Ty,\kb \TctxB} \ke\defeq\ke
      \var \ccoll{\many} \Ty,\kb \paren{\Tctx + \TctxB}
  \hspace{3em}
    \frac{
      \kc\var \ka\notin\ka \dom \TctxB
    }{
      \paren{\var \ccoll{\mult} \Ty,\kb \Tctx} \ka+\ka \TctxB \ke\defeq\ke
        \var \ccoll{\mult} \Ty,\kb \paren{\Tctx + \TctxB}
    }
  \end{gather*}
  \caption{Operations on typing contexts.}
  \label{fig:tctx-ops}
\end{figure}

Following \citet{BernardyBNJS18-Linear-Haskell}, we introduce the multiplication \(\mult \Tctx\) and the sum \(\Tctx + \TctxB\) over typing contexts, defined in \cref{fig:tctx-ops}.
Intuitively, \(\Tctx + \TctxB\) allows sharing unrestricted variables between \(\Tctx\) and \(\TctxB\).

\paragraph{Typing rules}

\begin{figure}
  \tightmathenv
  \begin{gather*}
    \data\kf \paren{} \kf\where\kf \paren{} \ccol \paren{}
  \hspace{4em}
    \data\kf \Bool \kf\where\kh
      \True \ccol \Bool,\kd \False \ccol \Bool
  \\[.2em]
    \data\kf \Ur\ke \Tvar \kf\where\kh
      \Ur \kccol \Tvar \to \Ur\ke \Tvar
  \hspace{4em}
    \data\kf (\Tvar, \TvarB) \kf\where\kh
      \paren{{,}} \kccol \Tvar \lto \TvarB \lto (\Tvar, \TvarB)
  \end{gather*}
  \caption{Default data type declarations.}
  \label{fig:data-decls}
\end{figure}

\begin{figure}
  \small\tightmathenv
  \begin{gather*}
    \frac{
      \TctxB \subty \Tctx
    \hspace{1em}
      \Tctx \vdash \term \ccol \Ty
    }{
      \TctxB \vdash \term \ccol \Ty
    }
  \hspace{3em}
    \frac{
      \Tctx \vdash \term \ccol \Ty
    \hspace{1em}
      \Ty \subty \TyB
    }{
      \Tctx \vdash \term \ccol \TyB
    }
  \hspace{3em}
    \frac{
      \text{\(\anyvar\) is fresh in \(\Tctx\)}
    \hspace{1.5em}
      \Tctx \vdash \term \ccol \Ty
    }{
      \Tctx \vdash \term \ccol \all{\anyvar} \Ty
    }
  \hspace{3em}
    \var \ccoll{\mult} \Ty \kvdash \var \ccol \Ty
  \\[.2em]
    \frac{
      \all{\idx}\ke \Tctx_\idx \kvdash \term \ccol \Ty_\idx
    \hspace{2em}
      \Seq{\var \ccoll{\mult} \Ty},\kb \TctxB \kvdash \termB \ccol \TyB
    }{
      \sum_\idx \mult\ka \Tctx_\idx + \TctxB \kvdash
        \klet \Seq{\var \teq \term} \ktin \termB \kccol \TyB
    }
  \hspace{3em}
    \frac{
      \all{\idx}\ke \Seq{\var \ccoll{\many} \Ty},\kb \Tctx \kvdash \term_\idx \ccol \Ty_\idx
    \hspace{2em}
      \Seq{\var \ccoll{\many} \Ty},\kb \TctxB \kvdash \termB \ccol \TyB
    }{
      \many\ka \Tctx + \TctxB \kvdash
        \klet \Seq{\var \teq \term} \ktin \termB \kccol \TyB
    }
  \\[.2em]
    \frac{
      \var \ccoll{\mult} \Ty,\kb \Tctx \kvdash \term \ccol \TyB
    }{
      \Tctx \kvdash \tlamx{\var} \term \kccol \Ty \to_\mult \TyB
    }
  \hspace{3em}
    \frac{
      \Tctx \kvdash \term \kccol \Ty \to_\mult \TyB
    \hspace{2em}
      \TctxB \kvdash \termB \kccol \Ty
    }{
      \Tctx + \mult\ka \TctxB \kvdash \term\kd \termB \kccol \TyB
    }
  \hspace{3em}
    \frac{
      \var \ka\in\ka \dom \Tctx
    \hspace{2em}
      \Tctx \vdash \term \ccol \Ty
    }{
      \Tctx \kvdash \kseq \var\kd \term \kccol \Ty
    }
  \\[.2em]
    \vdashk \num \ccol \Int
  \hspace{3em}
    \frac{
      \all{\idx}\ke \Tctx_\idx \kvdash \term_\idx \kccol \Ty_{\idxB, \idx}\sqbr{\Seq{\TyB \ky/ \Tvar}}
    \hspace{2em}
      \datatemplate
    }{
      \sum_\idx \mult_{\idxB, \idx}\kb \Tctx_\idx \kvdash
      \Con_\idxB\kc \seq{\term} \kccol \Tcon\kd \seq{\TyB}
    }
  \\[.2em]
    \frac{\begin{gathered}
      \Tctx \vdash \term \ccol \Tcon\kd \seq{\TyB}
    \\[-.45em]
      \all{\idxB}\ke
        \Seq{\var \ccoll{\mult_{\idxB}}\kz \Ty_\idxB
          \sqbr{\vphantom{\hat A}\smash{\Seq{\TyB \ky/ \Tvar}}}},\kb
        \TctxB \kvdash \termB_\idxB \kccol \TyB'
    \\[-.35em]
      \datatemplate
    \end{gathered}}{
      \Tctx + \TctxB \kvdash \kcase \term \kof
        \kcurly{ \Seq{\Con\kd \seq{\var} \tto \termB} } \ccol \TyB'
    }
  \hspace{3em}
    \frac{\begin{gathered}
      \Tctx \vdash \term \ccol \Bcon^\lft\kb \paren{\Tcon\kd \seq{\TyB}}
    \\[-.45em]
      \all{\idxB}\ke
        \Seq{\var \ccoll{\mult_{\idxB}}\kz \Bcon^\lft\ka \paren{\Ty_\idxB
          \sqbr{\vphantom{\hat A}\smash{\Seq{\TyB \ky/ \Tvar}}}}},\kb
        \TctxB \kvdash \termB_\idxB \kccol \TyB'
    \\[-.35em]
      \datatemplate
    \end{gathered}}{
      \Tctx + \TctxB \kvdash \kcase \term \kof
        \kcurly{ \Seq{\Con\kd \seq{\var} \tto \termB} } \ccol \TyB'
    }
    \tagsp\ruletag{ty-case-bor}
  \\[.4em]
    \frac{
      \any \kccol \seq{\Ty};\ka \TyB
    \hspace{1.5em}
      \all{\idx}\kc \Tctx_\idx \vdash \term_\idx \ccol \Ty_\idx
    }{
      \sum_\idx \Tctx_\idx \kvdash \any\kd \seq{\term} \kccol \TyB
    }
  \hspace{3em}
    \iop \kccol \Seq{\Int};\kd \Int
  \hspace{3em}
    \irel \kccol \Seq{\Int};\kd \Bool
  \hspace{3em}
    \tpar \kccol \Ty,\kb \TyB;\kd \paren{\Ty, \TyB}
  \\[.2em]
    \frac{
      \text{\(\Ty\) is \(\Linearly\) or \(\Mut^\lft\kc \TyB\)}
    }{
      \consume \kccol \Ty;\kd \paren{}
    }
  \hspace{3em}
    \frac{
      \text{\(\Ty\) is \(\Int\), \(\End^\lft\) or \(\Share^\lft\kc \TyB\)}
    }{
      \move \kccol \Ty;\kd \Ur\kd \Ty
    }
  \hspace{3em}
    \linearly \kccol \Linearly \lto \Ur\kd \Ty;\kd \Ur\kd \Ty
  \\[.2em]
    \frac{
      \text{\(\Ty\) is \(\Linearly\), \(\kRef \TyB\), \(\Now^\lft\) or \(\Mut^\lft\kc \TyB\)}
    }{
      \withLinearly \kccol \Ty;\kd \paren{\Linearly,\ka \Ty}
    }
  \hspace{3em}
    \newRef \kccol \Ty;\kd \kRef \Ty
  \hspace{3em}
    \freeRef \kccol \kRef \Ty;\kd \Ty
  \\[.2em]
    \newLifetime \kccol \paren{\all{\lftid}\ka \Now^{\ka\al \lftid} \nklto \Ty};\kd \Ty
  \hspace{3em}
    \tendLifetime \kccol \Now^{\ka\al \lftid};\kd
      \End^{\ka\al \lftid}
  \\[.1em]
    \borrow \kccol \Linearly,\kb \Ty;\kd
      \paren{\Mut^\lft\kc \Ty,\kb \Lend^\lft\kc \Ty}
  \hspace{3em}
    \share \kccol \Mut^\lft\kc \Ty;\kd \Ur\kd \paren{\Share^\lft\kc \Ty}
  \\[.0em]
    \frac{
      \text{\(\Ty\) is \(\Int\), \(\End^\lftB\) or \(\Share^\lftB\kc \TyB\)}
    }{
      \tcopy \kccol \Share^\lft\kc \Ty;\kd \Ty
    }
  \hspace{2.5em}
    \joinMut \kccol
      \Bcon^\lft\kb \paren{\Mut^\lftB\kc \Ty};\kd \Bcon^{\lft \lftand \lftB}\kc \Ty
  \hspace{2.5em}
    \reclaim \kccol \Lend^\lft\kc,\kb  \Ty\End^\lft;\kd \Ty
  \\[.3em]
    \execBO \kccol \Now^\lft,\kb \BO^\lft\kc \Ty;\kd
      \paren{\Now^\lft, \Ty}
  \hspace{2.5em}
    \pure \kccol \Ty;\kd \BO^\lft\kd \Ty
  \hspace{2.5em}
    \paren{{\bind}} \kccol
      \BO^\lft\kc \Ty,\kb \Ty \nklto \BO^\lft\kc \TyB;\kd
      \BO^\lft\kc \TyB
  \\[.2em]
    \sexecBO \kccol
      \Now^\lft,\kb \BO^{\lft \lftand \lftB}\kc \TyB;\kd
      \BO^\lftB\kc \paren{\Now^\lft,\ka \TyB}
  \hspace{3em}
    \parBO \kccol \BO^\lft\kc \Ty,\kb \BO^\lft\kc \TyB;\kd
      \BO^\lft\kb \paren{\Ty, \TyB}
  \\[.1em]
    \deref \kccol \Bcon^\lft\kb \paren{\kRef \Ty};\kd
      \BO^\lft\kb \paren{\Bcon^\lft\kc \Ty}
  \hspace{1.5em}
    \frac{
      \lftB \le \lft
    }{
      \updateRef \kccol
        \Ty \nklto \BO^\lftB\ka \paren{\TyB, \Ty},\kc
        \Mut^\lft\ka \paren{\kRef \Ty};\ke
        \BO^\lftB\ka \paren{\TyB,\kb \Mut^\lft\ka \paren{\kRef \Ty}}
    }
  \end{gather*}
  \caption{Typing rules.}
  \label{fig:typing}
\end{figure}

The typing rules are listed in \cref{fig:typing}.
For this, we assume the default data type declarations listed in \cref{fig:data-decls}.
The typing judgment has the form \(\Tctx \vdash \term \ccol \Ty\).
For operators \(\op\) and monad constructors \(\mo\), we introduce an auxiliary judgment \(\any \ccol \seq{\Ty}; \TyB\) (where \(\any\) is either \(\op\) or \(\mo\) and the length of \(\seq{\Ty}\) is the arity of \(\any\)), meaning that \(\any\) linearly inputs arguments of the type \(\seq{\Ty}\) and outputs an object of the type \(\TyB\).

We have two typing rules for typing the \(\case\) term.
The latter \ref{rule:ty-case-bor} is for the case where the pattern-matched term is a borrower \(\Bcon^\lft\kb \paren{\Tcon\kd \seq{\TyB}}\).
We distribute the borrower constructor \(\Bcon^\lft\) to the components \(\seq{\var}\) of the data.
This is an idealized version of the API functions such as \hsinline{splitPair} (\cref{fig:apis-split}, \cref{sect:overview-details}) in our Haskell implementation.

\subsection{Two Operational Semantics}
\label{appx:formal-opsem}

We present the omitted details of our two operational semantics introduced in \cref{sect:metatheory-opsem}.

\paragraph{Basics}

In denotational operational semantics, we use a borrower constructor \(\DBcon\), where a mutable borrower \(\Mut^{\seq{\bpath}}\) is marked with borrow paths \(\seq{\bpath}\), representing which part of which lenders the borrower refers to.
Here, we can have multiple borrow paths due to the reborrowing caused by \(\joinMut\), as clarified by the reduction rule \ref{rule:dred-join-bor} shown later in \cref{fig:red-borrow-dos}.
We use a borrower context \(\DBctx\) for a composite of borrower constructors.
We write \(\DBctx\kd \any\) instead of \(\DBctx \sqbr{\any}\) for readability.

In both semantics, an environment \(\MEnv\), \(\DEnv\) assigns a term \(\mterm\), \(\dterm\) to each variable \(\var\).
A denotational environment \(\DEnv\) additionally manages the borrow ids \(\seq{\bid}\) that have been created.\footnote{
  The order of items in an environment \(\MEnv\), \(\DEnv\) is ignored.
  An environment \(\MEnv\), \(\DEnv\) can have at most one item of the form \(\var \teq \any\) for each variable \(\var\).
  Also, a denotational environment \(\DEnv\) can have at most one occurrence of each borrow id \(\bid\).
}
The memory \(\Mem\) appears only in mutative semantics.\footnote{
  The order of items in a memory \(\Mem\) is ignored.
  A memory \(\Mem\) can have at most one item of the form \(\loc \mMapsto \var\) for each location \(\loc\).
}
A configuration has the form \(\MEnv; \Mem; \var\) in mutative semantics and \(\DEnv; \var\) in denotational semantics, where \(\var\) is the top-level variable.\footnote{
  We assume alpha-equivalence over mutative configurations \(\MEnv; \Mem; \var\) with respect to variable names \(\var\) bound in the environment \(\MEnv\), and assume alpha-equivalence over denotational configurations \(\DEnv; \var\) with respect to variable names \(\var\) and borrow id names \(\bid\) bound in the environment \(\DEnv\).
}
We say that \(\MEnv; \Mem; \var\) is a normal form if \(\MEnv\) has an item of the form \(\var \teq \mval\).
Likewise, we say that \(\DEnv; \var\) is a normal form if \(\DEnv\) has an item of the form \(\var \teq \dval\).

Strictness works as follows in this language.
For simplicity, all the arguments of operators are made strict.
In particular, \(\tpar\kd \term\kd \termB\) forces the evaluation of \(\term\) and \(\termB\) before returning the pair of their results.
Also, we introduce a special term \(\tseq\kd \var\kd \term\) that forces the evaluation of the variable \(\var\) before returning \(\term\);
it amounts to \hsinline{_do !x <- x; t} in Haskell, and we introduce \(\tseq\) as a primitive for the bang \hsinline{!} pattern.
The usual function application is lazy.
Pattern matching of \(\case\) is strict.

\begin{figure}
  \subcaptionsetup{aboveskip=1pt}
  \begin{subcaptionblock}{\textwidth}
    \tightmathenv
    \begin{align*}
      \stag{Denesting context} \Ntx
        & \sdef \khole\kd \term
        \sor \var\kd \khole
        \sor \kseq \var\kd \khole
        \sor \Con\kd \seq{\var}\kd \khole\kd \seq{\term}
    \\[-.2em] & \sskip
        \sor \kcase \khole \kof
          \kcurly{ \Seq{\Con\kd \seq{\var} \tto \term} }
        \sor \op\kd \seq{\var}\kd \khole\kd \seq{\term}
        \sor \mo\kd \seq{\var}\kd \khole\kd \seq{\term}
    \end{align*}\vspace{-.5em}
    \caption{Common things.}
    \label{fig:opsem-syntax-common-more}
  \end{subcaptionblock}
  \begin{subcaptionblock}{.46\textwidth}
    \tightmathenv
    \begin{align*} &
      \stag{Extra operator} \mop
        \sdef \linear \arity{1}
        \sor \exeBO \arity{1}
    \\[-.1em] & \hspace{.8em}
        \sor \execBO^\post \arity{1}
        \sor \paren{{\bind^\post}} \arity{2}
        \sor \sexecBO^\pre \arity{2}
    \\[-.1em] & \hspace{.8em}
        \sor \sexecBO^\post \arity{1}
        \sor \parBO^\post \arity{2}
        \sor \deref^\post \arity{1}
    \\[-.1em] & \hspace{.8em}
        \sor \updateRef^\pre \arity{2}
        \sor \updateRef^\prepost_\loc \arity{1}
    \\[-.1em] & \hspace{.8em}
        \sor \updateRef^\post_\loc \arity{1}
    \end{align*}\vspace{-.2em}
    \caption{For mutative semantics.}
    \label{fig:opsem-syntax-mos-more}
  \end{subcaptionblock}
  \hspace{.3em}
  \begin{subcaptionblock}{.5\textwidth}
    \tightmathenv
    \begin{gather*}
      \stag{Borrower context} \DBctx
        \sdef \khole \sor \DBcon\kd \DBctx
    \\[.1em]
    \begin{aligned} &
      \stag{Extra operator} \dop
        \sdef \linear \arity{1}
        \sor \exeBO_\Hist \arity{1}
    \\[-.3em] & \hspace{.8em}
        \sor \execBO^\post \arity{1}
        \sor \paren{{\bind^\post}} \arity{2}
        \sor \sexecBO^\pre_\Hist \arity{2}
    \\[-.3em] & \hspace{.8em}
        \sor \sexecBO^\post_{\Hist; \Histpr} \arity{1}
        \sor \parBO^\post_\Hist \arity{2}
        \sor \deref^\post_\Hist \arity{1}
    \\[-.3em] & \hspace{.8em}
        \sor \updateRef^\pre_\Hist \arity{2}
        \sor \updateRef^\prepost_{\seq{\bpath}} \arity{1}
    \\[-.3em] & \hspace{.8em}
        \sor \updateRef^\post_{\seq{\bpath};\ka \Hist} \arity{1}
    \end{aligned}
    \end{gather*}
    \caption{For denotational semantics.}
    \label{fig:opsem-syntax-dos-more}
  \end{subcaptionblock}
  \caption{More syntax for the two operational semantics.}
  \label{fig:opsem-syntax-more}
\end{figure}

\Cref{fig:opsem-syntax-more} shows additional syntax for the two operational semantics, omitted in \cref{fig:opsem-syntax}.

The denesting context \(\Ntx\) is used for turning a nested term into an unnested one, by binding subterms to fresh variables (see \ref{rule:mred-denest} and \ref{rule:dred-denest} later shown in \cref{fig:red-basic}).\footnote{
  We impose left-to-right order in denesting to avoid unnecessary nondeterminism.
}
Denesting is a dynamic version of conversion to the A-normal (or K-normal) form.

We introduce extra values \(\mxval\), \(\dxval\) and extra operators \(\mop\), \(\dop\) for the two semantics.
For monadic extra operators such as \(\execBO^\post\), the superscript \(\post\) indicates post-processing, \(\pre\) pre-processing, and \(\prepost\) pre-processing before post-processing.
The value \(\nul\)/\(\nul_\Hist\) represents the value of linearity witness tokens \(\Linearly\) and lifetime tokens \(\Now^\lft\), \(\End^\lft\);
in denotational semantics, the lifetime tokens carry the history \(\Hist\).

\paragraph{Composing histories}

\begin{figure}
  \tightmathenv
  \[
    \frac{
      \kc\dom \Hist \ka\cap\ka \dom \Histpr \kb\teq\kb \empset
    }{
      \Hist \ka\hpar\ka \Histpr \ke\defeq\ke \Hist, \Histpr
    }
  \hspace{4.5em}
    \begin{gathered}
      \paren{\bpath \hstore \var,\kb \Hist} \kc\hseq\kc \paren{\bpath \hstore \varB,\kb \Histpr} \ke\defeq\ke
        \bpath \hstore \varB,\kb \paren{\Hist \hseq \Histpr}
    \\[.3em]
      \frac{
        \kb\bpath \notin \dom \Histpr
      }{
        \paren{\bpath \hstore \var,\kb \Hist} \kc\hseq\kc \Histpr \ke\defeq\ke
          \bpath \hstore \var,\kb \paren{\Hist \hseq \Histpr}
      }
    \hspace{3em}
      \emphist \ka\hseq\ka \Hist \ke\defeq\ke \Hist
    \end{gathered}
  \]
  \caption{Parallel \(\Hist \hpar \Histpr\) and sequential \(\Hist \hseq \Histpr\) composition of histories.}
  \label{fig:hist-comp}
\end{figure}

We define the parallel \(\Hist \hpar \Histpr\) and sequential \(\Hist \hseq \Histpr\) compositions of histories, as defined in \cref{fig:hist-comp}.
The parallel composition \(\Hist \hpar \Histpr\) treats both histories equally (and thus is commutative), while the sequential composition \(\Hist \hseq \Histpr\) overwrites older records \(\Hist\) with newer ones \(\Histpr\).
The domain \(\dom \Hist\) is defined as \(\set{\ka \bpath}{\bpath \hstore \var \kb\in\kb \Hist \ka}\).

\paragraph{Reduction rules}

\begin{figure}
  \tightmathenv
  \begin{gather*}
    \frac{
      \var \teq \MKtx\sqbr{\varB} \kin \MEnv
    \hspace{2em}
      \MEnv; \Mem; \varB \hasloop
    }{
      \MEnv; \Mem; \var \hasloop
    }
  \hspace{4.5em}
    \frac{
      \var \teq \DKtx\sqbr{\varB} \kin \DEnv
    \hspace{2em}
      \DEnv; \varB \hasloop
    }{
      \DEnv; \var \hasloop
    }
  \end{gather*}
  \caption{The coinductive rule of the forcing loop predicate \(\MCfg \hasloop\), \(\DCfg \hasloop\) for each semantics.}
  \label{fig:forceloop}
\end{figure}

\begin{figure}
  \begin{subcaptionblock}{\textwidth}
    \small\tightmathenv
    \begin{gather*}
      \frac{
        \var \teq \MKtx\sqbr{\varB} \kin \MEnv
      \hspace{2em}
        \MEnv; \Mem; \varB \kd\to\kd \MEnv'; \Mem'; \varB
      }{
        \MEnv; \Mem; \var \kd\to\kd \MEnv'; \Mem'; \var
      }
      \tagsp\ruletag{mred-ktx}
    \hspace{3em}
      \frac{
        \MCfg \hasloop
      }{
        \MCfg \to \MCfg
      }
      \tagsp\ruletag{mred-loop}
    \\[.0em]
    \begin{aligned} &
      \var \kteq \klet \Seq{\varB \teq \term} \ktin \termB,\kb \xMCfg \kto
    \\[-.4em] & \hspace{6em}
        \Seq{\varB \teq \term},\kb \var \teq \termB,\kb \xMCfg
    \end{aligned}
    \hspace{3em}
      \frac{
        \term \notin \Var
      }{
        \var \teq \Ntx\sqbr{\term},\ka \xMCfg \kto
          \varB \teq \term,\ka \var \teq \Ntx\sqbr{\varB},\ka \xMCfg
      }
      \tagsp\ruletag{mred-denest}
    \\[.0em]
      \frac{
        \varB \teq \mval \kin \MEnv
      }{
        \var \teq \varB,\ka \xMCfg \kto
          \var \teq \mval,\ka \xMCfg
      }
      \tagsp\ruletag{mred-var}
    \hspace{3em}
      \frac{
        \varF \teq \tlamx{\varB} \term \kin \MEnv
      }{
        \var \teq \varF\kc \varB,\ka \xMCfg \kto
          \var \teq \term,\ka \xMCfg
      }
    \\[.0em]
      \frac{
        \varB \teq \mval \kin \MEnv
      }{
        \var \teq \kseq \varB\kd \varC,\ka \xMCfg \kto
          \var \teq \varC,\ka \xMCfg
      }
    \hspace{2.25em}
      \frac{
        \varB \teq \Con_\idx\kb \Seq{\varC'} \kin \MEnv
      }{
        \var \teq \kcase \varB \kof
          \kcurly{ \Seq{\Con\kd \seq{\varC} \tto \term} },\kb \xMCfg \kto
          \Seq{\varC_\idx \teq \varC'},\ka
          \var \teq \term_\idx,\ka \xMCfg
      }
    \end{gather*}
    \caption{For mutative operational semantics.}
    \label{fig:red-basic-mos}
  \end{subcaptionblock}
  \begin{subcaptionblock}{\textwidth}
    \small\tightmathenv
    \begin{gather*}
      \frac{
        \var \teq \DKtx\sqbr{\varB} \kin \DEnv
      \hspace{2em}
        \DEnv; \varB \kd\to\kd \DEnv'; \varB
      }{
        \DEnv; \var \kd\to\kd \DEnv'; \var
      }
      \tagsp\ruletag{dred-ktx}
    \hspace{3em}
      \frac{
        \DCfg \hasloop
      }{
        \DCfg \to \DCfg
      }
      \tagsp\ruletag{dred-loop}
    \\[.0em]
    \begin{aligned} &
      \var \kteq \klet \Seq{\varB \teq \term} \ktin \termB,\kb
      \DEnv;\kc \kc \var \kto
    \\[-.5em] & \hspace{7em}
        \Seq{\varB \teq \term},\kb \var \teq \termB,\kb \xDCfg
    \end{aligned}
    \hspace{3em}
      \frac{
        \term \notin \Var
      }{
        \var \teq \Ntx\sqbr{\term},\ka \xDCfg \kto
          \varB \teq \term,\ka \var \teq \Ntx\sqbr{\varB},\ka \xDCfg
      }
      \tagsp\ruletag{dred-denest}
    \\[.0em]
      \frac{
        \varB \teq \dval \kin \DEnv
      }{
        \var \teq \DBctx\kd \varB,\ka \xDCfg \kto
          \var \teq \DBctx\kd \dval,\ka \xDCfg
      }
      \tagsp\ruletag{dred-var}
    \hspace{3em}
      \frac{
        \varF \teq \tlamx{\varB} \term \kin \DEnv
      }{
        \var \teq \varF\kc \varB,\ka \xDCfg \kto
          \var \teq \term,\ka \xDCfg
      }
    \\[.0em]
      \frac{
        \varB \teq \dval \kin \DEnv
      }{
        \var \teq \kseq \varB\kd \varC,\ka \DEnv; \var \kto
          \var \teq \varC,\ka \xDCfg
      }
    \hspace{3em}
      \frac{
        \varB \teq \Con_\idx\kb \Seq{\varC'} \kin \DEnv
      }{
        \var \teq \kcase \varB \kof
          \kcurly{ \Seq{\Con\kd \seq{\varC} \tto \term} },\kb
          \DEnv;\kc \var \kto
          \Seq{\varC_\idx \teq \varC'},\ka
          \var \teq \term_\idx,\ka \xDCfg
      }
    \\[.0em]
      \frac{
        \varB \teq \DBcon\kc \paren{\kz\Con_\idx\kb \Seq{\varC'}} \kin \DEnv
      \hspace{2em}
        \all{\idxB}\ke \DBcon'_\idxB \kc=\kc \begin{cases} \kd
          \Mut^{\Seq{\bpath.\idxB}} \ku & \DBcon \teq \Mut^{\seq{\bpath}}
        \\[-.4em] \kd
          \Share \ku & \DBcon \teq \Share
        \end{cases}
      }{
        \var \teq \kcase \varB \kof
        \kcurly{ \Seq{\Con\kd \seq{\varC} \tto \term} },\kb
        \DEnv;\kc \var \kto
          \Seq{\varC_\idx \teq \DBcon'\kb \varC'},\ka
          \var \teq \term_\idx,\ka \xDCfg
      }
      \tagsp\ruletag{dred-case-bor}
    \end{gather*}
    \caption{For denotational operational semantics.}
    \label{fig:red-basic-dos}
  \end{subcaptionblock}
  \tightcaption
  \caption{Basic reduction rules.}
  \label{fig:red-basic}
\end{figure}

\begin{figure}
  \begin{subcaptionblock}{\textwidth}
    \small\tightmathenv
    \begin{gather*}
      \frac{
        \Seq{\varB \teq \num} \kin \MEnv
      \hspace{1.5em}
        \numB = \iop\kd \seq{\num}
      }{\begin{aligned} &
        \var \teq \iop\kd \seq{\varB},\ka \xMCfg \kto
      \\[-.5em] & \hspace{5.5em}
          \var \teq \numB,\ka \xMCfg
      \end{aligned}}
    \hspace{2.5em}
      \frac{
        \Seq{\varB \teq \num} \kin \MEnv
      \hspace{1.5em}
        \bool = \asBool\paren{\irel\kd \seq{\num}}
      }{\begin{aligned} &
        \var \teq \irel\kd \seq{\varB},\ka \xMCfg \kto
      \\[-.5em] & \hspace{6.5em}
          \var \teq \bool,\ka \xMCfg
      \end{aligned}}
    \hspace{2.5em}
      \frac{
        \varB \teq \mval,\kc \varC \teq \mvalB \kin \MEnv
      }{\begin{aligned} &
        \var \teq \tpar\kd \varB\kd \varC,\ka \xMCfg \kto
      \\[-.5em] & \hspace{5em}
          \var \teq \paren{\varB, \varC},\ka \xMCfg
      \end{aligned}}
    \\[.3em]
      \frac{
        \varB \teq \mval \kin \MEnv
      }{
        \var \teq \consume\kd \varB,\ka \xMCfg \kto
          \var \teq \paren{},\ka \xMCfg
      }
    \hspace{3em}
      \frac{
        \varB \teq \mval \kin \MEnv
      }{
        \var \teq \move\kd \varB,\ka \xMCfg \kto
          \var \teq \Ur\kd \varB,\ka \xMCfg
      }
    \\[.3em]
      \frac{
        \varF \teq \mval \kin \MEnv
      }{
        \var \teq \linearly\kd \varF,\ka \xMCfg \kto
          \varLi \teq \nul,\ka \varB \teq \varF\kc \varLi,\ka
          \var \teq \linear\kd \varB,\ka \xMCfg
      }
    \hspace{2em}
      \frac{
        \varB \teq \mval \kin \MEnv
      }{
        \var \teq \linear\kd \varB,\ka \xMCfg \kto
          \var \teq \mval,\ka \xMCfg
      }
    \\[.3em]
      \frac{
        \varB \teq \mval \kin \MEnv
      }{
        \var \teq \withLinearly\kd \varB,\ka \xMCfg \kto
          \varLi \teq \nul,\ka \var \teq \paren{\varLi, \varB},\ka \xMCfg
      }
    \end{gather*}
    \caption{For mutative operational semantics.}
    \label{fig:red-basicop-mos}
  \end{subcaptionblock}
  \begin{subcaptionblock}{\textwidth}
    \small\tightmathenv
    \begin{gather*}
      \frac{
        \Seq{\varB \teq \num} \kin \DEnv
      \hspace{1.5em}
        \numB = \iop\kd \seq{\num}
      }{
        \var \teq \iop\kd \seq{\varB},\ka \xDCfg \kto
          \var \teq \numB,\ka \xDCfg
      }
    \hspace{1.5em}
      \frac{
        \Seq{\varB \teq \num} \kin \DEnv
      \hspace{1em}
        \bool = \asBool\paren{\irel\kd \seq{\num}}
      }{
        \var \teq \irel\kd \seq{\varB},\ka \xDCfg \kto
          \var \teq \bool,\ka \xDCfg
      }
    \hspace{1.5em}
      \frac{
        \varB \teq \dval,\kc \varC \teq \dvalB \kin \DEnv
      }{
        \var \teq \tpar\kd \varB\kd \varC,\ka \xDCfg \kto
          \var \teq \paren{\varB, \varC},\ka \xDCfg
      }
    \\[.3em]
      \frac{
        \varB \teq \dval \kin \DEnv
      }{
        \var \teq \consume\kd \varB,\ka \xDCfg \kto
          \var \teq \paren{},\ka \xDCfg
      }
    \hspace{3em}
      \frac{
        \varB \teq \dval \kin \DEnv
      }{
        \var \teq \move\kd \varB,\ka \xDCfg \kto
          \var \teq \Ur\kd \varB,\ka \xDCfg
      }
    \\[.3em]
      \frac{
        \varF \teq \dval \kin \DEnv
      }{\begin{aligned} &
        \var \teq \linearly\kd \varF,\ka \xDCfg \kto
      \\[-.5em] & \hspace{2em}
          \varLi \teq \nul,\ka \varB \teq \varF\kc \varLi,\ka
          \var \teq \linear\kd \varB,\ka \xDCfg
      \end{aligned}}
    \hspace{2em}
      \frac{
        \varB \teq \dval \kin \DEnv
      }{\begin{aligned} &
        \var \teq \linear\kd \varB,\ka \xDCfg \kto
      \\[-.5em] & \hspace{5.5em}
          \var \teq \dval,\ka \xDCfg
      \end{aligned}}
    \hspace{2em}
      \frac{
        \varB \teq \dval \kin \DEnv
      }{\begin{aligned} &
        \var \teq \withLinearly\kd \varB,\ka \xDCfg \kto
      \\[-.5em] & \hspace{4em}
          \varLi \teq \nul,\ka \var \teq \paren{\varLi, \varB},\ka \xDCfg
      \end{aligned}}
    \end{gather*}
    \caption{For denotational operational semantics.}
    \label{fig:red-basicop-dos}
  \end{subcaptionblock}
  \tightcaption
  \caption{Reduction rules for basic operators.}
  \label{fig:red-basicop}
\end{figure}

\begin{figure}
  \begin{subcaptionblock}{.52\textwidth}
    \small\tightmathenv
    \begin{gather*}
      \frac{
        \varLi \teq \nul,\kd \varB \teq \mval \kin \MEnv
      }{
        \var \teq \newRef\kd \varLi\kd \varB,\ka \xMCfg \kto
          \var \teq \kRef \loc,\ka \MEnv;\kb
          \loc \mMapsto \varB,\ka \Mem;\kb \var
      }
    \\[-.1em]
      \frac{
        \varB \teq \kRef \loc \kin \MEnv
      }{
        \var \teq \freeRef\kd \varB,\ka \MEnv;\kb
        \loc \mMapsto \varC,\ka \Mem;\kb \var \kto
          \var \teq \varC,\ka \xMCfg
      }
    \end{gather*}
    \caption{For mutative semantics.}
    \label{fig:red-ref-mos}
  \end{subcaptionblock}
  \hspace{.7em}
  \begin{subcaptionblock}{.41\textwidth}
    \small\tightmathenv
    \begin{gather*}
      \frac{
        \varLi \teq \nul,\kd \varB \teq \dval \kin \DEnv
      }{
        \var \teq \newRef\kd \varLi\kd \varB,\ka \xDCfg \kto
          \var \teq \kRef \varB,\ka \xDCfg
      }
    \\[.1em]
      \frac{
        \varB \teq \kRef \varC \kin \DEnv
      }{
        \var \teq \freeRef\kd \varB,\ka \xDCfg \kto
          \var \teq \varC,\ka \xDCfg
      }
    \end{gather*}
    \caption{For denotational semantics.}
    \label{fig:red-ref-dos}
  \end{subcaptionblock}
  \tightcaption
  \caption{Basic reference-related reduction rules.}
  \label{fig:red-ref}
\end{figure}

\begin{figure}
  \begin{subcaptionblock}{\textwidth}
    \small\tightmathenv
    \begin{gather*}
      \frac{
        \varLi \teq \nul,\kc \varF \teq \mval \kin \MEnv
      }{\begin{aligned} &
        \var \teq \newLifetime\kd \varLi\kd \varF,\ka \xMCfg \kto
      \\[-.6em] & \hspace{5em}
          \varNow \teq \nul,\ka \var \teq \varF\kc \varNow,\ka \xMCfg
      \end{aligned}}
    \hspace{3em}
      \frac{
        \varNow \teq \nul \kin \MEnv
      }{\begin{aligned} &
        \var \teq \tendLifetime\, \varNow,\ka \xMCfg \kto
      \\[-.6em] & \hspace{8em}
          \var \teq \Ur\kd \varNow,\ka \xMCfg
      \end{aligned}}
    \\[-.2em]
      \frac{
        \varLi \teq \nul,\kc \varB \teq \mval \kin \MEnv
      }{\begin{aligned} &
        \var \teq \borrow\kd \varLi\kd \varB,\ka \xMCfg \kto
      \\[-.6em] & \hspace{3.5em}
          \varBm \teq \mval,\ka \varBl \teq \mval,\ka \var \teq \paren{\varBm, \varBl},\ka \xMCfg
      \end{aligned}}
    \hspace{3em}
      \frac{
        \varB \teq \mval \kin \MEnv
      }{
        \var \teq \share\kd \varB,\ka \xMCfg \kto
          \var \teq \Ur\kd \varB,\ka \xMCfg
      }
    \\[.1em]
      \frac{
        \varB \teq \mval \kin \MEnv
      }{\begin{aligned} &
        \var \teq \tcopy\kd \varB,\ka \xMCfg \kto
      \\[-.6em] & \hspace{6em}
          \var \teq \mval,\ka \xMCfg
      \end{aligned}}
    \hspace{2.5em}
      \frac{
        \varB \teq \mval \kin \MEnv
      }{\begin{aligned} &
        \var \teq \joinMut\kd \varB,\ka \xMCfg \kto
      \\[-.6em] & \hspace{6.5em}
          \var \teq \mval,\ka \xMCfg
      \end{aligned}}
    \hspace{2.5em}
      \frac{
        \varEnd \teq \nul,\kc \varB \teq \mval \kin \MEnv
      }{\begin{aligned} &
        \var \teq \reclaim\kd \varB\kd \varEnd,\ka \xMCfg \kto
      \\[-.6em] & \hspace{8.5em}
          \var \teq \varB,\ka \xMCfg
      \end{aligned}}
    \\[-1.8em]
    \end{gather*}
    \caption{For mutative operational semantics.}
    \label{fig:red-borrow-mos}
  \end{subcaptionblock}
  \begin{subcaptionblock}{\textwidth}
    \small\tightmathenv
    \begin{gather*}
      \frac{
        \varLi \teq \nul,\kc \varF \teq \dval \kin \DEnv
      }{\begin{aligned} &
        \var \teq \newLifetime\kd \varLi\kd \varF,\ka \xDCfg \kto
      \\[-.5em] & \hspace{6.5em}
          \varNow \teq \nul_\emphist,\ka \var \teq \varF\kc \varNow,\ka \xDCfg
      \end{aligned}}
    \hspace{3em}
      \frac{
        \varNow \teq \nul_\Hist \kin \DEnv
      }{
        \var \teq \tendLifetime\, \varNow,\ka \xDCfg \kto
          \var \teq \Ur\kd \varNow,\ka \xDCfg
      }
    \\[.1em]
      \frac{
        \varLi \teq \nul,\kc \varB \teq \dval \kin \DEnv
      }{\begin{aligned} &
        \var \teq \borrow\kd \varLi\kd \varB,\ka \xDCfg \kto
          \varBm \teq \Mut^\bid\kb \dval,\ka
      \\[-.5em] & \hspace{6.5em}
          \varBl \teq \Lend^\bid\kb \varB,\ka
          \var \teq \paren{\varBm, \varBl},\ka
          \bid,\ka \xDCfg
      \end{aligned}}
    \hspace{3em}
      \frac{
        \varB \teq \Mut^{\seq{\bpath}}\kc \dval \kin \DEnv
      }{
        \var \teq \share\kd \varB,\ka \xDCfg \kto
          \varB' \teq \Share\kd \dval,\ka
          \var \teq \Ur\kd \varB,\ka \xDCfg
      }
    \\[.0em]
      \frac{
        \varB \teq \Share\kd \dval \kin \DEnv
      }{
        \var \teq \tcopy\kd \varB,\ka \xDCfg \kto
          \var \teq \dval,\ka \xDCfg
      }
    \hspace{2.75em}
      \frac{
        \varB \teq \DBcon\kc \paren{\Mut^{\seq{\bpathB}}\ka \dval} \kin \DEnv
      \hspace{1.5em}
        \DBcon' \kb=\kb \begin{cases} \kd
          \Mut^{\ka\seq{\bpath},\ka \seq{\bpathB}} & \nk{11} \DBcon = \Mut^{\seq{\bpath}}
        \\[-.4em] \kd
          \Share & \nk{11} \DBcon = \Share
        \end{cases}
      }{
        \var \teq \joinMut\kd \varB,\ka \xDCfg \kto
          \var \teq \DBcon'\kb \dval,\ka \xDCfg
      }
      \tagsp\ruletag{dred-join-bor}
    \\[.15em]
      \frac{
        \varEnd \teq \nul_\Hist,\kc \varB \teq \Lend^\bid\kb \varC \kin \DEnv
      \hspace{2em}
        \restore_\Hist\kc \bid\ke \DEnv \kb\turnsto{\varC}{\varC'} \DEnv'
      }{
        \var \teq \reclaim\kd \varB\kd \varEnd,\ka \xDCfg \kto
          \var \teq \varC',\ka \DEnv';\kb \var
      }
      \tagsp\ruletag{dred-reclaim}
    \end{gather*}
    \caption{For denotational operational semantics.}
    \label{fig:red-borrow-dos}
  \end{subcaptionblock}
  \begin{subcaptionblock}{\textwidth}
    \small\tightmathenv
    \begin{gather*}
      \frac{
        \all{\seq{\idx}}\kc \bpath\Seq{.\idx} \ka\notin\ka \dom \Hist
      }{
        \restore_\Hist\ke \bpath\ke \DEnv
          \turnsto{\var}{\var} \DEnv
      }
    \hspace{3em}
      \frac{
        \bpath \hstore \varB \kin \Hist
      \hspace{1.5em}
        \var \teq \kRef \varC \kin \DEnv
      \hspace{1.5em}
        \restore_\Hist\kf \bpath.0\kf \DEnv
          \kb\turnsto{\varB}{\varB'} \DEnv'
      }{
        \restore_\Hist\kf \bpath\kf \DEnv \kb\turnsto{\var}{\var'}
          \var' \teq \kRef \varB',\kb \DEnv'
      }
    \\[.3em]
      \frac{\begin{gathered}
        \bpath \ka\notin\ka \dom \Hist
      \hspace{1.5em}
        \bpath\Seq{.\idxB} \in \dom \Hist
      \\[-.5em]
        \var \kteq \DBctx\kb \paren{\kRef \varB} \kin \DEnv
      \hspace{1.5em}
        \restore_\Hist\kf \bpath.0\kf \DEnv
          \turnsto{\varB}{\varB'} \DEnv'
      \end{gathered}}{
        \restore_\Hist\kf \bpath\kf \DEnv \kb\turnsto{\var}{\var'}
          \var' \kteq \DBctx\kb \paren{\kRef \varB},\kb \DEnv'
      }
    \hspace{2.5em}
      \frac{\begin{gathered}
        \bpath \ka\notin\ka \dom \Hist
      \hspace{1.5em}
        \bpath\Seq{.\idxB} \in \dom \Hist
      \hspace{1.5em}
        \var \kteq \DBctx\kb \paren{\Con\kd \seq{\varB}^\num}
          \kin \DEnv
      \\[-.5em]
        \DEnv_0 \teq \DEnv
      \hspace{1.5em}
        \all{\idx}\kb
        \restore_\Hist\kf \bpath.\idx\kf \DEnv_\idx
          \turnsto{\varB_{\kz\idx}}{\varB'_{\kz\idx}} \DEnv_{\idx + 1}
      \end{gathered}}{
        \restore_\Hist\kf \bpath\kf \DEnv \kb\turnsto{\var}{\var'}
          \var' \kteq \DBctx\kb \paren{\Con\kd \seq{\varB'}},\kb \DEnv_\num
      }
    \end{gather*}
    \caption{The restoration-by-history predicate \(\restore_\Hist\ke \bpath\ke \DEnv \turnsto{\var}{\varB} \DEnv'\).}
    \label{fig:red-borrow-restore}
  \end{subcaptionblock}
  \tightcaption
  \caption{Borrow-related reduction rules.}
  \label{fig:red-borrow}
\end{figure}

\begin{figure}
  \begin{subcaptionblock}{\textwidth}
    \small\tightmathenv
    \begin{gather*}
      \frac{
        \varNow \teq \nul,\kc \varBo \teq \mval \kin \MEnv
      }{\begin{aligned} &
        \var \teq \execBO\kd \varNow\kd \varBo,\ka \xMCfg \kto
      \\[-.5em] & \hspace{3em}
          \varRes \teq \exeBO\kd \varBo,\ka
          \var \teq \execBO^\post\kc \varRes,\ka \xMCfg
      \end{aligned}}
    \hspace{3em}
      \frac{
        \varRes \teq \Done\kd \varB \kin \MEnv
      }{\begin{aligned} &
        \var \teq \execBO^\post\kc \varRes,\ka \xMCfg \kto
      \\[-.5em] & \hspace{3em}
          \varNow \teq \nul,\ka \var \teq \paren{\kz\varNow, \varB}\kc,\ka \xMCfg
      \end{aligned}}
    \\[.3em]
      \frac{
        \varBo \teq \pure\kd \varB \kin \MEnv
      }{\begin{aligned} &
        \var \teq \exeBO\kd \varBo,\ka \xMCfg \kto
      \\[-.5em] & \hspace{7em}
          \var \teq \Done\kd \varB,\ka \xMCfg
      \end{aligned}}
    \hspace{3em}
      \frac{
        \varBo \teq \varBo' \bind \varKo \kin \MEnv
      }{\begin{aligned} &
        \var \teq \exeBO\kd \varBo,\ka \xMCfg \kto
      \\[-.5em] & \hspace{3em}
          \varRes \teq \exeBO\kd \varBo',\ka
          \var \teq \varRes \bind^\post \varKo,\ka \xMCfg
      \end{aligned}}
    \\[.3em]
      \frac{
        \varRes \teq \Done_\Hist\kc \varB \kin \MEnv
      }{\begin{aligned} &
        \var \teq \varRes \bind^\post \varKo;\ka \xMCfg \kto
      \\[-.5em] & \hspace{3em}
          \varBo \teq \varKo\kd \varB,\ka
          \var \teq \exeBO\kd \varBo,\ka \xMCfg
      \end{aligned}}
    \hspace{3em}
      \frac{
        \varBo \teq \sexecBO\kd \varNow\kd \varBo' \kin \MEnv
      }{\begin{aligned} &
        \var \teq \exeBO\kd \varBo,\ka \xMCfg \kto
      \\[-.5em] & \hspace{3em}
          \var \teq \sexecBO^\pre\kc \varNow\kd \varBo',\ka \xMCfg
      \end{aligned}}
    \\[.3em]
      \frac{
        \varNow \teq \nul,\kb \varBo' \teq \mval \kin \MEnv
      }{\begin{aligned} &
        \var \teq \sexecBO^\pre\kc \varNow\kd \varBo',\ka \xMCfg \kto
      \\[-.5em] & \hspace{3em}
          \varRes \teq \exeBO\kd \varBo',\ka
          \var \teq \sexecBO^\post\kc \varRes,\ka \xMCfg
      \end{aligned}}
    \hspace{3em}
      \frac{
        \varRes \teq \Done\kd \varB \kin \MEnv
      }{\begin{aligned} &
        \var \teq \sexecBO^\post\kc \varRes,\ka \xMCfg \kto
      \\[-.5em] & \hspace{7em}
          \var \teq \Done\kd \varRes,\ka \xMCfg
      \end{aligned}}
    \\[.3em]
      \frac{
        \varBo \teq \parBO\kd \varBo_0\kd \varBo_1 \kin \MEnv
      }{\begin{aligned} &
        \var \teq \exeBO\kd \varBo,\ka \xMCfg \kto
          \varRes_0 \teq \exeBO\kd \varBo_0,\ka
      \\[-.4em] & \hspace{2em}
          \varRes_1 \teq \exeBO\kd \varBo_1,\ka
          \var \teq \parBO^\post\kc \varRes_0\kd \varRes_1,\ka \xMCfg
      \end{aligned}}
    \hspace{3em}
      \frac{
        \varRes_0 \teq \Done\kd \varB,\kc
        \varRes_1 \teq \Done\kd \varC \kin \MEnv
      }{\begin{aligned} &
        \var \teq \parBO^\post\kc \varRes_0\kd \varRes_1;\ka \xMCfg \kto
      \\[-.5em] & \hspace{3em}
          \varPr \teq \paren{\varB, \varC},\ka \var \teq \Done\kd \varPr,\ka \xMCfg
      \end{aligned}}
    \end{gather*}
    \caption{For mutative operational semantics.}
    \label{fig:red-monad-basic-mos}
  \end{subcaptionblock}
  \begin{subcaptionblock}{\textwidth}
    \small\tightmathenv
    \begin{gather*}
      \frac{
        \varNow \teq \nul_\Hist,\kc \varBo \teq \dval \kin \DEnv
      }{\begin{aligned} &
        \var \teq \execBO\kd \varNow\kd \varBo,\ka \xDCfg \kto
      \\[-.5em] & \hspace{3em}
          \varRes \teq \exeBO_\Hist\kc \varBo,\ka
          \var \teq \execBO^\post\kc \varRes,\ka \xDCfg
      \end{aligned}}
    \hspace{3em}
      \frac{
        \varRes \teq \Done_\Hist\kc \varB \kin \DEnv
      }{\begin{aligned} &
        \var \teq \execBO^\post\kc \varRes,\ka \xDCfg \kto
      \\[-.5em] & \hspace{3em}
          \varNow \teq \nul_\Hist,\ka \var \teq \paren{\kz\varNow, \varB}\kc,\ka \xDCfg
      \end{aligned}}
    \\[.2em]
      \frac{
        \varBo \teq \pure\kd \varB \kin \DEnv
      }{
        \var \teq \exeBO_\Hist\kc \varBo,\ka \xDCfg \kto
          \var \teq \Done_\Hist\kc \varB,\ka \xDCfg
      }
    \hspace{3em}
      \frac{
        \varBo \teq \varBo' \bind \varKo \kin \DEnv
      }{\begin{aligned} &
        \var \teq \exeBO_\Hist\kc \varBo,\ka \xDCfg \kto
      \\[-.5em] & \hspace{3em}
          \varRes \teq \exeBO_\Hist\kd \varBo',\ka
          \var \teq \varRes \bind^\post \varKo,\ka \xDCfg
      \end{aligned}}
    \\[.3em]
      \frac{
        \varRes \teq \Done_\Hist\kc \varB \kin \DEnv
      }{\begin{aligned} &
        \var \teq \varRes \bind^\post \varKo;\ka \xDCfg \kto
      \\[-.5em] & \hspace{3em}
          \varBo \teq \varKo\kd \varB,\ka
          \var \teq \exeBO_\Hist\kc \varBo,\ka \xDCfg
      \end{aligned}}
    \hspace{3em}
      \frac{
        \varBo \teq \sexecBO\kd \varNow\kd \varBo' \kin \DEnv
      }{\begin{aligned} &
        \var \teq \exeBO_\Hist\kc \varBo,\ka \xDCfg \kto
      \\[-.5em] & \hspace{4em}
          \var \teq \sexecBO^\pre_\Hist\kc \varNow\kd \varBo',\ka \xDCfg
      \end{aligned}}
    \\[.3em]
      \frac{
        \varNow \teq \nul_{\Histpr},\kc \varBo' \teq \dval \kin \DEnv
      }{\begin{aligned} &
        \var \teq \sexecBO^\pre_\Hist\kc \varNow\kd \varBo',\ka \xDCfg \kto
      \\[-.5em] & \hspace{1.5em}
          \varRes \teq \exeBO_\emphist\kc \varBo',\ka
          \var' \teq \sexecBO^\post_{\Hist; \Histpr}\kb \varRes,\ka \xDCfg
      \end{aligned}}
    \hspace{2.5em}
      \frac{
        \varRes \teq \Done_{\Hist_0}\kc \varB \kin \DEnv
      }{\begin{aligned} &
        \var \teq \sexecBO^\post_{\Hist; \Histpr}\kb \varRes,\ka \xDCfg \kto
          \varNow \teq \nul_{\Histpr \hseq \Hist_0},\ka
      \\[-.4em] & \hspace{5.5em}
          \varPr \teq \paren{\kz\varNow, \varB},\ka
          \var \teq \Done_{\Hist \hseq \Hist_0}\kc \varPr,\ka \xDCfg
      \end{aligned}}
    \\[.3em]
      \frac{
        \varBo \teq \parBO\kd \varBo_0\kd \varBo_1 \kin \DEnv
      }{\begin{aligned} &
        \var \teq \exeBO_\Hist\kc \varBo,\ka \xDCfg \kto
          \varRes_0 \teq \exeBO_\emphist\kc \varBo_0,\ka
      \\[-.5em] & \hspace{2.5em}
          \varRes_1 \teq \exeBO_\emphist\kc \varBo_1,\ka
          \var \teq \parBO^\post_\Hist\kd \varRes_0\kd \varRes_1,\ka \xDCfg
      \end{aligned}}
    \hspace{2.5em}
      \frac{
        \varRes_0 \teq \Done_{\Hist_0}\kc \varB,\kc
        \varRes_1 \teq \Done_{\Hist_1}\kc \varC \kin \DEnv
      }{\begin{aligned} &
        \var \teq \parBO^\post_\Hist\kd \varRes_0\kd \varRes_1;\ka \xDCfg \kto
      \\[-.5em] & \hspace{2.5em}
          \varPr \teq \paren{\varB, \varC},\ka
          \var \teq \Done_{\Hist \hseq \paren{\Hist_0 \hpar \Hist_1}}\kc \varPr,\ka \xDCfg
      \end{aligned}}
    \end{gather*}
    \caption{For denotational operational semantics.}
    \label{fig:red-monad-basic-dos}
  \end{subcaptionblock}
  \tightcaption
  \caption{Basic monadic reduction rules.}
  \label{fig:red-monad-basic}
\end{figure}

\begin{figure}
  \begin{subcaptionblock}{\textwidth}
    \small\tightmathenv
    \begin{gather*}
      \frac{
        \varBo \teq \deref\kd \varRef \kin \MEnv
      }{\begin{aligned} &
        \var \teq \exeBO\kd \varBo,\ka \xMCfg \kto
      \\[-.6em] & \hspace{6em}
          \var \teq \deref^\post\kb \varRef,\ka \xMCfg
      \end{aligned}}
    \hspace{3em}
      \frac{
        \varRef \teq \kRef \loc \kin \MEnv
      \hspace{2em}
        \loc \mMapsto \varB \kin \Mem
      }{\begin{aligned} &
        \var \teq \deref^\post\kb \varRef,\ka \xMCfg \kto
      \\[-.6em] & \hspace{4em}
          \varB' \teq \varB,\ka
          \var \teq \Done\kd \varB',\ka \xMCfg
      \end{aligned}}
    \\[-.1em]
      \frac{
        \varBo \teq \updateRef\kd \varKo\kd \varRef \kin \MEnv
      }{\begin{aligned} &
        \var \teq \exeBO\kd \varBo,\ka \xMCfg \kto
      \\[-.5em] & \hspace{2em}
          \var \teq \updateRef^\pre\kb \varKo\kd \varRef,\ka \xMCfg
      \end{aligned}}
    \hspace{3em}
      \frac{
        \varKo \teq \mval,\kc \varRef \teq \kRef \loc \kin \MEnv
      \hspace{2em}
        \loc \mMapsto \varB \kin \Mem
      }{\begin{aligned} &
        \var \teq \updateRef^\pre\kb \varKo\kd \varRef,\ka \xMCfg \kto
          \varBo \teq \varKo\kd \varB,\ka
      \\[-.5em] & \hspace{4em}
          \varRes \teq \execBO\kd \varBo,\ka
          \var \teq \updateRef^\prepost_\loc\kd \varRes,\ka \xMCfg
      \end{aligned}}
    \\[-.1em]
      \frac{
        \varRes \teq \Done\kd \varPr \kin \MEnv
      }{\begin{aligned} &
        \var \teq \updateRef^\prepost_\loc\kd \varRes,\ka \xMCfg \kto
      \\[-.6em] & \hspace{5em}
          \var \teq \updateRef^\post_\loc\kd \varPr,\ka \xMCfg
      \end{aligned}}
    \hspace{3em}
      \frac{
        \varPr \teq \paren{\varC, \varB} \kin \MEnv
      }{\begin{aligned} &
        \var \teq \updateRef^\post_\loc\kd \varPr,\ka \xMCfg \kto
          \varRef \teq \kRef \loc,\ka
      \\[-.6em] & \hspace{4em}
          \varPr' \teq \paren{\varC, \varRef},\ka
          \var \teq \Done\kd \varPr',\ka
          \MEnv;\kb \Mem \mapupd{\loc}{\varB};\kb \var
      \end{aligned}}
    \end{gather*}
    \caption{For mutative operational semantics.}
    \label{fig:red-monad-ref-mos}
  \end{subcaptionblock}
  \begin{subcaptionblock}{\textwidth}
    \small\tightmathenv
    \setlength{\abovedisplayskip}{-4pt}
    \begin{gather*}
      \frac{
        \varBo \teq \deref\kd \varRef \kin \DEnv
      }{
        \var \teq \exeBO_\Hist\kc \varBo,\ka \xDCfg \kto
          \var \teq \deref^\post_\Hist\kb \varRef,\ka \xDCfg
      }
    \hspace{2.25em}
      \frac{
        \varRef \teq \DBcon\kb \paren{\kRef \varB} \kin \DEnv
      \hspace{1.5em}
        \DBcon' \kc=\kc \begin{cases} \kd
          \Mut^{\Seq{\bpath.0}} \ku & \DBcon \teq \Mut^{\seq{\bpath}}
        \\[-.4em] \kd
          \Share \ku & \DBcon \teq \Share
        \end{cases}
      }{
        \var \teq \deref^\post_\Hist\kb \varRef,\ka \xDCfg \kto
          \varB' \teq \DBcon'\kb \varB,\ka
          \var \teq \Done_\Hist\kc \varB',\ka \xDCfg
      }
    \\[.4em]
      \frac{
        \varBo \teq \updateRef\kd \varKo\kd \varRef \kin \DEnv
      }{\begin{aligned} &
        \var \teq \exeBO_\Hist\kd \varBo,\ka \xDCfg \kto
      \\[-.5em] & \hspace{3em}
          \var \teq \updateRef^\pre_\Hist\kc \varKo\kd \varRef,\ka \xDCfg
      \end{aligned}}
    \hspace{3em}
      \frac{
        \varKo \teq \dval,\kc \varRef \teq \Mut^{\seq{\bpath}}\kb \paren{\kRef \varB} \kin \DEnv
      }{\begin{aligned} &
        \var \teq \updateRef^\pre_\Hist\kc \varKo\kd \varRef,\ka \xDCfg \kto
          \varBo \teq \varKo\kd \varB,\ka
      \\[-.5em] & \hspace{4em}
          \varRes \teq \exeBO_\Hist\kc \varBo,\ka
          \var \teq \updateRef^\prepost_{\seq{\bpath}}\kc \varRes,\ka \xDCfg
      \end{aligned}}
    \\[.3em]
      \frac{
        \varRes \teq \Done_\Hist\kc \varPr \kin \DEnv
      }{\begin{aligned} &
        \var \teq \updateRef^\prepost_{\seq{\bpath}}\kc \varRes,\ka \xDCfg \kto
      \\[-.4em] & \hspace{4em}
          \var \teq \updateRef^\post_{\seq{\bpath};\ka \Hist}\kc \varPr,\ka \xDCfg
      \end{aligned}}
    \hspace{3em}
      \frac{
        \varPr \teq \paren{\varC, \varB} \kin \DEnv
      }{\begin{aligned} &
        \var \teq \updateRef^\post_{\seq{\bpath};\ka \Hist}\kb \varPr,\ka \xDCfg \kto
          \varRef \teq \Mut^{\seq{\bpath}}\kb \paren{\kRef \varB},\kb
      \\[-.5em] & \hspace{8em}
          \varPr' \teq \paren{\varC, \varRef},\ka
          \var \teq \Done_{\Hist \hseq \Seq{\bpath \ka\hstore\ka \varB}}\kc \varPr',\ka \xDCfg
      \end{aligned}}
    \end{gather*}
    \caption{For denotational operational semantics.}
    \label{fig:red-monad-ref-dos}
  \end{subcaptionblock}
  \tightcaption
  \caption{Reference-related monadic reduction rules.}
  \label{fig:red-monad-ref}
\end{figure}

Now we inductively define the small-step reduction relations \(\MCfg \to \MCfg'\) and \(\DCfg \to \DCfg'\) for mutative and denotational operational semantics, respectively.

The basic reduction rules for the two semantics are listed in \cref{fig:red-basic}.\footnote{
  The function \(\asBool\) maps the truth and falsehood to \(\True\) and \(\False\), respectively.
}
The rules \ref{rule:mred-loop} and \ref{rule:dred-loop} use the forcing loop predicate \(\MCfg \hasloop\), \(\DCfg \hasloop\), meaning that the environment contains variables that circularly forcing one another, \emph{coinductively} defined by the rules in \cref{fig:forceloop}.
The two semantics are mostly the same in these rules, but there are some notable differences due to borrower constructors \(\DBcon\), appearing only in denotational semantics.
The rule \ref{rule:dred-var} for a variable term carries the borrower context \(\DBctx\).
Also, the rule \ref{rule:dred-case-bor} distributes the borrower constructor \(\DBcon\) over the data components (recall the typing rule \ref{rule:ty-case-bor} in \cref{fig:typing}).

The reduction rules for basic operators for denotational semantics are listed in \cref{fig:red-basicop}.
The rules for mutative semantics are just analogous to them.
The operator \(\linearly\) allows binding the result of a linear computation to an unrestricted variable.
We use an extra operator \(\linear\) to clarify that boundary for our association system \cref{appx:assoc}.

The basic reference-related reduction rules for the two semantics are presented in \cref{fig:red-ref}.
In mutative semantics, a reference is modeled as \(\kRef \loc\) with a location \(\loc\), whose content \(\var\) is separately stored as \(\loc \mMapsto \var\) in the global memory \(\Mem\);
in denotational semantics, a reference is modeled as \(\kRef \var\), where the body content \(\var\) is locally stored in the value.
The ownership justifies this.

The borrow-related reduction rules for the two semantics are listed in \cref{fig:red-borrow}.
Here we clearly see the differences between the two semantics.
In denotational semantics, for each new borrow, a fresh borrow id \(\bid\) is created, and the lender is modeled as \(\Lend^\bid\kb \var\), where the \(\var\) represents the original version of the borrowed object.
Notably, when the lender reclaims the ownership of the borrowed object by \ref{rule:dred-reclaim}, the latest version of the object is restored from the history.
For this, we introduce the restoration-by-history predicate \(\restore_\Hist\ke \bpath\ke \DEnv \turnsto{\var}{\varB} \DEnv'\), inductively defined by the rules listed in \cref{fig:red-borrow-restore}.
The predicate adds new items to the environment \(\DEnv\) to get \(\DEnv'\), reflecting updates recorded in the history \(\Hist\) on the borrow path \(\bpath\) for the variable \(\var\), returning a new variable \(\var'\).
The predicate does nothing and just returns \(\DEnv\) and \(\var\) if the history does not contain any relevant update,
while the predicate takes a fresh variable \(\var'\) for the new version of \(\var\) otherwise.

Basic monadic reduction rules for denotational semantics are listed in \cref{fig:red-monad-basic}.
The rules for mutative semantics are just analogous to them, just without the history information \(\Hist\).
We internally use the extra operator \(\exeBO\)/\(\exeBO_\Hist\), a utility variant of \(\execBO\), to execute a monad without carrying around the live lifetime token.
It returns \(\Done\kd \var\)/\(\Done_\Hist\kc \var\) for the result of the computation.
What is important is how to handle histories.
The scoped execution \(\sexecBO\) executes the monad \(\varBo\) with the empty history, whose resulting history \(\Hist_0\) is then sequentially composed with the original histories \(\Hist\), \(\Histpr\).
The parallel execution \(\parBO\) executes the monads \(\varBo_0\), \(\varBo_1\) respectively with the empty history and then composes the resulting histories \(\Hist_0\), \(\Hist_1\) \emph{in parallel}, whose composite is sequentially composed with the original history \(\Hist\).

Reference-related monadic reduction rules for the two semantics are listed in \cref{fig:red-monad-ref}.
Like in \cref{fig:red-ref}, mutative operational semantics accesses the memory \(\Mem\) while denotational operational semantics locally operates on the body of \(\TRef\).
Crucially, in denotational semantics, the reference update \(\updateRef\) records the update \(\bpath \hstore \varB\) to the history \(\Hist\) for each borrow path \(\bpath\) of the mutable borrower.
Also, the dereference \(\deref\) distributes the borrower constructor, modifying the borrow path, just like in \ref{rule:dred-case-bor}.

\section{Association System}
\label{appx:assoc}

We present the omitted details of our association system introduced in \cref{sect:metatheory-assoc}.

\subsection{Overview}
\label{appx:assoc-overview}

\paragraph{Syntax}

\begin{figure}
  \tightmathenv
  \begin{gather*}
    \stag{Lifetime path} \lftpath \sdef \lftid \sor \lftpath.\idx
  \hspace{3em}
    \stag{Extended type} \RTy \sdef \Ty
      \sor \Done^{\seq{\lftpath}}\kc \Ty
      \sor \kSkel \Ty
  \\[.1em]
  \begin{aligned}
    \stag{Ghost resource} \gho
      & \sdef \glinear
      \sor \gnow^{\lftpath;\ka \num}_{\Hist;\ka \Bpaths}
      \sor \gend^\lftid_\Hist
      \sor \glifetime^\bid_\lft
      \sor \gborrow^{\seq{\bpath}}_{\var;\ka \Ty;\ka \seq{\bpathB}}
  \\[-.4em] & \hspace{1em}
      \sor \gmut^{\seq{\bpath}}
      \sor \gshare^{\seq{\bpath}}
      \sor \glend^\bid_{\var;\ka \Ty}
      \sor \gdepo^{\seq{\bpath}}
      \sor \ggdepo^\bid_{\var;\ka \Ty;\ka \Hist}
  \\[-.4em] & \hspace{1em}
      \sor \ghole^{\seq{\bpath}}
      \sor \grelend^{\seq{\bpath}}_\Hist
      \sor \loc \borMapsto{\seq{\bpath}} \var
      \sor \loc \bormapsto{\seq{\bpath}} \var
      \sor \gskel^\var_\Ty
      \sor \gvar^\var_\Ty
  \end{aligned}
  \\[.1em]
    \stag{Atomic resource} \atom
      \sdef \var \ccolul{\Bpaths}{\mult} \RTy
      \sor \bid
      \sor \var \teq \mterm \assoc \dterm
      \sor \var \teq \dval
      \sor \loc \mMapsto \var
      \sor \loc \dmapsto \var
      \sor \all{\anyvar} \Rctx
      \sor \gho
  \\[.1em]
    \stag{Resource context} \Rctx, \RctxB
      \sdef \Seq{\atom}
  \end{gather*}
  \caption{Syntax for the association system.}
  \label{fig:assoc-syntax}
\end{figure}

The syntax for the association system is shown in \cref{fig:assoc-syntax}.

A lifetime path \(\lftpath = \lftid\Seq{.\idx}\) represents a part of an atomic lifetime \(\al \lftid\).
We introduce extended types \(\RTy\), which include \(\Done^{\seq{\lftpath}}\kc \Ty\) to represent the result of \(\exeBO\)/\(\exeBO_\Hist\)
and the skeleton type \(\kSkel \Ty\) to represent a borrower whose lender has already reclaimed.
The resource context \(\Rctx\) extends the typing context \(\Tctx\) with more information about the dynamic ownership.\footnote{
  The order of items in a resource context \(\Rctx\) is ignored.
}
A variable-type binding \(\var \ccolul{\Bpaths}{\mult} \RTy\) in a resource context is marked with the set of borrow paths \(\Bpaths\) that the variable may be able to update (we omit the superscript \(\Bpaths\) if it is empty).
This information is vital to justify \(\parBO\), ensuring that the borrow paths updated by the two monad arguments are mutually disjoint.
A resource context also contains borrow ids \(\bid\), variable-term bindings \(\var \teq \mterm \assoc \dterm\), \(\var \teq \dval\), and points-to tokens \(\loc \mMapsto \var\) from the environments and memory.
Importantly, bindings \(\var \teq \mval \assoc \dval\), \(\var \teq \dval\) over values are shareable.
A captured points-to token \(\loc \dmapsto \var\) represents a usual points-to token captured inside \(\linear\) (please see \ref{rule:assoc-linear} shown later).

Interestingly, a resource context can even contain polymorphic resource contexts \(\all{\anyvar} \Rctx\) to reason about \emph{first-class polymorphism} of Haskell.
For example, we can create a polymorphic reference \(\all{\Tvar} \kRef \sqbr{\Tvar}\) to a list of any type from the nil list \(\sqbr{} \ccol \all{\Tvar}\kz \sqbr{\Tvar}\), and then borrow it to create a polymorphic pair \(\all{\lftvar, \Tvar} \paren{\Mut^\lftvar\ka \paren{\kRef \sqbr{\Tvar}},\kb \Lend^\lftvar\ka \paren{\kRef \sqbr{\Tvar}}}\) of a borrower and a lender.
Because the variables \(\lftvar\), \(\Tvar\) are instantiated dynamically, we have to keep the polymorphism over \(\lftvar\), \(\Tvar\) in creating the ghost state for the borrow.

A resource context can contain ghost resources \(\gho\) to represent various auxiliary information.
The meanings of ghost resources are as follows.
A linearity witness \(\glinear\) is used for \(\Linearly\).
A live lifetime token \(\gnow^{\lftid\Seq{.\idx};\kb \num}_{\Hist;\ka \Bpaths}\) owns the part \(\Seq{.\idx}\) of the lifetime \(\al \lftid\), asserting that it is alive with the partial history \(\Hist\), has \(\num\) children tokens, and has the right to perform updates of borrow paths in \(\Bpaths\).
A dead lifetime token \(\gend^\lftid_\Hist\) asserts that the lifetime \(\al \lftid\) has ended with the history \(\Hist\).
A borrow lifetime token \(\glifetime^\bid_\lft\) asserts that the lifetime of the borrow of the id \(\bid\) is \(\lft\).
A borrow path token \(\gborrow^{\seq{\bpath}}_{\var;\ka \Ty;\ka \seq{\bpathB}}\) asserts that the borrow of the borrow path sequence \(\seq{\bpath}\) owns an object of a variable \(\var\) typed \(\Ty\).
When \(\seq{\bpathB}\) is non-empty, it means that the object is also reborrowed under \(\seq{\bpathB}\).
When \(\seq{\bpathB}\) is empty, we omit the last part `\(; \seq{\bpathB}\)' of the subscript.
We have tokens for a mutable borrower \(\gmut^{\seq{\bpath}}\), a shared borrower \(\gshare^{\seq{\bpath}}\), and a lender \(\glend^\bid_{\mval;\ka \var}\).
We also have tokens for a (local) deposit \(\gdepo^{\seq{\bpath}}\) and a global deposit \(\ggdepo^\bid_{\var;\ka \Ty;\ka \Hist}\) for the borrowed contents.
The hole token \(\ghole^{\seq{\bpath}}\) is used inside a global deposit to indicate a hole caused by a deposit \(\gdepo^{\seq{\bpath}}\).
The relender token \(\grelend^{\seq{\bpath}}_\Hist\) asserts that the relender of \(\seq{\bpath}\) is under the updates of \(\Hist\).
We also have a shared-borrowed points-to token \(\loc \borMapsto{\vphantom{y}\smash{\seq{\bpath}}} \var\) and the evidence of the ownership \(\loc \bormapsto{\vphantom{y}\smash{\seq{\bpath}}} \var\).
Also, we have a shareable token \(\gskel^\var_\Ty\) for the skeleton type \(\kSkel \Ty\) and a variable token \(\gvar^\var_\Ty\) providing skeletons with the witness that the variable \(\var\) is typed \(\Ty\).

\paragraph{Assocaition rules}

\begin{figure}
  \small\tightmathenv
  \begin{gather*}
    \frac{
      \Seq{\varB \teq \mtermB \assoc \dtermB},\kb
      \Seq{\varB' \ky\teq \mval \assoc \dval},\kb
      \Seq{\varB' \ky\teq \dval},\kb \Seq{\varC \teq \dvalB},\kb
      \Seq{\bid},\kb \Mem \kvdash
        \mterm \assoc \dterm \kccol \Ty
    }{
      \var \teq \mterm,\ka \Seq{\varB \teq \mtermB},\kb
      \Seq{\varB' \teq \mval},\kb
        \MEnv;\kb \Mem;\kb \var
      \ke\assoc_\Ty\ke
      \var \teq \dterm,\kb \Seq{\varB \teq \dtermB},\kb
      \Seq{\varB' \teq \dval},\kb
      \Seq{\varC \teq \dvalB},\kb \Seq{\bid},\kb
        \DEnv;\kb \var
    }
    \tagsp\ruletag{config-assoc}
  \\[.3em]
    \frac{
      \RctxB \subty \Rctx
    \hspace{1.5em}
      \Rctx \vdash \mterm \assoc \dterm \ccol \RTy
    }{
      \RctxB \vdash \mterm \assoc \dterm \ccol \RTy
    }
    \tagsp\ruletag[assoc-incl]{assoc-\(\subty\)}
  \hspace{3em}
    \frac{
      \Rctx \kvdash \mterm \assoc \dterm \kccol \Ty
    }{
      \paren{\all{\anyvar} \Rctx} \kvdash \mterm \assoc \dterm \kccol \all{\anyvar} \Ty
    }
    \tagsp\ruletag[assoc-all]{assoc-\(\forall\)}
  \\[.3em]
    \frac{
      \all{\idx}\kc \Rctx_\idx \kreal
    \hspace{2em}
      \all{\idx}\ke
      \Seq{\var \ccoll{\many} \RTyB},\ka
      \Rctx_\idx \ka+\ka \RctxB \kvdash
        \mtermB_\idx \assoc \dtermB_\idx \kccol \RTyB_\idx
    \hspace{2em}
      \Seq{\var \ccoll{\many} \RTyB},\ka \RctxB' \kvdash
        \mterm \assoc \dterm \kccol \RTy
    }{
      \Seq{\var \teq \mtermB \assoc \dtermB} \ka+\ka
        \sum_\idx \Rctx_\idx \ka+\ka \many\ka \RctxB \ka+\ka \RctxB' \kvdash
        \mterm \assoc \dterm \kccol \RTy
    }
    \tagsp\ruletag[assoc-var-ur]{assoc-var-\(\textup{\many}\)}
  \\[.3em]
    \frac{
      \Rctx \kvalid
    \hspace{1.75em}
      \Rctx' \kreal
    \hspace{1.75em}
      \Rctx \ka+\ka \Rctx' \kvdash \mtermB \assoc \dtermB \kccol \RTyB
    \hspace{1.75em}
      \var \ccolul{\ka\BorPaths\kd \Rctx}{\mult} \RTyB,\ka \RctxB \kvdash
        \mterm \assoc \dterm \kccol \RTy
    }{
      \var \teq \mtermB \assoc \dtermB \ka+\ka
        \Rctx' \ka+\ka \mult\ka \Rctx \ka+\ka \RctxB \kvdash
        \mterm \assoc \dterm \kccol \RTy
    }
    \tagsp\ruletag{assoc-var}
  \\[.3em]
    \frac{
      \Rctx \kghost,\ka \gvalid
    \hspace{1.5em}
      \Rctx,\ka \Seq{\loc \dmapsto \var},\ka \RctxB \kvdash
        \var \assoc \var \kccol \Ur\kd \Ty
    }{
      \Seq{\loc \dmapsto \var},\ka \RctxB \kvdash
        \linear\kd \var \kassoc \linear\kd \var \kkcol \Ur\kd \Ty
    }
    \tagsp\ruletag{assoc-linear}
  \hspace{3em}
    \glinear \kvdash \nul \kassoc \nul \kkcol \Linearly
  \\[.4em]
    \loc \dmapsto \var,\kb \var \ccolul{\Bpaths}{\one} \Ty \kvdash
      \kRef \loc \kassoc \kRef \var \kkcol \kRef \Ty
  \hspace{2.25em}
    \gnow^{\lftid;\ka 0}_{\Hist;\ka \BorPath} \kvdash
      \nul \kassoc \nul_\Hist \kkcol \Now^{\ka\al \lftid}
  \hspace{2.25em}
    \frac{
      \lft \ka\le\ka \al \lftid
    }{
      \gend^\lftid_\Hist \kvdash
        \nul \kassoc \nul_\Hist \kkcol \End^\lft
    }
  \\[-.4em]
    \frac{
      \all{\idx}\kc \Bpaths['] \subseteq \Bpaths_\idx
    }{
      \begin{aligned} &
        \Seq{\gnow^{\lftpath;\ka 0}_{\Hist;\ka \Bpaths}}\kb,\kb
        \varBo \ccolul{\Bpaths[']}{\one} \BO^{\kb\bigwedge \Seq{\al \floor{\lftpath}}}\kd \Ty \kkvdash {}
      \\[-.5em] & \hspace{4em}
          \exeBO\kd \varBo \kassoc \exeBO_\Hist\kc \varBo
          \kkcol \Done^{\seq{\lftpath}}\kc \Ty
      \end{aligned}
    }
  \hspace{3em}
  \begin{gathered}
  \\[-.2em]
    \Seq{\gnow^{\lftpath;\ka 0}_{\Hist;\ka \Bpaths}}\kb,\kb
    \var \ccolul{\Bpaths[']}{\one}\ky \Ty \kvdash
      \Done\kd \var \kassoc \Done_\Hist\kc \var
      \kkcol \Done^{\seq{\lftpath}}\kc \Ty
  \end{gathered}
  \\[.1em]
    \frac{
      \lftB \ka\le\ka \bigwedge_\idx \lft_\idx
    \hspace{1.5em}
      \Ty \subty \TyB
    \hspace{1.5em}
      \TyB \subty \Ty
    }{
      \gmut^{\seq{\bpath}},\kb \Seq{\glifetime^{\floor{\bpath}}_\lft},\kb
      \gborrow^{\seq{\bpath}}_{\var;\ka \Ty} \kkvdash
        \var \kassoc \Mut^{\seq{\bpath}}\kc \var
        \kkcol \Mut^\lftB\kc \TyB
    }
  \hspace{3em}
    \frac{
      \lft \le \al \lftid
    \hspace{2em}
      \Rctx \kvdash \mterm \assoc \dterm \kccol \kSkel \Ty
    }{
      \gend^\lftid_\Hist \ka+\ka \Rctx \kkvdash
        \mterm \kassoc \Mut^{\seq{\bpath}}\kc \dterm
        \kkcol \Mut^\lft\kc \Ty
    }
  \\[.2em]
    \frac{
      \lftB \ka\le\ka \bigwedge_\idx \lft_\idx
    \hspace{1.5em}
      \Ty \subty \TyB
    }{
      \gshare^{\seq{\bpath}},\kb \Seq{\glifetime^{\floor{\bpath}}_\lft},\kb
      \gborrow^{\seq{\bpath}}_{\var;\ka \Ty} \kkvdash
        \var \kassoc \Share\kd \var
        \kkcol \Share^\lftB\kc \TyB
    }
  \hspace{3em}
    \frac{
      \lft \le \al \lftid
    \hspace{2em}
      \Rctx \kvdash \mterm \assoc \dterm \kccol \kSkel \Ty
    }{
      \gend^\lftid_\Hist \ka+\ka \Rctx \kkvdash
        \mterm \kassoc \Share\kd \dterm
        \kkcol \Share^\lft\kc \Ty
    }
  \\[.2em]
    \frac{
      \lft \le \lftB
    \hspace{1.5em}
      \Ty \subty \TyB
    }{
      \glend^\bid_{\var;\ka \Ty},\kb \bid,\kb
      \glifetime^\bid_\lft\ka,\kb \var \teq \mval \assoc \dval \kkvdash
        \mval \kassoc \Lend^\bid\kb \var
        \kkcol \Lend^\lftB\kc \TyB
    }
  \\[.2em]
    \frac{
      \Rctx \kvdash \var \assoc \var \kccol \Ty
    \hspace{1.5em}
      \RctxB \kvdash \mterm \assoc \dterm \kkcol \RTyB
    }{
      \gdepo^{\seq{\bpath}},\kb
      \gborrow^{\seq{\bpath}}_{\var;\ka \Ty} \ky+\ka
      \Rctx \ka+\ka \RctxB \kkvdash
        \mterm \assoc \dterm \kkcol \RTyB
    }
  \hspace{3em}
    \frac{
      \Rctx \kvdashul{\bid}{\Hist} \var \kccol \Ty
    \hspace{1.5em}
      \RctxB \kvdash \mterm \assoc \dterm \kkcol \RTyB
    }{
      \ggdepo^\bid_{\var;\ka \Ty;\ka \Hist},\kb
      \Rctx \ka+\ka \RctxB \kkvdash
        \mterm \assoc \dterm \kkcol \RTyB
    }
  \\[.2em]
    \frac{
      \Rctx \kvdashul{\seq{\bpath}}{\Hist} \mterm \assoc \dterm \kccol \RTy
    }{
      \var \teq \mterm \assoc \dterm \ka+\ka \Rctx \kvdashul{\seq{\bpath}}{\Hist}
        \var \kccol \RTy
    }
  \hspace{3em}
    \frac{
      \Rctx \kkvdashul{\seq{\bpath}}{\Hist} \dval \kleadsto \dvalB
    }{
      \ghole^{\seq{\bpath}},\kb
      \gborrow^{\seq{\bpath}}_{\var;\ka \Ty;\ka \seq{\bpathB}},\kb
      \var \teq \mval \assoc \dvalB,\kb \Rctx \kvdashul{\seq{\bpath}}{\Hist}
        \mval \assoc \dval \kccol \Ty
    }
  \hspace{3em}
    \frac{
      \Rctx \kvdash \mterm \assoc \dterm \kccol \RTy
    }{
      \Rctx \kvdashul{\seq{\bpath}}{\emphist}
        \mterm \assoc \dterm \kccol \RTy
    }
  \\[.2em]
    \frac{
      \Rctx \kvdashul{\Seq{\bpath.0}}{\Hist} \varB \kccol \Ty
    }{
      \loc \dmapsto \varB,\kb
      \loc \bormapsto{\seq{\bpath}} \varB,\kb
      \Rctx \kkvdashul{\seq{\bpath}}{\Seq{\bpath \hstore \varB},\ka \Hist}
        \kRef \loc \kassoc \kRef \var \kkcol \kRef \Ty
    }
  \hspace{3em}
    \frac{
      \all{\idx}\kc \bpath_\idx \ka\notin\ka \dom \Hist
    \hspace{1.5em}
      \Hist \ne \emphist
    \hspace{1.5em}
      \Rctx \kvdashul{\Seq{\bpath.0}}{\Hist} \var \kccol \Ty
    }{
      \loc \dmapsto \var,\kb
      \loc \bormapsto{\seq{\bpath}} \var,\kb
      \Rctx \kkvdashul{\seq{\bpath}}{\Hist}
        \kRef \loc \kassoc \kRef \var \kkcol \kRef \Ty
    }
  \end{gather*}
  \caption{Selected association rules.}
  \label{fig:assoc-rules}
\end{figure}

Selected rules for the association judgment are shown in \cref{fig:assoc-rules}.\footnote{
  We define \(\floor{\lftpath}\) by \(\floor{\lftid\Seq{.\idx}} \ka\defeq\ka \lftid\) and
  \(\floor{\bpath}\) by \(\floor{\bid\Seq{.\idx}} \ka\defeq\ka \bid\).
}
\Cref{appx:assoc-details} presents the omitted rules and formal definitions.

The association relation over mutative and denotational configurations is \(\MCfg \assoc_\Ty \DCfg\).
Its only rule is \ref{rule:config-assoc}, which puts into the resource context the variable-term bindings, the borrow ids of the environments, and the points-to tokens of the memory, discarding some old variables.

The association judgment has the form \(\Rctx \kvdash \mterm \assoc \dterm \kccol \RTy\).
The association judgment is closed under resource context inclusion (\ref{rule:assoc-incl}), defined naturally similarly to the typing context inclusion \cref{fig:tctx-incl}.
For polymorphic types, we use polymorphic resource contexts (\ref{rule:assoc-all}).
For variable introduction, we use the condition \(\Rctx \kreal\), meaning that every item in \(\Rctx\) is of the form \(\var \teq \mterm \assoc \dterm\) or \(\loc \mMapsto \var\).

Unlike the unrestricted case (\ref{rule:assoc-var-ur}), introduction of a possibly linear variable (\ref{rule:assoc-var}) is tricky, because the set of accessible borrow paths should be determined from the resource context \(\Rctx\), where we define \(\BorPaths\kd \Rctx\) as the set of the subpaths of the borrow path of each mutable borrower owned by \(\Rctx\).
For this purpose, we introduce the \emph{validity} side condition \(\Rctx \kvalid\).
For example, it ensures that \(\Rctx\) is free of a linear variable-type binding \(\var \ccolul{\Bpaths}{\one} \Ty\) and of overlapping live lifetime tokens.
Also, to handle \emph{nested borrowing} properly (which is much trickier than it might look), the validity ensures that \(\Rctx\) owns all the mutably borrowed objects accessible by \(\Rctx\).

The key is the rule \ref{rule:assoc-linear} for \(\linear\), an intermediate state of \(\linearly\).
Here, we can capture some fragment \(\Mem\) of the memory and an arbitrary ghost resource context \(\Rctx\),\footnote{
  The condition \(\Rctx \kghost\) means that \(\Rctx\) consists only of ghost resources \(\gho\) after unfolding polymorphism.
}
under the \emph{global validity} side condition \(\Rctx \kgvalid\), stronger than the usual validity \(\valid\).
For example, the global validity ensures that \(\Rctx\) contains a borrowed object for each lender in \(\Rctx\).
The association rule for \(\linear\) itself (shown next) is straightforward.

A linearity token \(\Linearly\) captures \(\glinear\),
a reference \(\kRef \Ty\) captures a points-to token \(\loc \mMapsto \var\) and a body,
and live and dead lifetime tokens \(\Now^\lft\), \(\End^\lft\) capture \(\gnow\) and \(\gend\) of the observed history \(\Hist\).
The type \(\Done^{\seq{\rho}}\kc \Ty\) also captures \(\gnow\) of the lifetime paths \(\seq{\rho}\).
A mutable borrower is usually obtained from the token \(\gmut^{\seq{\bpath}}\) accompanied with some shared information.
A shared borrower is similarly obtained using \(\gshare^{\seq{\bpath}}\).
Also, borrowers whose lifetime has ended can be obtained through a skeleton \(\kSkel \Ty\).
A lender is obtained simply from the token \(\glend^\bid_{\var;\ka \Ty}\).

The content of a deposit \(\gdepo^{\seq{\bpath}}\), \(\ggdepo^\bid_{\var;\ka \Ty;\ka \Hist}\) is owned by some variable.
The tricky point is which variable should own each deposit.
We carefully design the validity condition (\ref{rule:assoc-var}) so that the borrower or the lender with a sufficient lifetime token should own the deposit, so that nested borrowers work properly in the presence of \(\parBO\).
For deposits, we introduce an auxiliary judgment \(\Rctx \kvdashul{\seq{\bpath}}{\Hist} \mterm \assoc \dterm \kccol \RTy\), asserting that the object \(\mterm \assoc \dterm\) deposited at \(\seq{\bpath}\) becomes an object of type \(\RTy\) after the updates of \(\Hist\).
Its rules are defined using an auxiliary judgment \(\Rctx \kvdashul{\seq{\bpath}}{\Hist} \dterm \leadsto \dtermB\), meaning that a term \(\dterm\) at \(\seq{\bpath}\) turns into \(\dtermB\) after the updates of \(\Hist\).

\subsection{Details}
\label{appx:assoc-details}

Here we present more details of the association system.

\paragraph{Basics}

The subtyping over extended types \(\RTy \subty \RTy'\) extends the usual subtyping \(\Ty \subty \Ty\) by the following rules:
{\small\[
  \frac{
    \Ty \subty \TyB
  }{
    \Done^\lft\kc \Ty \ka\subty\ka \Done^\lft\kc \TyB
  }
\hspace{3em}
  \frac{
    \Ty \subty \TyB
  }{
    \kSkel \Ty \ka\subty\ka \kSkel \TyB
  }
\]}%

We say an atomic resource \(\atom\) is shareable and write \(\atom \kshareable\) if \(\atom\) is of the form \(\var \ccolul{\Bpaths}{\many} \RTy\),
\(\var \teq \mval \assoc \dval\), \(\var \teq \dval\), \(\bid\),
\(\gend^\lftid_\Hist\), \(\glifetime^\bid_\lft\), \(\gborrow^{\seq{\bpath}}_{\var;\ka \Ty;\ka \seq{\bpathB}}\),
\(\gshare^{\seq{\bpath}}\), \(\loc \borMapsto{\seq{\bpath}} \var\), or \(\gskel^\var_\Ty\).

The resource context inclusion \(\Rctx \subty \RctxB\) is inductively defined by the following rules:
{\small\begin{gather*}
  \Rctx \subty \Rctx
\hspace{3em}
  \frac{
    \Rctx \subty \Rctx'
  \hspace{1.5em}
    \Rctx' \subty \Rctx''
  }{
    \Rctx \subty \Rctx''
  }
\hspace{3em}
  \frac{
    \multB \le \mult
  \hspace{1.5em}
    \Ty \subty \TyB
  }{
    \var \ccolul{\Bpaths}{\mult} \Ty,\kb \Rctx \kb\subty\kb
      \var \ccolul{\Bpaths}{\multB} \TyB,\kb \Rctx
  }
\\[.3em]
  \frac{
    \atom \kshareable
  }{
    \atom,\ka \Rctx \kb\subty\kb \Rctx
  }
\hspace{3em}
  \paren{\all{\anyvar} \Rctx},\kb \RctxB \kc\subty\kc
    \Rctx \sqbr{\any / \anyvar},\kb \RctxB
\hspace{3em}
  \frac{
    \text{\(\any\) is fresh in \(\RctxB\) and \(\RctxB'\)}
  \hspace{2em}
    \RctxB' \ka\subty\kb \Rctx,\ka \RctxB
  }{
    \RctxB' \ka\subty\kb
      \paren{\all{\anyvar} \Rctx},\ka \RctxB
  }
\end{gather*}}%

The multiplication \(\mult\ka \Rctx\) over a resource context is defined as follows:
{\small\begin{gather*}
  \one\ka \Rctx \ke\defeq\ke \Rctx
\hspace{3em}
  \mult\ka \paren{\var \ccoll{\multB} \Ty,\kb \Rctx} \ke\defeq\ke
    \var \ccoll{\mult \cdot \multB} \Ty,\kb \mult\ka \Rctx
\\[.3em]
  \mult\ka \paren{\paren{\all{\anyvar} \Rctx},\kb \RctxB} \ke\defeq\ke
    \paren{\all{\anyvar} \mult\ka \Rctx},\kb \mult\ka \RctxB
\hspace{3em}
  \frac{
    \atom \kshareable
  }{
    \mult\ka \paren{\atom,\ka \Rctx} \ke\defeq\ke
      \atom,\kb \mult\ka \Rctx
  }
\hspace{3em}
  \mult\ka \empseq \ke\defeq\ke \empseq
\end{gather*}}%

The addition \(\Rctx + \RctxB\) of resource contexts is defined as follows:
{\small\[
  \frac{
    \atom \kshareable
  }{
    \paren{\atom, \Rctx} \ka+\ka \paren{\atom, \RctxB} \ke\defeq\ke
      \atom,\ka \paren{\Rctx + \RctxB}
  }
\hspace{3em}
  \frac{
    \neg\kc \atom \kshareable
  }{
    \paren{\atom, \Rctx} \ka+\ka \RctxB \ke\defeq\ke
      \atom,\ka \paren{\Rctx + \RctxB}
  }
\hspace{3em}
  \empseq + \RctxB \ke\defeq\ke \RctxB
\]}%

The predicate \(\Rctx \kghost\) is inductively defined by the following rules:
{\small\[
  \Seq{\gho} \kghost
\hspace{3em}
  \frac{
    \Rctx,\ka \RctxB \kghost
  }{
    \paren{\all{\anyvar} \Rctx},\ka \RctxB \kghost
  }
\]}%

\paragraph{On histories}

The set of borrow paths \(\BorPaths\kd \Rctx\) for a resource context \(\Rctx\) is inductively defined by the following rules:
{\small\[
  \frac{
    \gmut^{\seq{\bpath}}\kc \Ty \ka\in\ka \Rctx
  }{
    \bpath_\idx\Seq{.\idxB} \ka\in\ka \BorPaths\kd \Rctx
  }
\hspace{3em}
  \frac{
    \bpath \ka\in\ka \BorPaths\kc \paren{\Rctx,\ka \RctxB}
  }{
    \bpath \ka\in\ka \BorPaths\kc \paren{\paren{\all{\anyvar} \Rctx},\ka \RctxB}
  }
\]}%

The predicate \(\Now^\lftpath\kc \Rctx \leadsto \Hist\), extracting the history \(\Hist\) for the lifetime path \(\lftpath\) from the resource context \(\Rctx\) is inductively defined by the following rules:
{\small\[
  \frac{
    \all{\idx < \num}\kd
    \Now^{\lftpath.\idx}\kc \Rctx_\idx \leadsto \Histpr_\idx
  }{
    \Now^\lftpath\ka \paren{\gnow^{\lftpath;\ka \num}_{\Hist;\ka \Bpaths},\ka \Seq{\Rctx}} \leadsto
    \Hist \hseq \bighpar_\idx \Histpr_\idx
  }
\hspace{3em}
  \frac{
    \Now^\lftpath\kc \Rctx \leadsto \Hist
  \hspace{1.5em}
    \text{\(\atom\) is not of the form \(\gnow^{\floor{\lftpath}\Seq{.\idx};\kb \num}_{\Histpr;\ka \Bpaths}\)}
  }{
    \Now^\lftpath\kc \paren{\atom, \Rctx} \leadsto \Hist
  }
\]}%
Importantly, the predicate is right-unique (i.e., unique in the history \(\Hist\)).
We also use the shorthand \(\Now^\lftpath\kc \Rctx\) for \(\ex{\Hist} \Now^\lftpath\kc \Rctx\).

The restriction \(\Hist \rvert_\bpath\) of a history \(\Hist\) with respect to a borrow path \(\bpath\) is defined as follows:
{\small\[
  \paren{\bpath\Seq{.\idx} \hstore \var,\ka \Hist} \ka\rvert_\bpath \ke\defeq\ke
    \bpath\Seq{.\idx} \hstore \var,\ka \paren{\Hist \rvert_\bpath}
\hspace{3em}
  \frac{
    \text{\(\bpathB\) is not of the form \(\bpath\Seq{.\idx}\)}
  }{
    \paren{\bpathB \hstore \var,\ka \Hist} \ka\rvert_\bpath \ke\defeq\ke
      \Hist \rvert_\bpath
  }
\hspace{3em}
  \emphist \rvert_\bpath \ke\defeq\ke \emphist
\]}

\paragraph{Validity}

The validity \(\Rctx \kvalid\) of a resource context is defined as the negation of the invalidity \(\Rctx \kinvalid\), inductively defined by the following rules:
{\small\begin{gather*}
  \frac{
    \text{\(\anyvar\) is fresh in \(\RctxB\)}
  \hspace{1.75em}
    \Rctx,\ka \RctxB \kinvalid
  }{
    \paren{\all{\anyvar} \Rctx},\ka \RctxB \kinvalid
  }
\hspace{3em}
  \var \ccolul{\Bpaths}{\one} \Ty,\ka \Rctx \kinvalid
\\[.3em]
  \frac{
    \dom \Hist \not\subseteq \Bpaths
  }{
    \gnow^{\lftpath;\ka \num}_{\Hist;\ka \Bpaths}\kb,\kb \Rctx \kinvalid
  }
\hspace{3em}
  \frac{
    \Bpaths['] \not\subseteq \Bpaths
  }{
    \gnow^{\lftpath;\ka \num}_{\Hist;\ka \Bpaths}\kb,\kb
    \gnow^{\lftpath\Seq{.\idx};\kb \numB}_{\Histpr;\ka \Bpaths[']}\kb,\kb
    \Rctx \kinvalid
  }
\hspace{3em}
  \frac{
    \idx \ne \idx'
  \hspace{1.5em}
    \Bpaths \ka\cap\ka \Bpaths['] \kb\ne\kb \empset
  }{
    \gnow^{\lftpath.\idx\Seq{.\idxB};\kb \num}_{\Hist;\ka \Bpaths},\kb
    \gnow^{\lftpath.\idx'\Seq{.\idxB'};\kb \numB}_{\Histpr;\ka \Bpaths[']},\kb
    \Rctx \kinvalid
  }
\\[.3em]
  \gnow^{\lftpath;\ka \num}_{\Hist;\ka \Bpaths}\kb,\kb
  \gend^{\floor{\lftpath}}_{\Histpr},\kb \Rctx \kinvalid
\hspace{3em}
  \gend^\lftid_\Hist,\kb \gend^\lftid_{\Histpr},\kb \Rctx \kinvalid
\\[.3em]
  \glifetime^\bid_\lft,\kb \glifetime^\bid_\lftB,\kb \Rctx \kinvalid
\hspace{3em}
  \gborrow^{\seq{\bpath}}_{\var;\ka \Ty},\kb
  \gborrow^{\seq{\bpath}}_{\varB;\ka \TyB},\kb \Rctx \kinvalid
\\[.3em]
  \gmut^{\seq{\bpath}},\kb \gmut^{\seq{\bpath}},\kb \Rctx \kinvalid
\hspace{3em}
  \gmut^{\seq{\bpath}},\kb \gshare^{\seq{\bpath}},\kb \Rctx \kinvalid
\hspace{3em}
  \glend^\bid_{\var;\ka \Ty}\kb,\kb
  \glend^\bid_{\varB;\ka \TyB}\kb,\kb \Rctx \kinvalid
\\[.3em]
  \gdepo^{\seq{\bpath}},\kb
  \gdepo^{\seq{\bpath}},\kb \Rctx \kinvalid
\hspace{3em}
  \ggdepo^\bid_{\var;\ka \Ty;\ka \Hist}\kb,\kb
  \ggdepo^\bid_{\varB;\ka \TyB;\ka \Histpr}\kb,\kb \Rctx \kinvalid
\hspace{3em}
  \ghole^{\seq{\bpath}},\kb
  \ghole^{\seq{\bpath}},\kb \Rctx \kinvalid
\\[.3em]
  \frac{
    \lft \ka\le\ka \al \lftid
  \hspace{1.5em}
    \Now^\bid\kc \Rctx \leadsto \Histpr
  \hspace{1.5em}
    \Hist \ka\ne\ka \Histpr \rvert_\bid
  }{
    \ggdepo^\bid_{\var;\ka \Ty;\ka \Hist}\kb,\kb
    \glifetime^\bid_\lft\ka,\kb \Rctx \kinvalid
  }
\hspace{3em}
  \frac{
    \lft \ka\le\ka \al \lftid
  \hspace{1.5em}
    \Hist \ka\ne\ka \Histpr \rvert_\bid
  }{
    \ggdepo^\bid_{\var;\ka \Ty;\ka \Hist}\kb,\kb
    \glifetime^\bid_\lft\ka,\kb \gend^\lftid_{\Histpr}\kb,\kb \Rctx \kinvalid
  }
\\[.3em]
  \frac{
    \lft \ka\le\ka \al \lftid
  \hspace{1.5em}
    \Now^\bid\kc \Rctx \leadsto \Histpr
  \hspace{1.5em}
    \Hist \rvert_{\bpath_\idx} \ka\ne\ka \Histpr \rvert_{\bpath_\idx}
  }{
    \grelend^{\seq{\bpath}}_\Hist\kb,\kb
    \glifetime^{\floor{\bpath_\idx}}_\lft\ka,\kb \Rctx \kinvalid
  }
\hspace{3em}
  \frac{
    \lft \ka\le\ka \al \lftid
  \hspace{1.5em}
    \Hist \rvert_{\bpath_\idx} \ka\ne\ka \Histpr \rvert_{\bpath_\idx}
  }{
    \grelend^{\seq{\bpath}}_\Hist\kb,\kb
    \glifetime^{\floor{\bpath_\idx}}_\lft\ka,\kb
    \gend^\lftid_{\Histpr}\kb,\kb \Rctx \kinvalid
  }
\\[.3em]
  \frac{
    \bigwedge_\idxB \al \floor{\lftpath_\idxB} \kb\le\kb
      \bigwedge_\idx \lft_\idx
  \hspace{2em}
    \gdepo^{\seq{\bpath}} \kb\notin\kb \Rctx
  }{
    \gmut^{\seq{\bpath}},\kb
    \gborrow^{\seq{\bpath}}_{\var;\ka \Ty},\kb
    \Seq{\glifetime^{\floor{\bpath}}_\lft},\kb
    \Seq{\gnow^{\lftpath;\ka \num}_{\Hist;\ka \Bpaths}}\kb,\kb \Rctx \kinvalid
  }
\\[.3em]
  \frac{
    \lft \ka\le\ka \al \lftid
  \hspace{2em}
    \all{\Hist}\kb \ggdepo^\bid_{\var;\ka \Ty;\ka \Hist} \kb\notin\kb \Rctx
  }{
    \glend^\bid_{\var;\ka \Ty},\kb \glifetime^\bid_\lft,\kb
    \gend^\lftid_{\Histpr}\kb,\kb \Rctx \kinvalid
  }
\hspace{3em}
  \frac{
    \lft \ka\le\ka \al \lftid
  \hspace{2em}
    \Now^\bid\kc \Rctx
  \hspace{2em}
    \all{\Hist}\kb \ggdepo^\bid_{\var;\ka \Ty;\ka \Hist} \kb\notin\kb \Rctx
  }{
    \glend^\bid_{\var;\ka \Ty},\kb \glifetime^\bid_\lft\kb,\kb \Rctx \kinvalid
  }
\\[.3em]
  \frac{
    \lft \ka\le\ka \al \lftid
  \hspace{1.5em}
    \floor{\bpath} \ka=\ka \bid
  \hspace{1.5em}
    \gdepo^{\seq{\bpathB}, \bpath} \kb\notin\kb \Rctx
  }{
    \glend^\bid_{\var;\ka \Ty},\kb \glifetime^\bid_\lft,\kb
    \gend^\lftid_{\Histpr}\kb,\kb
    \gborrow^{\seq{\bpathB}, \bpath}_{\var;\ka \Ty;\ka \seq{\bpathB'}}\kb,\kb
    \Rctx \kinvalid
  }
\hspace{3em}
  \frac{
    \lft \ka\le\ka \al \lftid
  \hspace{1.5em}
    \Now^\bid\kc \Rctx
  \hspace{1.5em}
    \floor{\bpath} \ka=\ka \bid
  \hspace{1.5em}
    \gdepo^{\seq{\bpathB}, \bpath} \kb\notin\kb \Rctx
  }{
    \glend^\bid_{\var;\ka \Ty},\kb \glifetime^\bid_\lft\kb,\kb
    \gborrow^{\seq{\bpathB}, \bpath}_{\var;\ka \Ty;\ka \seq{\bpathB'}}\kb,\kb
    \Rctx \kinvalid
  }
\end{gather*}}%

The global validity \(\Rctx \kgvalid\) of a resource context is defined as the negation of the global invalidity \(\Rctx \kginvalid\), inductively defined by the following rules:
{\small\begin{gather*}
  \frac{
    \text{\(\anyvar\) is fresh in \(\RctxB\)}
  \hspace{1.5em}
    \Rctx,\ka \RctxB \kginvalid
  }{
    \paren{\all{\anyvar} \Rctx},\ka \RctxB \kginvalid
  }
\hspace{3em}
  \frac{
    \Rctx \kinvalid
  }{
    \Rctx \kginvalid
  }
\hspace{3em}
  \frac{
    \gnow^{\lftpath;\ka \num}_{\Histpr;\ka \Bpaths} \kb\in\kb \Rctx
  \hspace{1.5em}
    \neg\kd \Now^{\floor{\lftpath}}\kc \Rctx
  }{
    \Rctx \kginvalid
  }
\\[.3em]
  \frac{
    \gdepo^{\seq{\bpath}} \kb\notin\kb \Rctx
  }{
    \gborrow^{\seq{\bpath}}_{\var;\ka \Ty;\ka \seq{\bpathB}}\kb,\kb \Rctx \kginvalid
  }
\hspace{3em}
  \frac{
    \ghole^{\seq{\bpath}} \kb\notin\kb \Rctx
  }{
    \gdepo^{\seq{\bpath}},\kb \Rctx \kginvalid
  }
\hspace{3em}
  \frac{
    \loc \bormapsto{\seq{\bpath}} \var \kb\notin\kb \Rctx
  }{
    \loc \borMapsto{\seq{\bpath}} \var,\kb \Rctx \kginvalid
  }
\\[.3em]
  \frac{
    \all{\Hist}\kb \ggdepo^\bid_{\var;\ka \Ty;\ka \Hist} \kb\notin\kb \Rctx
  }{
    \glend^\bid_{\var;\ka \Ty}\kb,\kb \Rctx \kginvalid
  }
\hspace{3em}
  \frac{
    \glend^\bid_{\var;\ka \Ty} \kb\notin\kb \Rctx
  }{
    \ggdepo^\bid_{\var;\ka \Ty;\ka \Hist}\kb,\kb \Rctx \kginvalid
  }
\hspace{3em}
  \frac{
    \all{\var, \Ty}\kd
    \glend^{\floor{\bpath_\idx}}_{\var;\ka \Ty} \kb\notin\kb \Rctx
  }{
    \gdepo^{\seq{\bpath}},\kb \Rctx \kginvalid
  }
\hspace{3em}
  \frac{
    \gvar^\var_\Ty \notin \Rctx
  }{
    \gskel^\var_\Ty,\kb \Rctx \kginvalid
  }
\end{gather*}}%

\paragraph{Association rules}

The omitted rules for the association judgment \(\Rctx \kvdash \mterm \assoc \dterm \kccol \RTy\) are as follows.
Here, we introduce an auxiliary judgment \(\any \assoc \ccol \seq{\RTy};\ka \RTyB\) for operators and monad constructors \(\any\).
Also, we write \(\Rctx \freshadd \RctxB\) for \(\Rctx, \RctxB\) with the following side condition:
for every item of the form \(\var \ccoll{\mult} \RTy\) in \(\Rctx\), \(\var\) is fresh in \(\RctxB\).
{\small\begin{gather*}
  \var \ccoll{\mult} \RTy \kvdash \var \kassoc \var \kkcol \RTy
\\[.3em]
  \frac{
    \all{\idx}\ke \Rctx_\idx \kvdash \term_\idx \assoc \term_\idx \kccol \Ty_\idx
  \hspace{1.5em}
    \Seq{\varB \ccoll{\one} \Ty_\idx} \freshadd \RctxB \kvdash \termB \assoc \termB \kccol \TyB
  }{
    \sum_\idx \Rctx_\idx \ka+\ka \RctxB \kvdash
      \klet \Seq{\varB \teq \term} \ktin \termB
      \kassoc \klet \Seq{\varB \teq \term} \ktin \termB
      \kkcol \TyB
  }
\hspace{2em}
  \frac{
    \all{\idx}\ke \Seq{\varB \ccoll{\many} \Ty_\idx},\kb \Rctx \kvdash
      \term_\idx \assoc \term_\idx \kccol \Ty_\idx
  \hspace{1.5em}
    \Seq{\varB \ccoll{\many} \Ty_\idx} \freshadd \RctxB \kvdash
      \termB \assoc \termB \kccol \TyB
  }{
    \many\ka \Rctx \ka+\ka \RctxB \kvdash
      \klet \Seq{\varB \teq \term} \ktin \termB
      \kassoc \klet \Seq{\varB \teq \term} \ktin \termB
      \kkcol \TyB
  }
\\[.3em]
  \frac{
    \var \ccoll{\mult} \Ty,\ka \Rctx \kvdash
      \term \kassoc \term \kkcol \TyB
  }{
    \Rctx \kvdash
      \tlamx{\var} \term \kassoc \tlamx{\var} \term \kkcol \Ty
      \to_\mult \TyB
  }
\hspace{2em}
  \frac{
    \Rctx \kvdash \term \assoc \term \kccol \Ty \to_\mult \TyB
  \hspace{1em}
    \RctxB \kvdash \termB \assoc \termB \kccol \Ty
  }{
    \Rctx \ka+\ka \mult\ka \RctxB \kvdash
      \term\kd \termB \kassoc \term\kd \termB \kkcol \TyB
  }
\hspace{2em}
  \frac{
    \var \ka\in\ka \dom \Rctx
  \hspace{1em}
    \Rctx \vdash \term \kassoc \term \ccol \Ty
  }{
    \Rctx \kvdash \kseq \var\kd \term \kassoc \kseq \var\kd \term \kccol \Ty
  }
\\[.3em]
  \vdashk \num \ccol \Int
\hspace{3em}
  \frac{
    \all{\idx}\ke \Rctx_\idx \kvdash \term_\idx \assoc \term_\idx
      \kccol \Ty_{\idxB, \idx} \sqbr{\Seq{\TyB \ky/ \Tvar}}
  \hspace{2em}
    \datatemplate
  }{
    \sum_\idx \mult_{\idxB, \idx}\ka \Rctx_\idx \kvdash
      \Con_\idxB\kc \seq{\term} \kassoc \Con_\idxB\kc \seq{\term} \kkcol \Tcon\kd \seq{\TyB}
  }
\\[.3em]
  \frac{
    \Rctx \vdash \term \assoc \term \ccol \Tcon\kd \seq{\TyB}
  \hspace{2em}
    \all{\idxB}\ke
      \Seq{\var \ccoll{\mult_{\idxB}}\kz \Ty_\idxB
        \sqbr{\vphantom{\hat A}\smash{\Seq{\TyB \ky/ \Tvar}}}} \freshadd
      \RctxB \kvdash \termB_\idxB \kccol \TyB'
  \hspace{2em}
    \datatemplate
  }{
    \Rctx \ka+\ka \RctxB \kvdash
      \kcase \term \kof \kcurly{ \Seq{\Con\kd \seq{\var} \tto \termB} }
      \kassoc \kcase \term \kof \kcurly{ \Seq{\Con\kd \seq{\var} \tto \termB} }
      \kccol \TyB'
  }
\\[.3em]
  \frac{
    \Rctx \vdash \term \assoc \term \ccol \Bcon^\lft\kb \paren{\Tcon\kd \seq{\TyB}}
  \hspace{2em}
    \all{\idxB}\ke
      \Seq{\var \ccoll{\mult_{\idxB}}\kz \Bcon^\lft\ka \paren{\Ty_\idxB
        \sqbr{\vphantom{\hat A}\smash{\Seq{\TyB \ky/ \Tvar}}}}} \freshadd
      \RctxB \kvdash \termB_\idxB \kccol \TyB'
  \hspace{2em}
    \datatemplate
  }{
    \Rctx \ka+\ka \RctxB \kvdash
      \kcase \term \kof \kcurly{ \Seq{\Con\kd \seq{\var} \tto \termB} }
      \kassoc \kcase \term \kof \kcurly{ \Seq{\Con\kd \seq{\var} \tto \termB} }
      \kccol \TyB'
  }
\\[.3em]
  \frac{
    \any \assoc \kccol \seq{\Ty};\ka \TyB
  \hspace{1.5em}
    \all{\idx}\kc \Rctx_\idx \vdash \term_\idx \assoc \term_\idx \ccol \Ty_\idx
  }{
    \sum_\idx \Rctx_\idx \kvdash \any\kd \seq{\term} \assoc \any\kd \seq{\term} \kccol \TyB
  }
\hspace{2em}
  \iop \assoc \kccol \Seq{\Int};\kd \Int
\hspace{2em}
  \irel \assoc \kccol \Seq{\Int};\kd \Bool
\hspace{2em}
  \tpar \assoc \kccol \Ty,\kb \TyB;\kd \paren{\Ty, \TyB}
\\[.3em]
  \frac{
    \text{\(\Ty\) is \(\Linearly\), \(\Now^\lft\) or \(\Mut^\lft\kc \TyB\)}
  }{
    \consume \assoc \kccol \Ty;\kd \paren{}
  }
\hspace{2.5em}
  \frac{
    \text{\(\Ty\) is \(\Int\), \(\End^\lft\) or \(\Share^\lft\kc \TyB\)}
  }{
    \move \assoc \kccol \Ty;\kd \Ur\kd \Ty
  }
\hspace{2.5em}
  \linearly \assoc \kccol \Linearly \lto \Ur\kd \Ty;\kd \Ur\kd \Ty
\\[.3em]
  \frac{
    \text{\(\Ty\) is \(\Linearly\), \(\kRef \TyB\), \(\Now^\lft\) or \(\Mut^\lft\kc \TyB\)}
  }{
    \withLinearly \assoc \kccol \Ty;\kd \paren{\Linearly,\ka \Ty}
  }
\hspace{3em}
  \newRef \assoc \kccol \Ty;\kd \kRef \Ty
\hspace{3em}
  \freeRef \assoc \kccol \kRef \Ty;\kd \Ty
\\[.3em]
  \newLifetime \assoc \kccol \paren{\all{\lftid}\ka \Now^{\ka\al\kd \lftid} \ky\lto\ky \Ty};\kd \Ty
\hspace{3em}
  \frac{
    \lft \ka\le\ka \al\kd \lftid
  }{
    \tendLifetime \assoc \kccol \Now^{\ka\al\kd \lftid};\kd
      \End^\lft
  }
\\[.3em]
  \frac{
    \lft \le \lftB
  \hspace{1.5em}
    \Ty \subty \TyB
  }{
    \borrow \assoc \kccol \Linearly,\kb \Ty;\kd
      \paren{\Mut^\lft\kc \Ty,\kb \Lend^\lftB\kc \TyB}
  }
\hspace{3em}
  \frac{
    \Ty \subty \TyB
  }{
    \share \assoc \kccol \Mut^\lft\kc \Ty;\kd \Ur\kd \paren{\Share^\lft\kc \TyB}
  }
\\[.3em]
  \frac{
    \text{\(\Ty\) is \(\Int\), \(\End^\lftB\) or \(\Share^\lftB\kc \TyB\)}
  \hspace{1.5em}
    \Ty \subty \Ty'
  }{
    \tcopy \assoc \kccol
      \Mut^\lft\kb \Ty;\kd \Ty'
  }
\hspace{3em}
  \frac{
    \text{\(\Ty\) is \(\Int\), \(\End^\lftB\) or \(\Share^\lftB\kc \TyB\)}
  }{
    \tcopy \assoc \kccol
      \Share^\lft\kb \Ty;\kd \Ty
  }
\\[.3em]
  \frac{
    \lft' \le \lft \lftand \lftB
  }{
    \joinMut \assoc \kccol
      \Mut^\lft\kb \paren{\Mut^\lftB\kc \Ty};\kd \Mut^{\lft'}\kb \Ty
  }
\hspace{3em}
  \frac{
    \lft' \le \lft \lftand \lftB
  \hspace{1.5em}
    \Ty \subty \TyB
  }{
    \joinMut \assoc \kccol
      \Share^\lft\kb \paren{\Mut^\lftB\kc \Ty};\kd \Share^{\lft'}\kb \TyB
  }
\\[.3em]
  \reclaim \kccol \Lend^\lft\kc \Ty,\kb \End^\lft;\kd \Ty
\hspace{3em}
  \execBO \assoc \kccol \Now^\lft,\kb \BO^\lft\kc \Ty;\kd
    \paren{\Now^\lft, \Ty}
\\[.3em]
  \pure \assoc \kccol \Ty;\kd \BO^\lft\kd \Ty
\hspace{3em}
  \paren{{\bind}} \assoc \kccol
    \BO^\lft\kc \Ty,\kb \Ty \ky\lto\ky \BO^\lft\kc \TyB;\kd
    \BO^\lft\kc \TyB
\\[.3em]
  \sexecBO \assoc \kccol
    \Now^\lft,\kb \BO^{\lft \lftand \lftB}\kc \TyB;\kd
    \BO^\lftB\kc \paren{\Now^\lft,\ka \TyB}
\hspace{3em}
  \parBO \assoc \kccol \BO^\lft\kc \Ty,\kb \BO^\lft\kc \TyB;\kd
    \BO^\lft\kb \paren{\Ty, \TyB}
\\[.3em]
  \frac{
    \lftB \le \lft
  }{
    \deref \assoc \kccol \Bcon^\lft\kb \paren{\kRef \Ty};\kd
      \BO^\lftB\kb \paren{\Bcon^\lft\kc \Ty}
  }
\hspace{3em}
  \frac{
    \lftB \le \lft
  }{\begin{aligned} &
    \updateRef \assoc \kccol
      \Ty \ky\lto\ky \BO^\lftB\ka \paren{\TyB, \Ty},\kc
      \Mut^\lft\ka \paren{\kRef \Ty};
  \\[-.5em] & \hspace{14em}
      \BO^\lftB\ka \paren{\TyB,\ka \Mut^\lft\ka \paren{\kRef \Ty}}
  \end{aligned}}
\\[.3em]
  \execBO^\post \assoc \kccol
    \Done^{\lftpath}\kc \Ty;\kd \paren{\Now^{\ka\al \floor{\lft}},\ka \Ty}
\hspace{3em}
  \frac{
    \lft \ka\le\ka \al \floor{\lftpath}
  }{
    \paren{{\bind^\post}} \assoc \kccol
      \Done^{\lftpath}\kc \Ty,\kb
      \Ty \nklto \BO^{\ka\al \floor{\lftpath}}\kc \TyB;\kd
      \BO^\lft\kc \TyB
  }
\\[.3em]
  \frac{
    \all{\idx}\kc \Bpaths['] \subseteq \Bpaths_\idx
  }{
    \Seq{\gnow^{\lftpath;\ka 0}_{\Hist;\ka \Bpaths}}\kb,\kb
    \varNow \ccolul{\Bpaths['']}{\one}\kx \Now^\lft,\kb
    \varBo \ka\ccolul{\Bpaths[']}{\one}\kx
      \BO^{\ka\bigwedge \Seq{\ka \al \floor{\lftpath}} \ka\lftand\ka \lft}\kd \Ty \kkvdash
      \sexecBO^\pre\kc \varNow\kd \varBo
      \kkassoc \sexecBO^\pre_\Hist\kc \varNow\kd \varBo
      \kkkcol \Done^{\seq{\lftpath}}\kc \Ty
  }\
\\[.3em]
  \Seq{\gnow^{\lftpath;\ka 1}_{\Hist;\ka \Bpaths}}\kb,\kb
  \gnow^{\lftpath';\ka 1}_{\Histpr;\ka \Bpaths[']}\kb,\kb
  \varRes \ccolul{\Bpaths['']}{\one}\kx \Done^{\Seq{\lftpath.0},\ka \lftpath'\ky.0}\kb \Ty \kkvdash
    \sexecBO^\post\kc \varRes
    \kkassoc \sexecBO^\post_{\Hist, \Histpr}\kc \varRes
    \kkkcol \Done^{\seq{\lftpath}}\kc \Ty
\\[.3em]
  \frac{
    \all{\idx}\kc \Bpaths['], \Bpaths[''] \kb\subseteq\kb \Bpaths_\idx
  }{\begin{aligned} &
    \Seq{\gnow^{\lftpath;\ka 2}_{\Hist;\ka \Bpaths}}\kb,\kb
    \varRes_0 \ccolul{\Bpaths[']}{\one}\kx \Done^{\Seq{\lftpath.0},\ka \lftpath'\ky.0}\kc \Ty,\kb
    \varRes_1 \ccolul{\Bpaths['']}{\one}\kx \Done^{\Seq{\lftpath.1},\ka \lftpath'\ky.1}\kc \TyB \kkvdash {}
  \\[-.3em] & \hspace{7em}
      \parBO^\post\kc \varRes_0\kd \varRes_1
      \kkassoc \parBO^\post_\Hist\kc \varRes_0\kd \varRes_1
      \kkkcol \Done^{\seq{\lftpath}}\kc \paren{\Ty, \TyB}
  \end{aligned}}
\\[.3em]
  \frac{
    \bigwedge \Seq{\al \floor{\lftpath}} \kb\le\kb \lft
  }{
    \Seq{\gnow^{\lftpath;\ka 0}_{\Hist;\ka \Bpaths}}\kb,\kb
    \varRef \ccolul{\Bpaths[']}{\one}\ky \Bcon^\lft\kb \paren{\kRef \Ty} \kkvdash
      \deref^\post\kc \varRef
      \kkassoc \deref^\post_\Hist\kc \varRef
      \kkkcol \Done^{\seq{\lftpath}}\kb \paren{\Bcon^\lft\kc \Ty}
  }
\\[.3em]
  \frac{
    \bigwedge \Seq{\al \floor{\lftpath}} \kb\le\kb \lft
  }{\begin{aligned} &
    \Seq{\gnow^{\lftpath;\ka 0}_{\Hist;\ka \Bpaths}}\kb,\kb
    \varKo \ka\ccolul{\Bpaths['']}{\one} \Ty \lto
      \BO^{\ka \Seq{\al \floor{\lftpath}}}\kc \paren{\TyB, \Ty},\kb
    \varRef \ccolul{\Bpaths[']}{\one}\kx \Mut^\lft\kb \paren{\kRef \Ty} \kkvdash {}
  \\[-.4em] & \hspace{7em}
      \updateRef^\pre\kc \varKo\kd \varRef
      \kkassoc \updateRef^\pre_\Hist\kc \varKo\kd \varRef
      \kkkcol \Done^{\seq{\lftpath}}\kb \paren{\TyB,\ka \Mut^\lft\kb \paren{\kRef \Ty}}
  \end{aligned}}
\\[.3em]
  \frac{
    \all{\idx}\ka \paren[\big]{\kb
      \bigwedge \Seq{\al \floor{\lftpath}} \kb\le\kb \lft_\idx
    \kb}
  \hspace{2em}
    \all{\idx}\ka \paren[\big]{\kb \lftB \le \lft_\idx \kb}
  \hspace{2em}
    \all{\idx}\ka \paren[\big]{\kb \Ty'_\idx \subty \kRef \Ty \kb}
  \hspace{2em}
    \all{\idx}\ka \paren[\big]{\kb \kRef \Ty \subty \Ty'_\idx \kb}
  }{\begin{aligned} & \textstyle
    \var \teq \kRef \loc \assoc \kRef \varB,\kb
    \loc \mMapsto \varB,\kb
    \gmut^{\seq{\bpath}},\kb \gdepo^{\seq{\bpath}},\kb
    \Seq{\glifetime^{\floor{\bpath}}_\lft},\kb
    \gborrow^{\seq{\bpath}}_{\var;\ka \Ty}\kb,\kb
    \varRes \ccolul{\Bpaths}{\one} \Done^{\seq{\lftpath}}\kc \paren{\TyB, \Ty} \kkvdash {}
  \\[-.4em] & \hspace{15em}
      \updateRef^\prepost_\loc\kc \varRes
      \kkassoc \updateRef^\prepost_{\seq{\bpath}}\kc \varRes
      \kkkcol \Done^{\seq{\lftpath}}\kb \paren{\TyB,\ka \Mut^\lftB\kb \paren{\kRef \Ty}}
  \end{aligned}}
\\[.3em]
  \frac{
    \all{\idx}\ka \paren[\big]{\kb
      \bigwedge \Seq{\al \floor{\lftpath}} \kb\le\kb \lft_\idx
    \kb}
  \hspace{2em}
    \all{\idx}\ka \paren[\big]{\kb \lftB \le \lft_\idx \kb}
  \hspace{2em}
    \all{\idx}\ka \paren[\big]{\kb \Ty'_\idx \subty \kRef \Ty \kb}
  \hspace{2em}
    \all{\idx}\ka \paren[\big]{\kb \kRef \Ty \subty \Ty'_\idx \kb}
  }{\begin{aligned} & \textstyle
    \var \teq \kRef \loc \assoc \kRef \varB,\kb
    \loc \mMapsto \varB,\kb
    \gmut^{\seq{\bpath}},\kb \gdepo^{\seq{\bpath}},\kb
    \Seq{\glifetime^{\floor{\bpath}}_\lft},\kb
    \gborrow^{\seq{\bpath}}_{\var;\ka \Ty}\kb,\kb
    \Seq{\gnow^{\lftpath;\ka 0}_{\Hist;\ka \Bpaths}}\kb,\kb
    \varRes \ccolul{\Bpaths[']}{\one}\kx \paren{\TyB, \Ty} \kkvdash {}
  \\[-.4em] & \hspace{16em}
      \updateRef^\post_\loc\kc \varPr
      \kkassoc \updateRef^\post_{\seq{\bpath}; \Hist}\kc \varPr
      \kkkcol \Done^{\seq{\lftpath}}\kb \paren{\TyB,\ka \Mut^\lftB\kb \paren{\kRef \Ty}}
  \end{aligned}}
\\[.3em]
  \frac{
    \lftB \ka\le\ka \bigwedge_\idx \lft_\idx
  \hspace{1.5em}
    \Ty \subty \TyB
  \hspace{1.5em}
    \TyB \subty \Ty
  }{\begin{aligned} &
    \gmut^{\seq{\bpath}},\kb \Seq{\glifetime^{\floor{\bpath}}_\lft},\kb
    \gborrow^{\seq{\bpath}}_{\var;\ka \Ty}\kb,\kb
    \var \teq \mval \assoc \dval \kkvdash {}
  \\[-.4em] & \hspace{10em}
      \mval \kassoc \Mut^{\seq{\bpath}}\kc \dval
      \kkcol \Mut^\lftB\kc \TyB
  \end{aligned}}
\hspace{3em}
  \frac{
    \lftB \ka\le\ka \bigwedge_\idx \lft_\idx
  \hspace{1.5em}
    \Ty \subty \TyB
  }{\begin{aligned} &
    \gshare^{\seq{\bpath}},\kb \Seq{\glifetime^{\floor{\bpath}}_\lft},\kb
    \gborrow^{\seq{\bpath}}_{\var;\ka \Ty}\kb,\kb
    \var \teq \mval \assoc \dval \kkvdash {}
  \\[-.4em] & \hspace{10.5em}
      \mval \kassoc \Share\kd \dval
      \kkcol \Share^\lftB\kc \TyB
  \end{aligned}}
\\[.3em]
  \frac{
    \all{\idx}\ke \Rctx_\idx \kkvdash
      \var_\idx \kassoc \Share\kd \var_\idx
      \kkcol \Share^\lft\kb \paren{\Ty_{\idxB, \idx} \sqbr{\Seq{\TyB \ky/ \Tvar}}}
  \hspace{2em}
    \datatemplate
  }{
    \sum_\idx \mult_{\idxB, \idx}\ka \Rctx_\idx \kkvdash
      \Con_\idxB\kc \seq{\var} \kassoc \Share\kd \Con_\idxB\kc \seq{\var} \kkcol \Share^\lft\kb \paren{\Tcon\kd \seq{\TyB}}
  }
\\[.3em]
  \frac{
    \Rctx \kkvdash \var \kassoc \Share\kd \var \kkcol \Share^\lft\kc \Ty
  }{
    \loc \borMapsto{\seq{\bpath}} \var \ka+\ka \Rctx \kkvdash
      \kRef \loc \kassoc \Share\kc \paren{\kRef \var}
      \kkcol \Share^\lft\kb \paren{\kRef \Ty}
  }
\\[.3em]
  \frac{
    \var \ccolul{\Bpaths}{\mult} \Ty,\kb
    \Rctx \kvdash \mterm \assoc \dterm \kccol \RTyB
  }{
    \gvar^\var_{\Seq{\all{\anyvar}\kw}\kc \Ty}\kb,\kb
    \var \ccolul{\Bpaths}{\mult} \Ty,\kb
    \Rctx \kvdash \mterm \assoc \dterm \kccol \RTyB
  }
\hspace{3em}
  \frac{
    \Ty \subty \TyB
  }{
    \gskel^\var_\Ty \kvdash
      \var \assoc \var \kccol \kSkel \TyB
  }
\hspace{3em}
  \frac{
    \Ty \subty \TyB
  }{
    \gskel^\var_\Ty,\kb \var \teq \mval \assoc \dval \kkvdash
      \mval \kassoc \dval \kkcol \kSkel \TyB
  }
\\[.3em]
  \var \ccolul{\Bpaths}{\many} \Ty \kvdash
    \var \assoc \var \kccol \kSkel \Ty
\hspace{3em}
  \var \ccolul{\Bpaths}{\many} \Ty,\kb
  \var \teq \mval \assoc \dval \kkvdash
    \mval \kassoc \dval \kkcol \kSkel \Ty
\\[.3em]
  \frac{
    \Rctx \kvdash \mterm \assoc \dterm \kccol \all{\anyvar} \kSkel \Ty
  }{
    \Rctx \kvdash \mterm \assoc \dterm \kccol \kSkel \paren{\all{\anyvar} \Ty}
  }
\hspace{3em}
  \frac{
    \all{\idx}\ke \Rctx_\idx \kvdash \var_\idx \assoc \var_\idx
      \kccol \Skel\kc \paren{\Ty_{\idxB, \idx} \sqbr{\Seq{\TyB \ky/ \Tvar}}}
  \hspace{2em}
    \datatemplate
  }{
    \sum_\idx \mult_{\idxB, \idx}\ka \Rctx_\idx \kvdash
      \Con_\idxB\kc \seq{\var} \kassoc \Con_\idxB\kc \seq{\var} \kkcol \Skel\kc \paren{\Tcon\kd \seq{\TyB}}
  }
\\[.3em]
  \frac{
    \text{\(\Ty\) is \(\TyB \ky\to_\mult\ky \TyB'\), \(\Linearly\),
      \(\kRef \TyB\), \(\Now^\lft\), \(\Lend^\lft\kc \TyB\), or \(\BO^\lft\kc \TyB\)}
  }{
    \kvdash \mval \assoc \dval \kccol \kSkel \Ty
  }
\\[.3em]
  \frac{
    \Rctx \kvdash \mval \assoc \dval \kccol \kSkel \Ty
  }{
    \Rctx \kkvdash \mval \kassoc \Mut^{\seq{\bpath}}\kc \dval
      \kkcol \Skel\kc \paren{\Mut^\lft\kc \Ty}
  }
\hspace{2.5em}
  \frac{
    \Rctx \kvdash \mval \assoc \dval \kccol \kSkel \Ty
  }{
    \Rctx \kkvdash \mval \assoc \Share\kd \dval
      \kkcol \Skel\kc \paren{\Share^\lft\kc \Ty}
  }
\\[.3em]
  \frac{
    \seq{\bpathB} \ne \empseq
  \hspace{1.5em}
    \Rctx \kvdashul{\seq{\bpathB}, \seq{\bpath}}{\Hist} \var \kccol \Ty
  \hspace{1.5em}
    \RctxB \kvdash \mterm \assoc \dterm \kkcol \RTyB
  }{
    \gdepo^{\seq{\bpath}},\kb
    \gborrow^{\seq{\bpath}}_{\var;\ka \Ty;\ka \seq{\bpathB}} \ky+\ka
    \grelend^{\seq{\bpathB}, \seq{\bpath}}_\Hist \ka+\ka
    \Rctx \ka+\ka \RctxB \kkvdash
      \mterm \assoc \dterm \kkcol \RTyB
  }
\\[.3em]
  \Seq{\var \teq \dval} \kkvdashul{\seq{\bpath}}{\empseq}
    \dterm \kleadsto \dterm
\hspace{3em}
  \frac{
    \Rctx \kvdashul{\seq{\bpath}}{\Hist} \dval \kleadsto \dvalB
  }{
    \var \teq \dval \ka+\ka \varB \teq \dvalB \ka+\ka \Rctx \kkvdashul{\seq{\bpath}}{\Hist}
      \var \kleadsto \varB
  }
\hspace{3em}
  \frac{
    \Rctx \kvdashul{\Seq{\bpath.0}}{\Hist} \varB \leadsto \varC
  }{
    \Rctx \kkvdashul{\seq{\bpath}}{\Seq{\bpath \hstore \varB},\ka \Hist}
      \DBctx\kc \paren{\kRef \var} \kleadsto \DBctx\kc \paren{\kRef \varC}
  }
\\[.3em]
  \frac{
    \all{\idx}\kc \bpath_\idx \ka\notin\ka \dom \Hist
  \hspace{1.5em}
    \Hist \ne \emphist
  \hspace{1.5em}
    \Rctx \kvdashul{\Seq{\bpath.0}}{\Hist} \var \leadsto \varB
  }{
    \Rctx \kkvdashul{\seq{\bpath}}{\Hist}
      \DBctx\kc \paren{\kRef \var} \kleadsto \DBctx\kc \paren{\kRef \varB}
  }
\hspace{3em}
  \frac{
    \Hist \ne \emphist
  \hspace{1.5em}
    \all{\idx}\kc \Rctx \kvdashul{\Seq{\bpath.\idx}}{\Hist} \var_\idx \leadsto \varB_\idx
  }{
    \Rctx \kvdashul{\seq{\bpath}}{\Hist}
      \DBctx\kc \paren{\Con\kd \seq{\var}} \leadsto \DBctx\kc \paren{\Con\kd \seq{\varB}}
  }
\\[.3em]
  \frac{
    \RctxB \ka\subty\ka \Rctx
  \hspace{1.5em}
    \Rctx \kvdashul{\seq{\bpath}}{\Hist}
      \mterm \assoc \dterm \kccol \RTy
  }{
    \RctxB \kvdashul{\seq{\bpath}}{\Hist}
      \mterm \assoc \dterm \kccol \RTy
  }
\hspace{3em}
  \frac{
    \text{\(\anyvar\) is fresh in \(\Rctx\)}
  \hspace{1.5em}
    \Rctx \kvdashul{\seq{\bpath}}{\Hist}
      \mterm \assoc \dterm \kccol \Ty
  }{
    \Rctx \kkvdashul{\seq{\bpath}}{\Hist}
      \mterm \kassoc \dterm \kkcol \all{\anyvar} \Ty
  }
\\[.3em]
  \frac{
    \Hist \ne \emphist
  \hspace{1.5em}
    \all{\idx}\kc \Rctx_\idx \kvdashul{\Seq{\bpath.\idx}}{\Hist_\idx}
      \var_\idx \assoc \var_\idx
      \kccol \Ty_{\idxB, \idx}\sqbr{\Seq{\TyB \ky/ \Tvar}}
  \hspace{1.5em}
    \datatemplate
  }{
    \sum_\idx \Rctx_\idx \kkvdashul{\seq{\bpath}}{\Hist}
      \Con_\idxB\kc \seq{\var} \kassoc \Con_\idxB\kc \seq{\var}
      \kkcol \Tcon\kd \seq{\TyB}
  }
\\[.3em]
  \frac{
    \lftB \ka\le\ka \bigwedge_\idx \lft_\idx
  \hspace{1.5em}
    \Ty \subty \TyB
  \hspace{1.5em}
    \TyB \subty \Ty
  \hspace{1.5em}
    \Rctx \kvdashul{\seq{\bpath}}{\Hist} \dval \leadsto \dvalB
  }{
    \gmut^{\seq{\bpathB}},\kb \Seq{\glifetime^{\floor{\bpathB}}_\lft},\kb
    \gborrow^{\seq{\bpathB}}_{\var;\ka \Ty;\ka \seq{\bpath}},\kb
    \var \teq \mval \assoc \dvalB,\kb \Rctx \kkvdashul{\seq{\bpath}}{\Hist}
      \mval \kassoc \Mut^{\seq{\bpathB}}\kc \dval
      \kkcol \Mut^\lftB\kc \TyB
  }
\\[.3em]
  \frac{
    \lft \le \al \lftid
  \hspace{2em}
    \Rctx \kvdashul{\seq{\bpath}}{\Hist}
      \mterm \assoc \dterm \kccol \kSkel \Ty
  }{
    \gend^\lftid_\Hist,\ka \Rctx \kkvdashul{\seq{\bpath}}{\Hist}
      \mterm \kassoc \Mut^{\seq{\bpathB}}\kc \dterm
      \kkcol \Mut^\lft\kc \Ty
  }
\hspace{3em}
  \frac{
    \Rctx \kvdashul{\Seq{\bpath.0}}{\Hist} \var \leadsto \varB
  }{
    \Rctx \kkvdashul{\seq{\bpath}}{\Hist}
      \kRef \loc \kassoc \kRef \var
      \kkcol \Skel\kc \paren{\kRef \Ty}
  }
\\[.3em]
  \frac{
    \Hist \ne \emphist
  \hspace{1.5em}
    \all{\idx}\kc \Rctx_\idx \kvdashul{\Seq{\bpath.\idx}}{\Hist_\idx}
      \var_\idx \kccol \Skel\kc \paren{\Ty_{\idxB, \idx}\sqbr{\Seq{\TyB \ky/ \Tvar}}}
  \hspace{1.5em}
    \datatemplate
  }{
    \sum_\idx \Rctx_\idx \kkvdashul{\seq{\bpath}}{\Hist}
      \Con_\idxB\kc \seq{\var} \kassoc \Con_\idxB\kc \seq{\var}
      \kkcol \Skel\kc \paren{\Tcon\kd \seq{\TyB}}
  }
\end{gather*}}%

\section{Metatheory}
\label{appx:metatheory}

We present the omitted details of our metatheory introduced in \cref{sect:metatheory-metatheory}.

\subsection{Strong Confluence of Denotational Operational Semantics}
\label{appx:metatheory-confl}

\begin{proof}[Proof of \cref{thm:dos-confl}]
  We say that the target variable of a (derivation of a) reduction \(\DEnv; \var \to \DEnv'; \var\) is \(\varB\) if, other than \ref{rule:dred-ktx}, the reduction applies a rule whose conclusion is of the form \(\DEnv; \varB \to \DEnv'; \varB\).

  Assume that the target variables of \(\DCfg \to \DCfg_1\) and \(\DCfg \to \DCfg_2\) are \(\var\) and \(\varB\), respectively.
  If \(\var = \varB\), then \(\DCfg_1 = \DCfg_2\) holds, which contradicts the assumption.
  So we have \(\var \ne \varB\).
  By straightforward case analysis,
  we can derive the reductions \(\DCfg_1 \to \DCfg_+\) and \(\DCfg_2 \to \DCfg_+\) for some \(\DCfg_+\),
  choosing \(\varB\) as the target variable of the former reduction and \(\var\) for the latter.
  A notable case is where one reduction executes \(\reclaim\) \ref{rule:dred-reclaim};
  the confluence holds in this case, because the structure of restoration by history \cref{fig:red-borrow-restore} does not get affected by the reduction of any variable.
\end{proof}

\subsection{Basics of the Association System}
\label{appx:metatheory-assoc}

Technically, the association judgment does not have a rule for weakening the resulting type.
Instead, it is carefully designed so that it is closed under subtyping:
\begin{lemma}[\(\assoc\) is closed under subtyping]
\label{lem:assoc-subty}
  If \(\kc\RTy \subty \RTyB\), then \(\Rctx \kvdash \mterm \assoc \dterm \kccol \RTy\) implies \(\Rctx \kvdash \mterm \assoc \dterm \kccol \RTyB\).
\end{lemma}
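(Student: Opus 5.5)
I would prove the lemma by induction on the derivation of \(\Rctx \kvdash \mterm \assoc \dterm \kccol \RTy\), with the subtyping derivation \(\RTy \subty \RTyB\) held universally quantified in the induction hypothesis. First I would normalize subtyping. Using reflexivity and transitivity, I would show that any derivation of \(\RTy \subty \RTyB\) can be rearranged into a canonical form. In that form, \(\forall\)-elimination and \(\forall\)-introduction occur only at the top, and each structural rule (function, data, \(\kRef\), \(\End\), \(\Mut\), \(\Share\), \(\Lend\), \(\BO\), \(\Done\), \(\kSkel\)) is applied once to the head constructor. For data types this rearrangement has to respect the mixed inductive/coinductive reading, so I would do it up to the coinductive closure of \ref{rule:subty-data}. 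After normalization, it suffices to handle the last association rule, for a target \(\RTyB\) whose head matches.

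The bulk of the proof is a case analysis on the last association rule.

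\begin{itemize}
\item \emph{Structural rules.} Rules that carry their own subtyping or inclusion premises are closed by composing subtyping and lifetime inclusion. These include the \(\Mut\)/\(\Share\)/\(\Lend\) rules with side conditions \(\lftB \le \bigwedge_\idx \lft_\idx\), \(\Ty \subty \TyB\), and \(\TyB\subty\Ty\), the \(\End^\lft\) rule with \(\lft \le \al\lftid\), the skeleton rules, and the operator judgments \(\borrow\assoc\), \(\share\assoc\), \(\joinMut\assoc\), and so on. For example, \(\End^\lft \subty \End^{\lftB}\) requires \(\lftB\le\lft\), and \(\lftB \le \lft \le \al\lftid\) holds by transitivity.
\item \emph{Rules with an arbitrary result type.} These are \ref{rule:assoc-incl}, \ref{rule:assoc-var}, \ref{rule:assoc-var-ur}, the deposit rules, \(\klet\), \(\kcase\), \(\kseq\), application, and variables. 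Here I would apply the induction hypothesis to the premise that produces the result type. For variables \(\var \ccoll{\mult} \RTy \kvdash \var \assoc \var \ccol \RTy\), I would first weaken the context by resource-context inclusion, since the inclusion rules allow \(\Ty\subty\TyB\) on bindings, and then use \ref{rule:assoc-incl}.
\item \emph{Quantifiers.} A \(\forall\)-introduction on the right is handled by \ref{rule:assoc-all} together with the freshness side conditions. A \(\forall\)-instantiation on the left is handled by the resource-context inclusion \(\paren{\all{\anyvar} \Rctx} \subty \Rctx\sqbr{\any/\anyvar}\). This requires a substitution lemma: association is stable under substituting types, lifetimes, and multiplicities for variables, which I would prove first by a separate routine induction.
\end{itemize}

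The main obstacle will be rules whose conclusion type is fixed by constructors applied to terms, together with the lambda and application cases. The constructor rules include \(\Con_\idxB\kc\seq{\term}\) and the \(\Share\)-distributed data rule. There, weakening \(\Tcon\kd\seq{\TyB} \subty \Tcon\kd\seq{\TyB'}\) requires weakening each field type \(\Ty_{\idxB,\idx}\sqbr{\Seq{\TyB/\Tvar}}\). This is where the coinductive \ref{rule:subty-data} interacts with the inductive association derivation. Because the field association derivations are strictly smaller, I would apply the induction hypothesis to them with the (coinductively given) field subtypings, so the induction stays on the association derivation and the coinduction is only consumed one step.

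For \(\tlamx{\var}\term\) against \(\Ty'\to_\multB\TyB'\), I would do three things:
\begin{enumerate}
\item Weaken the bound variable's binding contravariantly via context inclusion.
\item Raise the multiplicity \(\mult\le\multB\); this is again a context inclusion on a binding, valid since multiplicity inclusion runs in the right direction.
\item Apply the induction hypothesis to the body.
\end{enumerate}
Invariance of \(\Mut\) is where I would check most carefully, since there subtyping yields both directions, which are exactly the premises needed.
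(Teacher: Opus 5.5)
Your proof is correct in outline, to the same standard as the paper's one-line proof, but it organizes the induction differently.

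\paragraph{How the two routes differ.}
The paper inducts over both the subtyping and the association derivations. Reflexivity, transitivity and the \(\forall\)-rules are then just cases of the subtyping induction; transitivity uses the hypothesis twice, and no normal form is needed. You instead first head-normalize subtyping and then induct on the association derivation alone. (Read your normal form at the head only: \(\forall\)-rules necessarily persist inside component derivations, which your universally quantified hypothesis covers.)

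\paragraph{What each buys.}
Your route cleanly handles a delicate interaction that the paper's phrasing leaves implicit. Transitivity produces a \emph{new} association derivation, while the constructor cases need the hypothesis at the coinductively given field subtypings of \ref{rule:subty-data}. A plain nested induction in either order does not by itself deliver both; it needs an extra size-preservation invariant. The price on your side is twofold.
\begin{enumerate}
\item The normalization is a genuine cut-elimination argument for an impredicative, mixed inductive/coinductive relation. In the \(\forall\)-intro/\(\forall\)-elim cut the cut type is instantiated and may grow, so the measure must be on derivations. Keep reflexivity primitive so that substitution preserves that measure.
\item Your main induction must be on the \emph{size} of association derivations, or its statement must be strengthened to quantify over substitutions. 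After \ref{rule:assoc-all} you continue on a substituted derivation, which is not a subderivation, so structural induction does not apply.
\end{enumerate}

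\paragraph{A caveat shared with the paper.}
As the rules are literally written, a few fixed-shape rules do not absorb subtyping. So your step ``closed by composing subtyping and lifetime inclusion'' is not quite right. For example, \(\deref^\post\) requires \(\bigwedge \Seq{\al \floor{\lftpath}} \le \lft\) for the very \(\lft\) in its result, and \(\updateRef^\pre\) does the same. This condition does not survive weakening \(\Done^{\seq{\lftpath}} \paren{\Mut^\lft\kc \Ty}\) to \(\Done^{\seq{\lftpath}} \paren{\Mut^\lftB\kc \Ty}\) with \(\lftB \le \lft\). These cases need the minor fixes to the association system that the paper anticipates; the paper's one-line proof glosses over them just as yours does.
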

\begin{proof}
  By straightforward induction over the derivation of subtyping and association.
\end{proof}

For \cref{thm:typed-assoc}, we introduce the following lemma:
\begin{lemma}\label{lem:typed-assoc-base}
  If a term \(\term\) satisfies \(\kb\Tctx \vdash \term \ccol \Ty\), then we have \(\Tctx \vdash \term \assoc \term \ccol \Ty\).
\end{lemma}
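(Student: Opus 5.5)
I will prove the statement by induction on the derivation of \(\Tctx \vdash \term \ccol \Ty\), mapping each typing rule in \cref{fig:typing} to its diagonal counterpart among the association rules. The key observation is that for a source term \(\term\), which contains no extra values or extra operators, the association system lists a rule of the form \(\cdots \vdash \term \assoc \term \ccol \cdots\) mirroring every typing rule. These are the rules for variables, \(\klet\) (linear and unrestricted), \(\lambda\), application, \(\kseq\), integer literals, constructors, both \(\kcase\) rules (including the borrower variant \ref{rule:ty-case-bor}), and operator/monad-constructor application through the auxiliary judgment \(\any \assoc \ccol \seq{\Ty}; \TyB\). A typing context \(\Tctx\) is a particular resource context containing only bindings \(\var \ccoll{\mult} \Ty\) with empty borrow-path sets. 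For it, the context operations \(\mult\,\Tctx\) and \(\Tctx + \TctxB\) on resource contexts coincide with those of \cref{fig:tctx-ops}, since \(\var \ccoll{\many} \Ty\) is shareable and \(\var \ccoll{\one} \Ty\) is not. So in each case the induction hypotheses can be plugged directly into the matching association rule.

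First I treat the structural typing rules. Context inclusion \(\TctxB \subty \Tctx\) follows from \ref{rule:assoc-incl}, because every rule of \cref{fig:tctx-incl} is an instance of resource-context inclusion: weakening the multiplicity or type, and dropping \(\var \ccoll{\many} \Ty\) as a shareable atom. Subsumption on the result type is handled by \cref{lem:assoc-subty}. Generalization \(\Tctx \vdash \term \ccol \all{\anyvar} \Ty\) uses \ref{rule:assoc-all}. Its conclusion has context \(\all{\anyvar} \Tctx\), so I need \(\Tctx \subty \all{\anyvar} \Tctx\) when \(\anyvar\) is fresh in \(\Tctx\). This follows from the last inclusion rule, with \(\RctxB\) empty and reflexivity, and then \ref{rule:assoc-incl} finishes the case.

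Next come the \(\klet\) rules. The typing rule uses \(\Seq{\var \ccoll{\mult} \Ty}, \TctxB\), whereas the association rule uses \(\Seq{\varB \ccoll{\one} \Ty_\idx} \freshadd \RctxB\) together with a freshness side condition. The multiplicity mismatch (an arbitrary \(\mult\) in the typing rule versus \(\one\) in the association rule) I will resolve by inclusion. I expect the freshness side condition to be met by \(\alpha\)-renaming the bound variables, which the calculus assumes up to \(\alpha\)-equivalence. The same applies to the \(\kcase\) rules.

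For operators and monad constructors, I check each typing axiom \(\any \ccol \seq{\Ty}; \TyB\) against its association counterpart. Most are literally identical. Others are generalized with extra subtyping or lifetime premises, such as \(\borrow\), \(\share\), \(\joinMut\), \(\deref\) and \(\tendLifetime\); these I instantiate with reflexivity (\(\lft \le \lft\), \(\Ty \subty \Ty\)). The \(\joinMut\) typing rule is stated for a general \(\Bcon\), and I split it into the \(\Mut\) and \(\Share\) association rules. I expect the main obstacle to be these small mismatches in the operator table. For example, the typing rule for \(\consume\) omits \(\Now^\lft\) while the association rule includes it, which is harmless. There are also possible typos, such as the \(\reclaim\) association entry being written with \(\ccol\) rather than \(\assoc \ccol\). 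I would treat such entries as their evident diagonal versions. No case needs the ghost-resource machinery or validity, since source terms introduce no ghost state.
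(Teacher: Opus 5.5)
Your overall route is the paper's own (its proof is just ``straightforward induction over the typing derivation''), and most of your cases go through as you describe. One step you explicitly commit to fails, though: the non-recursive \(\klet\) case when \(\mult \ne \one\).

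The rule \ref{rule:assoc-incl} moves a derivation from the right-hand context of \(\subty\) to the left-hand one, and \(\var \ccoll{\mult} \Ty \subty \var \ccoll{\multB} \Ty\) requires \(\multB \le \mult\). So turning your induction hypothesis \(\Seq{\var \ccoll{\mult} \Ty}, \TctxB \vdash \termB \assoc \termB \ccol \TyB\) into the required premise \(\Seq{\var \ccoll{\one} \Ty} \freshadd \TctxB \vdash \termB \assoc \termB \ccol \TyB\) needs \(\mult \le \one\). This is false for every \(\mult \ne \one\), since \(\one\) is the least multiplicity. The obstruction is real, not bookkeeping. For example, \(\klet \var \teq 3 \ktin \var + \var\) is typable with \(\mult = \many\). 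Yet its body cannot be associated under the single linear binding \(\var \ccoll{\one} \Int\), because inclusion can neither raise \(\one\) to \(\many\) nor duplicate a linear binding.

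For \(\mult = \many\) you can reroute the case through the unrestricted (recursive) association \(\klet\) rule:
\begin{itemize}
\item Let \(\Rctx\) bind every variable of the \(\Tctx_\idx\) at multiplicity \(\many\). Then \(\many\,\Rctx = \Rctx\), and \(\Seq{\var \ccoll{\many} \Ty}, \Rctx \subty \Tctx_\idx\) by raising multiplicities to \(\many\) and dropping unused \(\many\)-bindings.
\item By \ref{rule:assoc-incl}, each hypothesis \(\Tctx_\idx \vdash \term_\idx \assoc \term_\idx \ccol \Ty_\idx\) becomes the needed premise.
\item The body premise is then your induction hypothesis verbatim, and the conclusion context \(\many\,\Rctx + \TctxB\) is \(\sum_\idx \many\,\Tctx_\idx + \TctxB\).
\end{itemize}

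For a multiplicity-variable product \(\mult = \prod \Seq{\multvar}\), however, neither association \(\klet\) rule applies directly. The linear one fails for the reason above. The unrestricted one produces \(\many\)-bindings for the non-unrestricted free variables of the \(\term_\idx\). Passing from there to the required conclusion context would need \(\var \ccoll{\multvar} \Ty \subty \var \ccoll{\many} \Ty\), i.e.\ \(\many \le \multvar\). That case needs a further argument (or an additional association rule), and your proposal does not provide one.
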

\begin{proof}
  By straightforward induction over the typing derivation.
\end{proof}

\begin{proof}[Proof of \cref{thm:typed-assoc}]
  Immediately from \cref{lem:typed-assoc-base}.
\end{proof}

\begin{proof}[Proof of \cref{thm:safe-leakfree}]
  Straightforward by definition.
\end{proof}

\subsection{Progress and Bisimulation}
\label{appx:metatheory-thms}

\begin{proof}[Proof sketch for \cref{thm:progress}]
  By careful case analysis.
  In particular, we maintain an invariant on histories to ensure progress for lender reclamation \ref{rule:dred-reclaim} in denotational operational semantics.
\end{proof}

\begin{proof}[Proof sketch for \cref{thm:bisim}]
  By careful case analysis.
  There are a few tricky factors.
  First, the variable that owns each local or global deposit for a borrow can drastically change, especially when a lifetime ends by \(\tendLifetime\);
  still, the variable always moves to either an ancestor or new children (possibly splitting the deposit), which is crucial for maintaining the disjointness invariant for \(\parBO\).
  Also, when copying a value by unrestrictedness or as a skeleton from a variable polymorphically typed by \ref{rule:assoc-all}, we re-type the variable to make it free of \ref{rule:assoc-all}, polymorphically re-typing the variables that the variable depends on.
\end{proof}

\fi

\end{document}